\newcommand{\dd}{\, \mathrm{d}}
\numberwithin{equation}{section}
\newcommand{\eps}{\varepsilon}
\newcommand{\tand}{\quad \text{and} \quad}
\DeclareMathOperator*{\Argmax}{arg\,max}
\newcommand{\bra}[1]{\left \langle #1 \right |}
\newcommand{\ket}[1]{\left | #1 \right \rangle}
\DeclareMathOperator{\argmax}{argmax}
\DeclareMathOperator{\ep}{\varepsilon}
\DeclareMathOperator{\Tr}{\mathrm{Tr}}
\newcommand{\mkP}{\mathfrak{P}}
\newcommand{\cE}{\mathcal{E}}
\newcommand{\cH}{\mathcal{H}}
\newcommand{\cA}{\mathcal{A}}
\newcommand{\fr}{\penalty-20\null\hfill$\blacksquare$} 
\newcommand{\rmH}{\mathrm{H}}
\newcommand{\suchthat}{\ensuremath{\ : \ }} 
\newcommand{\restr}[1]{\lower3pt\hbox{$|_{#1}$}}
\newcommand{\argmmax}{\mathop{\mathrm{argmax}}}
\newcommand{\DM}{\mathfrak{P}}
\newcommand{\prim}{\mathfrak{F}^{\eps}}
    \newcommand{\primtaun}{\mathfrak{F}^{\eps,\tau,n}}
\newcommand{\primtau}{\mathfrak{F}^{\ep,\tau}}
\newcommand{\primal}{\tF^{\eps}}
    \newcommand{\primalun}{\tF^{\eps,\tau,n}}
\newcommand{\primaltau}{\tF^{\ep,\tau}}
\newcommand{\dual}{\mathfrak{D}^{\eps}}
    \newcommand{\dualun}{\mathfrak{D}^{\eps,\tau,n}}    
\newcommand{\dualu}{\mathfrak{D}^{\eps,\tau}}
\newcommand{\dualf}{{\rm D}^\eps}
    \newcommand{\dualufn}{\text{D}^{\eps,\tau,n}}  
\newcommand{\dualfu}{{\rm D}^{\eps,\tau}}
\theoremstyle{plain}
\newtheorem{theo}{Theorem}[section]
\newtheorem{proposition}[theo]{Proposition}
\theoremstyle{definition}
\newtheorem{defin}[theo]{Definition}
\newtheorem{rem}[theo]{Remark}
\newtheorem{lem}[theo]{Lemma}
\newtheorem{coro}[theo]{Corollary}
\newcommand{\tP}{\operatorname{P}}
\newcommand{\tF}{\operatorname{F}}
\newcommand{\tH}{\operatorname{H}}
\DeclareFontFamily{U}{mathx}{\hyphenchar\font45}
\DeclareFontShape{U}{mathx}{m}{n}{
	<5> <6> <7> <8> <9> <10>
	<10.95> <12> <14.4> <17.28> <20.74> <24.88>
	mathx10
}{}
\DeclareSymbolFont{mathx}{U}{mathx}{m}{n}
\DeclareMathSymbol{\bigtimes}{1}{mathx}{"91}
\newcommand{\nat}[1]{{\kascolor{[Nataliia: #1]}}}}
\title{Quantum optimal transport with convex regularization}
\author[1]{Emanuele Caputo}
\author[2,3,4]{Augusto Gerolin}
\author[2]{Nataliia Monina}
\author[5]{Lorenzo Portinale}
\affil[1]{Zeeman Building, University of Warwick}
\affil[2]{Department of Mathematics and Statistics, University of Ottawa}
\affil[3]{Department of Chemistry and Biomolecular Sciences, University of Ottawa}
\affil[4]{Nexus for Quantum Technologies, University of Ottawa}
\affil[5]{Institut f\"ur Angewandte Mathematik, Universit\"at Bonn}
\begin{document}
\maketitle
\begin{abstract}

\noindent
In this manuscript, we study the (finite-dimensional) static formulation of quantum optimal transport problems with general convex regularization and its unbalanced relaxation. In both cases, we show a duality result, characterizations of minimizers (for the primal) and maximizers (for the dual).
An important tool we define is a non-commutative version of the classical $(c,\psi, \ep)$-transforms associated with a general convex regularization, which we employ to prove the convergence of the associated Sinkhorn iterations. Finally, we show the convergence of the unbalanced transport problems towards the constrained one, as well as the convergence of transforms, as the marginal penalization parameters go to $+\infty$.
\end{abstract}


\tableofcontents

\section{Introduction}

This work provides the first complete study of convex regularized quantum optimal transport problems and their extension to the unbalanced case.  More precisely, given two finite-dimensional Hilbert spaces $\cH_1$ and $\cH_2$,  our goal is to prove the existence and to characterize the solutions of the convex regularized quantum optimal transport problem
\begin{equation}\label{eq:general_pb_intro}
     \prim = \min 
    \left\{ 
        \Tr\left[C \Gamma\right] + \eps \Tr \left[ \varphi(\Gamma)\right]
            \suchthat 
        \Gamma \mapsto (\rho,\sigma) 
    \right\}, 
\end{equation}
where the cost $C \in \rmH(\cH_1\otimes\cH_2)$ is a Hermitian matrix over the product space $\cH_1\otimes \cH_2$, and $\varphi:[0,+\infty) \to \mathbb{R}$ is a proper, convex, superlinear at infinity and bounded from below function. The notation $\Gamma\mapsto(\rho,\sigma)$ means that $\Gamma$ is a density matrix over $\cH_1\otimes\cH_2$ having partial traces equal to density matrices $\rho$ over $\cH_1$  and  $\sigma$ over $\cH_2$.

The unbalanced (convex-regularized) quantum optimal transport problem is given by
\begin{equation}\label{eq:general_unbalancedpb_intro}
    \mathfrak{F}^{\ep,\tau_1,\tau_2} = \min 
    \left\{ 
        \Tr\left[C \Gamma\right] + \eps \Tr \left[\varphi(\Gamma) \right]
           + \tau_1 \cE({\rm P}_1(\Gamma)| \rho)+ \tau_2 \cE({\rm P}_2(\Gamma)| \sigma) \, : \, \Gamma\geq 0
    \right\}, 
\end{equation}

where $\cE(\Gamma|\rho)= 
    \Tr
        \left[
            \Gamma 
            \big(
                \log \Gamma - \log \rho - {\rm Id}_{\cH}
            \big)
                + \rho 
        \right]$ 
denotes the non-commutative relative entropy functional, $\rm{P}_1$ and $\rm{P}_2$ are, respectively, the partial traces in $\cH_2$ and $\cH_1$. The notation $\Gamma\geq 0$ means that $\Gamma$ is a semi-definite positive operator. For the precise definitions, see equation \eqref{eq:def_primal_problem_balanced} and Section~\ref{sec:setting_mr}.

The main difference between the problems \eqref{eq:general_pb_intro} and \eqref{eq:general_unbalancedpb_intro} is that instead of the marginal constraint $\Gamma\mapsto(\rho,\sigma)$ in \eqref{eq:general_pb_intro}, the unbalanced quantum optimal transport introduces a soft-penalization for the partial traces of $\Gamma$ to be close (in the relative entropy functional) to the density matrices $\rho$ and $\sigma$, weighted by some trade-off parameters $\tau_1$,$\tau_2 \in (0,+\infty)$.

The main contributions of this paper include
\begin{enumerate}
    \item A general duality result, both for the balanced \eqref{eq:general_pb_intro} and unbalanced case \eqref{eq:general_unbalancedpb_intro}, including the case when $\varphi$ nor its Legendre's transform are nonsmooth (Theorem~\ref{thm:duality_general_intro}).
    \item The study and characterization of the optimizers, both in \eqref{eq:general_pb_intro} and \eqref{eq:general_unbalancedpb_intro}, as well as their corresponding dual formulations (Theorem~\ref{thm:main_duality_balanced} and \ref{thm:main_duality_unbalanced}). 
    \item The definition and the proof of the convergence of Sinkhorn-type iterations for quantum optimal transport with general convex regularization $\varphi$ (Theorem~\ref{thm:main_Sinkhorn}).
    \item The convergence, as the soft-penalization parameters $\tau_i \to 
    +\infty$, of the unbalanced transport problem \eqref{eq:general_unbalancedpb_intro} to the corresponding balanced formulation \eqref{eq:general_pb_intro}. More precisely, we show that when $\tau_i\to+\infty$ o(i) the energies converge, i.e.  $\mathfrak{F}^{\ep,\tau_1,\tau_2} \to\prim$; (ii) Any sequence of minimizers $(\Gamma^{\varepsilon,\tau_1,\tau_2})_{\tau_1,\tau_2>0}$ of the unbalanced quantum optimal transport \eqref{eq:general_unbalancedpb_intro} converges to a minimizer $\Gamma^{\varepsilon}$ of \eqref{eq:general_unbalancedpb_intro}; and finally (iii) any sequence of maximizers of the dual formulation of \eqref{eq:general_unbalancedpb_intro} converges to a maximizer in the dual formulation of \ref{eq:general_unbalancedpb_intro} 
 (Theorem~\ref{thm:main_convergence}).
\end{enumerate}
Moreover, if we assume that $\varphi$ is strictly convex and $C^1$, we can show that for both primal problems defined in equations \eqref{eq:general_pb_intro} and \eqref{eq:general_unbalancedpb_intro} admit a unique minimizer, which can be written in the explicit form
\begin{align}
    \Gamma
        =
    \psi' \left[ \frac{U \oplus V - C}{\varepsilon} \right]
        , 
\end{align}
where $\psi$ is the Legendre transform of $\varphi$, and $(U,V)$ are maximizers of their associated dual problems (Theorem \ref{thm:main_duality_balanced} and Theorem \ref{thm:characterization_maximizers_unbalanced}). 

\subsubsection*{Quantum Optimal Transport}

Quantum optimal transport extends the classical theory of optimal transport for probability measures to quantum states (e.g., density matrix or operators). Mathematically, challenges arise from the intrinsic non-commutative nature of quantum mechanics, which contrasts with the commutative structures underlying classical probability theory.

The pursuit of a quantum version of the Monge--Kantorovich optimal transport theory dates back roughly thirty years, with pioneering works by Connes \cite{Con89}, Biane, and Voiculescu \cite{BiaVoi01}, see also \cite{DAnMar10}. Their approach is primarily centered on adapting the static dual formulation of optimal transport.

More recently, several efforts have been made to extend the optimal transport theory for quantum states, starting from either the static (e.g., Kantorovich relaxation, dual) or dynamical formulation (known after Benamou--Brenier). While these formulations are equivalent in the optimal transport theory for probability measures, they can lead to distinct and not necessarily equivalent theories in the quantum setting.

In the static framework, quantum optimal transport between quantum states has been first introduced by Caglioti, Golse, and Paul \cite{CalGolPau18} and De Palma and Trevisan \cite{DPaMaTre2021,DPaTre19}. In \cite{CalGolPau18}, the quantum optimal transport arose in the context of
the study of the semiclassical limit of quantum mechanics. In this setting, the problems correspond to \eqref{eq:general_pb_intro} with $\varepsilon=0$, the joint states $\Gamma$ are defined in $\cH_1\otimes \cH_2$ and the cost operator is given in terms of quadratures $C=\sum_{i=1}^{2m}\left(R_i\otimes\mathbb{I}_{\mathcal{H}_2} - \mathbb{I}_{\mathcal{H}_1}\otimes R_i\right)^2.$
In the alternative proposal in \cite{DPaTre19}, the plans instead are states on $\mathcal{H}\otimes \mathcal{H}^*$, i.e., one acts with a partial transpose operation:
the variational problem reads similarly to the previous one, but its properties are rather different, and it enjoys operational interpretations thanks to a correspondence between plans and quantum channels. Also, in \cite{DPaMaTre2021}, the authors introduced a quantum optimal transport problem by generalizing the dual formulation of the so-called $\mathbb{W}_1$ distance specifically for qubits.

A dynamical formulation of quantum optimal transport has been introduced by Carlen and Maas \cite{CarMas14,CarMas20}, which the authors used to study long-behavior and fine properties of the Fermionic Fokker--Planck equations. In \cite{Wirth:2021}, Wirth introduced a dynamical formulation for von Neumann entropy-regularized quantum optimal transport within the dynamical framework. In the same year, Monsaingeon and Vorotnikov \cite{MonVor20} introduced the Fisher--Rao space of matrix-valued non-commutative probability measures, and the related Hellinger space.

The study of quantum composite systems at positive temperature -- equation \eqref{eq:general_pb_intro} with von Neumann entropy $\varphi(\Gamma)=\Tr\left[\Gamma\big(\log \Gamma - {\rm Id}_{\cH_1\otimes\cH_2}\big)\right]$ -- has been initiated in \cite{FelGerPor23}, including the cases when the density matrices are fermionic, important for Electronic Structure Theory \cite{AyeGolLev06,FriGerGorChapter,Gil75,Levy,Val80}. The von Neumann entropy regularization is fundamental for implementing quantum optimal transport functionals into efficient computational algorithms. 

Unbalanced quantum optimal transport was studied in \cite{GerMon2023}, motivated 
by optimal transport theory for non-negative measures introduced, independently, by Liero, Mielke, and Savar\'e \cite{LiMiSa}, Chizat, Peyr{\'e}, Schmitzer, and Vialard \cite{ChiPeySchVia18}, and finally by Kondratyev, Monsaingeon, and Vorotnikov \cite{KonMonVor16,KonMonVor16-arX}. In \cite{GerMon2023},
the authors considered a particular problem \eqref{eq:general_pb_intro} when $\varphi(\Gamma)$ is the von Neumann entropy regularization, proved weak duality, $\Gamma$-convergence results when either the soft-penalization parameter in \eqref{eq:general_unbalancedpb_intro} 
 $\tau_1=\tau_2\to+\infty$ goes to infinity and the von Neumann entropy-regularization parameter $\varepsilon\to 0^+$ goes to zero.

This work addresses general convex entropy regularization problems within both the quantum optimal transport \eqref{eq:general_pb_intro}
and unbalanced quantum optimal transport \eqref{eq:general_unbalancedpb_intro} frameworks. In the context of optimal transport for probability measures, convex regularization has been studied extensively (e.g., \cite{DMaGer19, DMaGer20,LorMah20Cont,LorMah19, LorManMey21, Muzellec2016TsallisRO, TerGon2022}). This list of examples includes quadratic regularization as well as regularization with the Tsallis entropy. The study of these cases is primarily driven by the goal of designing computational algorithms leveraging the sparse structure of optimal transport problems.

A recent survey on the topic has been recently published in \cite{DPaTre2023,PorSurvey23}. 

\noindent
\textbf{Organization of the paper:} Section \ref{sec:setting_mr} presents the setting and the statement of the main result of this manuscript. Section \ref{sec:balanced_noncommutative} focuses on the proofs for the convex regularized quantum optimal transport problem, the definition of $(C,\psi,\eps)$-transform and its properties. In the final part, we define the Sinkhorn iterations and study their convergence. In Section \ref{sec:unbalanced_noncommutative}, we prove strong duality in the unbalanced case, we define the $(C,\psi,\eps,\tau)$-transform and study its properties. Section \ref{sec:extension_to_nonsmooth} extends the results of previous sections by removing the assumption that $\psi \in C^1$. Section \ref{sec:convergence_resuts} contains the $\Gamma$-convergence results of the unbalanced problem to the balanced one and the convergence of transforms as the size parameters $\tau$ converge to $+\infty$.

\section{Setting and main results}
\label{sec:setting_mr}
%
Given a Hilbert space $\cH$, we denote by $\rmH(\cH)$ the vector space of Hermitian operators over $\cH$.
Let us denote by $\rmH_\geq(\cH)$ the set of positive semi-definite Hermitian and by $\mkP(\cH)$ the set of density matrices, i.e.\ elements of $\rmH_\geq(\cH)$ with trace equal to $1$. We also write 
$\rmH_>(\cH)$ to denote the elements $A$ of $\rmH_\geq(\cH)$ such that ${\rm ker}(A)=\{ 0\}$.

Given $\rho \in \rmH(\cH_1)$ and $\Gamma \in \rmH(\cH_1 \otimes \cH_2)$, we say that $\rho$ is the first partial trace if
\begin{align}
\label{eq:def_partial_traces}
    \Tr \left[ \Gamma (U \otimes {\rm Id}_{\cH_2}) \right]= \Tr \left[ \rho U \right]\qquad \text{for every }U \in \rmH(\cH_1) .
\end{align}
In such a case, we write ${\rm P}_1 \Gamma = \Tr_{\cH_2} \Gamma =\rho$. Similarly, given $\sigma \in \rmH(\cH_2)$ and $\Gamma$ as before, we denote by ${\rm P}_2 \Gamma = \Tr_{\cH_1} \Gamma =\sigma$ the second partial trace defined in an analogous way. If ${\rm P}_1 \Gamma=\rho$ and ${\rm P}_2 \Gamma=\sigma$, we write $\Gamma \mapsto (\rho, \sigma)$. 

For the sake of convenience, we introduce the following notation: for $ \in \{1, \dots, d \}$, we use $\lambda_i(A)$ to denote the $i-$th smallest element of the spectrum of $A$. In particular, $\lambda_1(A)$ is the smallest eigenvalue of $A$. 
For every continuous function $\varphi \colon \mathbb{R} \to \mathbb{R}$, we define \textit{the lifting} of $\varphi$ to the space of Hermitian matrices the operator\footnote{With a slight abuse of notation, we denote the lifting of $\varphi$ also with the same letter $\varphi$.} $\varphi \colon \rmH(\cH) \to \rmH(\cH)$ given by $\varphi(A):=\sum_{i=1}^d \varphi(\lambda_i(A)) |\xi_i\rangle \langle \xi_i|$, where we used the spectral decomposition $A = \sum_{i=1}^d \lambda_i(A) \ket{\xi_i}\bra{\xi_i}$. When clear from the context, we may omit the dependence on $A$ and write $\lambda_i$ instead of $\lambda_i(A)$. 
For $U \in \rmH(\cH_1)$ and $V \in \rmH(\cH_2)$, we define $U \oplus V := {\rm Id}_{\cH_1} \otimes V + U \otimes {\rm Id}_{\cH_2}$.
%
%

The unbalanced optimal transport is obtained by replacing the constraint with two penalization terms tuned by parameters $\tau_1$, $\tau_2>0$. The choice of penalization we use in this paper (as typically adopted in the commutative setting as well) is given by the relative entropy with respect to the marginals. Given a finite-dimensional Hilbert space $\cH$ and an operator $\eta \in \rmH_\geq(\cH)$, the relative entropy functional $\cE(\cdot | \eta): \rmH_\geq(\cH) \to [0, + \infty]$ is the functional defined as
\begin{align}
\label{eq:def_relative_entropy}
    \alpha \mapsto \cE(\alpha | \eta)
        := 
    \begin{cases}
        \Tr 
        \left[
            \alpha 
            \big(
                \log \alpha - \log \eta - {\rm Id}_{\cH}
            \big)
                + \eta 
        \right]
            & \text{if } \ker \eta \subset \ker \alpha
                , 
    \\
        + \infty 
            & \text{otherwise} 
                .
    \end{cases}
\end{align}
Here, we are using the convention that $\alpha \big( \log \alpha - \log \eta) = 0$ on $\ker \eta \subset \ker \alpha$. The nonnegativity of the relative entropy is a direct consequence of Klein's inequality \cite[Theorem~2.11]{carlen2010trace} applied to the convex function $t \mapsto t \log t$. Finally, it is a (strictly) convex function on its domain, and its Legendre transform is given by 
\begin{align}
\label{eq:def_Legendre_relative_entropy}
    \cE^*(A|\eta)
        :=
    \Tr
    \left[ 
        e^{A+\log \eta}  - \eta 
    \right]
        \, , \qquad A \in \rmH(\cH) 
            .
\end{align}
This follows from the equality 
\begin{align}
     \Tr
    \left[ 
        e^{A+\log \eta}  - \eta 
    \right] 
        =
     g(A+\log \eta) - \Tr[\eta] 
        \, , \quad \text{where} \quad g(B) := \Tr[\exp(B)]
            \, , 
\end{align}
together with the fact that, for $\alpha \in \rmH(\cH)$,
\begin{align}
    g^*(\alpha) 
        = 
    \sup_{A \in \rmH(\cH)} 
    \left\{ 
        \langle \alpha, A \rangle - g(A)
    \right\} 
        = 
        \begin{cases}
            \Tr 
        \left[
            \alpha 
            \big(
                \log \alpha - {\rm Id}_{\cH}
            \big)
        \right], & \alpha\in \rmH_\geq(\cH),\\
        +\infty, & \text{otherwise}
            .
        \end{cases}
\end{align}
By convexity we know that $g^{**} = g$,  hence we get  \eqref{eq:def_Legendre_relative_entropy}.
Note that in general  $ \cE^*(A|\eta) \neq \Tr 
    \left[ 
        \big( 
            e^A - {\rm Id}_{\cH}
        \big) 
            \eta 
    \right]
    $, in particular when $A$ and $\eta$ do not commute.

We shall fix  $\varphi \colon [0,+\infty) \to \R$ a convex, superlinear at infinity, and bounded from below function, namely 
\begin{align}
    \label{eq:assumptions_varphi_intro}
    \varphi \colon [0,+\infty) \to \R
        \, , \quad 
    \text{convex} 
        \, , \quad 
    \lim_{t \to +\infty}
        \frac{\varphi(t)}{t} =+\infty
        \, ,  \tand 
    \inf \varphi \geq l > - \infty 
        , \qquad 
\end{align}
for some $l \in \R$. In particular, we assume that $\varphi(0)\in \R$. 
%
%
The following operators are given and fixed throughout the whole paper: 
\begin{enumerate}
    \item A cost operator $C \in \rmH(\cH_1 \otimes \cH_2)$. 
    \item Two `marginal' operators $\rho \in \rmH_>(\cH_1)$ and $\sigma \in \rmH_>(\cH_2)$ (in particular, with trivial kernel). Sometimes we may assume them to be density matrices as well.  
\end{enumerate}

The assumption on the kernel of the marginal is classical and not too restrictive. Similar results can be obtained by decomposing the space with respect to the kernels and their orthogonal spaces, in the very same spirit as in \cite[Remark~3.9]{FelGerPor23}. For the sake of simplicity, we won't discuss further details in this work. 

We also fix $\eps >0$ a regularization parameter. 
The main objects of study of our work are primal and dual functionals. 
\begin{defin}[Primal problems]
We define a functional $\primal \colon \rmH_\geq(\cH_1 \otimes \cH_2) \to \mathbb{R}$ as
\begin{align}
\label{eq:def_primal_balanced}
    \primal(\Gamma): = \Tr\left[C \Gamma\right] + \eps \Tr \left[ \varphi\left( \Gamma \right) \right]
        , \quad \forall \Gamma \in \rmH_\geq(\cH_1 \otimes \cH_2) .
\end{align}
For every $\tau = (\tau_1, \tau_2) \in (0,+\infty)^2$, we define  $\primaltau: \rmH_\geq(\cH_1 \otimes \cH_2) \to \R$ as the functional 
\begin{align}
\label{eq:def_primal_unbalanced}
    \primaltau(\Gamma) :=  \primal(\Gamma)+ \tau_1 \cE({\rm P}_1(\Gamma)| \rho)+ \tau_2 \cE({\rm P}_2(\Gamma)| \sigma) 
        ,
\end{align}
We also define the \textit{unbalanced} primal problem with parameters $\tau_1$, $\tau_2>0$ as
\begin{align}
\label{eq:def_primal_problem_unbalanced}
    \primtau:= \min \left\{ \primaltau (\Gamma) 
    \suchthat \Gamma \in \rmH_\geq(\cH_1 \otimes \cH_2) \right\}.
\end{align}
Moreover, if $\rho \in \mkP(\cH_1)$ and $\sigma \in \mkP(\cH_2)$ we define the primal problem as 
\begin{align}
\label{eq:def_primal_problem_balanced}
    \prim:= \min \left\{ \primal(\Gamma) \suchthat  \Gamma \mapsto (\rho, \sigma)  \right\}
    ,
\end{align}
which (formally) corresponds to the case $\tau_1=\tau_2 = +\infty$.
\end{defin}

To each primal problem, we associate the respective dual problem. Throughout the whole paper, we work with a function $\psi \in C(\R)$ which is convex, superlinear at infinity, and bounded from below, i.e. 
\begin{align}
\label{eq:assumptions_psi_intro}
    \psi\in C(\R) 
        \, , \quad \text{convex} \, , \quad 
    \lim_{t \to +\infty}
        \frac{\psi(t)}{t}
    = + \infty \, , 
        \tand 
    m:= \inf \psi > - \infty 
        .
\end{align}
When dealing with duality results for the quantum optimal transportation problems, $\psi$ will typically be the Legendre transform of a $\varphi$ satisfying \eqref{eq:assumptions_varphi_intro}, namely of the form
\begin{align}
    \psi(t) 
        =
    \sup_{x \in [0,+\infty)}
        \left\{
            tx - \varphi(x)
        \right\}
        =
    \varphi^*(t)
        \, , \quad \forall t \in \R
            ,
\end{align}
where when talking about Legendre transform we may implicitly extend the definition of $\varphi$ on the full real line by setting $\varphi(x) \equiv +\infty$, for every $x < 0$. The validity of \eqref{eq:assumptions_psi_intro} readily follows in this case from the properties \eqref{eq:assumptions_varphi_intro} of $\varphi$. 
A direct computation for the relation between the primal and the dual can be seen in Appendix \ref{appendix:primal_dual}.

\begin{defin}[Dual problems]
Fix $\psi \in C(\R)$ satisfying \eqref{eq:assumptions_psi_intro}. We define the dual functional $\dualf  \colon \rmH(\cH_1) \times \rmH(\cH_2) \to \R$ as
\begin{align}\label{eq:def_dual_balanced}
    \dualf (U,V)
        :=
    \Tr
    \left[ U \rho\right] 
        + 
    \Tr\left[ V \sigma\right]
        -
    \eps \Tr \left[ \psi \Big(\frac{U\oplus V - C}{\eps} \Big) \right]
        .
\end{align}
For every $\tau_1$, $\tau_2>0$, we also define $\dualfu \colon \rmH(\cH_1) \times \rmH(\cH_2) \to \R$ as the map
\begin{align}
\label{eq:def_dual_unbalanced}
    (U,V) \mapsto 
     -\tau_1
     \Tr
     \left[ 
        e^{-\frac{U}{\tau_1}+\log \rho}- \rho
    \right] 
        -
    \tau_2
    \Tr
    \left[ 
       e^{-\frac{V}{\tau_2}+\log \sigma}- \sigma
    \right] 
        -\eps 
    \Tr
    \left[ 
        \psi\Big(\frac{U\oplus V-C}{\eps}\Big)
    \right] 
        .
    \quad 
\end{align}
We define the dual problem and the unbalanced dual problem as 
\begin{gather}
\label{eq:def_dual_problems}
    \dual :=
        \sup 
    \Big\{
        \dualf (U,V)
    \suchthat 
         (U,V) \in \rmH(\cH_1) \times \rmH(\cH_2)
    \Big\}
        ,
\\
    \dualu := 
        \sup
    \Big\{ 
        \dualfu (U,V) 
    \suchthat 
        (U,V) \in \rmH(\cH_1) \times \rmH(\cH_2)
    \Big\}
        .  
\end{gather}
\end{defin}


The main results of our work can be divided into four parts: duality results and analysis of primal/dual problem for the convex regularized quantum optimal transport problem under suitable regularity assumptions on $\psi$, the corresponding results for the unbalanced problem, a generalization of the duality for possibly nonsmooth $\psi$, and finally convergence results for the unbalanced problem towards the balance one was $\tau \to +\infty$. 

\subsubsection*{Quantum optimal transport}
The first contribution is a duality result for the convex regularized quantum optimal transport problem, including existence, (suitable) uniqueness, and characterization of the optimizers, both in the primal and dual problems.

\begin{theo}[Duality and optimizers]
\label{thm:main_duality_balanced}
    Let $\varphi:[0,+\infty) \to [0,+\infty)$ satisfy \eqref{eq:assumptions_varphi_intro}, and assume that $\psi=\varphi^*$ is strictly convex and $C^1$. Assume additionally that $\rho \in \DM(\cH_1)$ and $\sigma \in \DM(\cH_2)$. Then the following statements hold: 
    \begin{itemize}
        \item[1)] (Duality) The dual and primal problem coincide, namely $\dual  = \prim$.
        \item[2)] (Existence of maximizers) There exists $(U^*,V^*)$ which maximizes $\emph{D}^\eps $, i.e. $\dual  = \emph{D}^\eps (U^*,V^*)$. Moreover, any other maximizer $(\tilde U, \tilde V)$ satisfies $(\tilde U - U^*, \tilde V - V^*) = (\lambda {\rm Id}_{\cH_1} , - \lambda {\rm Id}_{\cH_2} )$, for some $\lambda \in \R$. 
        \item[3)] (Existence of minimizers) There exists a unique maximizer $\Gamma^* \in \DM(\cH_1 \otimes \cH_2)$ for the primal problem, i.e. $\prim = \primal(\Gamma^*)$, and it is given by 
        
        \begin{align}
        \label{eq:main_formula_minimizer}
            \Gamma^*
                =
            \psi' \left[ \frac{U^* \oplus V^* - C}{\varepsilon} \right]
                ,
        \end{align}
        where $(U^*,V^*)$ are (any) maximizers for the dual problem. 
    \end{itemize}
\end{theo}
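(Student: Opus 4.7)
The plan is to follow the standard convex-duality blueprint, adapted to the non-commutative spectral setting. The pivotal ingredient is the trace Fenchel--Young inequality: for $X \in \rmH_\geq(\cH)$ and $Y \in \rmH(\cH)$,
\[
    \Tr[XY] \leq \Tr[\varphi(X)] + \Tr[\psi(Y)],
\]
with equality if and only if $X = \psi'(Y)$ in the functional-calculus sense (this follows from the scalar Fenchel equality applied in a compatible eigenbasis, using the $C^1$ assumption on $\psi$ for the equality characterization). Applying this at $(\Gamma, (U \oplus V - C)/\eps)$ and using the identity $\Tr[\Gamma(U \oplus V)] = \Tr[\rho U] + \Tr[\sigma V]$, valid whenever $\Gamma \mapsto (\rho,\sigma)$, immediately yields $\primal(\Gamma) \geq \dualf(U, V)$ for every admissible triple, hence weak duality $\prim \geq \dual$. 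The primal feasible set is nonempty (it contains $\rho \otimes \sigma$), closed and bounded (as a subset of density matrices), $\primal$ is continuous, and the $C^1$ regularity of $\psi$ implies by Legendre duality that $\varphi$ is strictly convex, so the direct method yields a unique primal minimizer $\Gamma^*$.

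\textbf{Existence of a dual maximizer.} The dual functional is invariant under the gauge action $(U, V) \mapsto (U + \lambda {\rm Id}, V - \lambda {\rm Id})$, since $U \oplus V$ is invariant under it and $\Tr \rho = \Tr \sigma = 1$. On the slice $\{\Tr U = 0\}$ I would establish coercivity. Superlinearity of $\psi$ provides, for every $k > 0$, a constant $B_k$ with $\psi(t) \geq k|t| - B_k$, hence
\[
    \eps \Tr \psi\bigl((U \oplus V - C)/\eps\bigr) \geq k\, \|U \oplus V - C\|_1 - B_k \eps \dim(\cH_1 \otimes \cH_2).
\]
Combining with $\Tr[U\rho] + \Tr[V\sigma] = \Tr[(U \oplus V)(\rho \otimes \sigma)]$ and the H\"older bound $|\Tr[(U \oplus V)(\rho \otimes \sigma)]| \leq \|U \oplus V\|_1 \|\rho \otimes \sigma\|_\infty \leq \|U \oplus V\|_1$, choosing $k > 1$ forces $\dualf(U, V) \to -\infty$ whenever $\|U \oplus V\|_1 \to \infty$. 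A short linear-algebra check shows that on the gauge slice, the linear map $(U, V) \mapsto U \oplus V$ is injective (its kernel would consist of pairs $(\lambda {\rm Id}, -\lambda {\rm Id})$ with $\Tr U = 0$, forcing $\lambda = 0$), so divergence of $(U, V)$ implies divergence of $\|U \oplus V\|_1$. Continuity of $\dualf$ then yields a maximizer $(U^*, V^*)$. I expect this to be the most delicate step: the coercivity estimate must balance the superlinear growth of $\psi$ against the linear marginal terms and crucially uses that $\rho, \sigma$ have trivial kernel.

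\textbf{Formula, strong duality, and uniqueness.} Define $\Gamma^\sharp := \psi'\bigl((U^* \oplus V^* - C)/\eps\bigr)$. Since $\varphi$ is defined on $[0, \infty)$, the maximizer in $\psi(y) = \sup_{x \geq 0}\{xy - \varphi(x)\}$ is nonnegative, so $\psi' \geq 0$ and hence $\Gamma^\sharp \geq 0$. First-order optimality of $(U^*, V^*)$, using the chain-rule identity $\frac{d}{dU} \Tr \psi((U \oplus V - C)/\eps) = {\rm P}_1 \psi'((U \oplus V - C)/\eps)$, gives ${\rm P}_1 \Gamma^\sharp = \rho$ and ${\rm P}_2 \Gamma^\sharp = \sigma$, so $\Gamma^\sharp$ is primal-feasible. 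The Fenchel equality case holds by construction at $(\Gamma^\sharp, (U^* \oplus V^* - C)/\eps)$, yielding $\primal(\Gamma^\sharp) = \dualf(U^*, V^*)$; combined with weak duality and minimality of $\Gamma^*$, one obtains the chain
\[
    \prim \leq \primal(\Gamma^\sharp) = \dualf(U^*, V^*) \leq \dual \leq \prim,
\]
forcing $\prim = \dual$ and $\Gamma^\sharp = \Gamma^*$. Finally, any other dual maximizer $(\tilde U, \tilde V)$ must satisfy $\psi'((\tilde U \oplus \tilde V - C)/\eps) = \Gamma^*$ by the same argument, and strict convexity of $\psi$ (equivalently, injectivity of $\psi'$) forces $\tilde U \oplus \tilde V = U^* \oplus V^*$, i.e.\ $(\tilde U - U^*) \otimes {\rm Id} = {\rm Id} \otimes (V^* - \tilde V)$. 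The elementary fact that $A \otimes {\rm Id} = {\rm Id} \otimes B$ on $\cH_1 \otimes \cH_2$ forces $A$ and $B$ to be equal scalar multiples of the identity, giving $(\tilde U - U^*, \tilde V - V^*) = (\lambda {\rm Id}, -\lambda {\rm Id})$ and concluding the proof.
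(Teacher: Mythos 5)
Your weak duality, the formula $\Gamma^\sharp=\psi'((U^*\oplus V^*-C)/\eps)$ via first-order conditions, the equality chain, and the uniqueness-up-to-gauge argument all parallel the paper's route (there phrased through the $(C,\psi,\eps)$-transform and the equivalence theorem). The genuine gap is in the step you yourself flag as delicate: the coercivity estimate behind the existence of a dual maximizer. You invoke, for every $k>0$, a bound $\psi(t)\geq k|t|-B_k$ valid on all of $\R$. This is false under the standing hypotheses: $\psi=\varphi^*$ with $\varphi$ finite only on $[0,+\infty)$ is non-decreasing and satisfies $\lim_{t\to-\infty}\psi(t)=-\varphi(0)\in\R$ (see Remark~\ref{rem:monotonicity_Legendre}), so $\psi$ is \emph{bounded} on every half-line $(-\infty,b]$; superlinearity in \eqref{eq:assumptions_psi_intro} is only one-sided, at $+\infty$. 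The paper's own quadratic example $\psi(t)=\tfrac12(t_+)^2$ already kills your inequality for $t\to-\infty$. Consequently your estimate
\begin{align}
    \eps \Tr \psi\Big(\frac{U \oplus V - C}{\eps}\Big) \geq k\, \|U \oplus V - C\|_1 - B_k\, \eps\, d_1 d_2
\end{align}
fails precisely in the dangerous directions where $U\oplus V$ acquires large \emph{negative} eigenvalues: there the penalization term stays bounded below by $\eps m d_1 d_2$ and does not blow up, so your argument does not show $\dualf\to-\infty$ along such sequences, and the coercivity claim is unproved.

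The conclusion you want is nevertheless true, but it needs a two-sided spectral argument of the type in Proposition~\ref{prop:superlevel_bdd}: superlinearity at $+\infty$ (via the one-sided statement of Remark~\ref{rem:superlinear_prop1}) bounds the \emph{top} eigenvalue $\lambda_{d}(U\oplus V)$ from above on a superlevel set; the \emph{bottom} eigenvalue is then bounded from below not by the $\psi$-term but by the linear term, writing $\Tr[U\rho]+\Tr[V\sigma]=\Tr[(U\oplus V)(\rho\otimes\sigma)]$, expanding in an eigenbasis of $U\oplus V$ (Peierls' inequality), and using that $\lambda_1(\rho\otimes\sigma)>0$ because $\rho,\sigma$ have trivial kernel, together with $\inf\psi=m>-\infty$. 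With that replacement (your gauge slice $\Tr U=0$ works just as well as the paper's $\lambda_1(V)=0$, since the slices differ by a bounded shift once boundedness on one slice is known), the remainder of your proof goes through and yields the theorem.
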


Our second contribution concerns the definition of a Sinkhorn-type iterations and their convergence guarantee. To this end, a crucial tool to consider is the so-called $(C,\psi,\eps)$-transform (associated with $\psi$). We assume once again that  $\psi=\varphi^*$ is strictly convex and $C^1$. We define $\mathscr{F}_2^{(C,\psi,\eps)} \colon \rmH(\cH_1)\to \rmH(\cH_2)$ as
\begin{equation}\label{eq:U_transform_bal_intro}
    \mathscr{F}_2^{(C,\psi,\eps)}(U) 
        =
    \argmax \left\{ \Tr[ V \sigma] - \eps \Tr \left[ \psi \Big(\frac{U\oplus V - C}{\eps} \Big) \right]
        \suchthat 
    V \in \rmH(\cH_2) \right\}.
\end{equation}
Analogously, we define the corresponding $\mathscr{F}_1^{(C,\psi,\eps)} \colon \rmH(\cH_2)\to \rmH(\cH_1)$ as
\begin{equation}
\label{eq:V_transform_bal_intro}
    \mathscr{F}_1^{(C,\psi,\eps)}(V) 
        = 
    \argmax 
    \left\{ 
        \Tr[ U \rho] - \eps \Tr \left[ \psi \Big(\frac{U\oplus V - C}{\eps} \Big) \right] 
            \suchthat 
        U \in \rmH(\cH_1) 
    \right\}.
\end{equation}
For the well-posedness of the $(C,\psi,\eps)$-transforms, see Section~\ref{sec:Cepspsi_transform}.
We use this notion to define the following algorithm. Recall that $\lambda_1(A)$ denotes the first eigenvalue of an operator $A$.

\underline{\emph{Step 0}}: fix any initial $(U^0,V^0)\in \rmH(\cH_1)\times\rmH(\cH_2)$. 

\underline{\emph{Step 2n-1}}: for $n \in \N$, for given $U^{n-1} \in \rmH(\cH_1)$, we define $V^n$ as 
\begin{align}
    V^n := \mathscr{F}_2^{(C,\psi,\eps)}(U^{n-1}) - \lambda_1(\mathscr{F}_2^{(C,\psi,\eps)}(U^{n-1})){\rm Id}_{\cH_2}
        .
\end{align}

\underline{\emph{Step 2n}}:  for given $V^n \in \rmH(\cH_2)$, we define $U^n$ as 
\begin{align}
     U^n := \mathscr{F}_1^{(C,\psi,\eps)}(\mathscr{F}_2^{(C,\psi,\eps)}(U^{n-1})) + \lambda_1(\mathscr{F}_2^{(C,\psi,\eps)}(U^{n-1})){\rm Id}_{\cH_1}.
\end{align}

Then we have the following convergence result.

\begin{theo}[Convergence of the Sinkhorn iterations]
\label{thm:main_Sinkhorn}
Under the same assumptions of Theorem~\ref{thm:main_duality_balanced},
    there exists a limit point $(U^*,V^*)$ of  $\{(U^n,V^n)\}_{n\geq 0}$, which is a maximizer of $\emph{D}^\eps $.
Moreover, we have that 
    \begin{align}
        \Gamma^n = \psi'
        \Big(
            \frac{U^n\oplus V^n - C}{\ep}
        \Big)
            \in 
                \mkP(\cH_1 \otimes \cH_2)
    \end{align}
    and we have that $\displaystyle \Gamma^n \to \Gamma^* \in \mkP(\cH_1 \otimes \cH_2)$ as $n \to \infty$, where 
    \begin{align}
        \Gamma^* 
            = 
        \psi'
        \Big( 
            \frac{U^* \oplus V^* - C}{\ep}
        \Big)
            \mapsto 
        (\rho,\sigma) 
    \end{align}
    is the unique minimizer for the primal problem $\prim$.
\end{theo}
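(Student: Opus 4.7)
The plan is to follow the classical monotonicity-plus-compactness scheme for Sinkhorn-type algorithms, adapted to the non-commutative setting and to general $\psi$. The scheme has four main steps: (i) monotonicity of the dual functional along the iterations, (ii) a priori bounds on the iterates $(U^n,V^n)$, (iii) extraction of a convergent subsequence whose limit is a fixed point of the alternating map, and (iv) identification of the limit as a maximizer and the associated $\Gamma^n$ as converging to the unique primal minimizer.

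First, I would observe that the shifts by $\lambda_1(\cdot)\,{\rm Id}$ preserve the value of the dual functional. Indeed, $(U+\lambda\,{\rm Id}_{\cH_1})\oplus(V-\lambda\,{\rm Id}_{\cH_2})=U\oplus V$ and, since $\rho\in\mkP(\cH_1)$ and $\sigma\in\mkP(\cH_2)$, the change in the linear terms is $\lambda(\Tr\rho-\Tr\sigma)=0$. Combining this invariance with the very definition of $\mathscr{F}_1^{(C,\psi,\eps)}$ and $\mathscr{F}_2^{(C,\psi,\eps)}$ as arg-maxima, I would conclude that $\text{D}^{\eps}(U^n,V^n)$ is non-decreasing in $n$. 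Together with the duality statement $\dual=\prim<+\infty$ from Theorem~\ref{thm:main_duality_balanced}, this gives a monotone, bounded sequence of dual values.

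Second, and this is the main obstacle, I would establish uniform bounds for $(U^n,V^n)$. The normalization built into the algorithm forces $V^n\geq 0$ with $\lambda_1(V^n)=0$ for every $n\geq 1$, so one extremal eigenvalue of $V^n$ is pinned. For the upper bound on $V^n$, I would use the first-order optimality condition for $U^n=\mathscr{F}_1^{(C,\psi,\eps)}(V^n)$, which reads $\Tr_{\cH_2}[\psi'((U^n\oplus V^n-C)/\eps)]=\rho$, together with the analogous condition for $V^n$ involving $\sigma$, the strict positivity of $\rho,\sigma$, and the superlinearity of $\psi$ in \eqref{eq:assumptions_psi_intro}. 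The boundedness of $\text{D}^{\eps}(U^n,V^n)$ combined with the linear terms $\Tr[U^n\rho]+\Tr[V^n\sigma]$ and a dimension-free lower bound on $\Tr[\psi((U\oplus V-C)/\eps)]$ coming from superlinearity then bounds the norms of $U^n$ and $V^n$. This is the step that genuinely uses the strict convexity and the $C^1$ assumption on $\psi$, so that the $(C,\psi,\eps)$-transforms are well-defined single-valued continuous maps (see Section~\ref{sec:Cepspsi_transform}).

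Third, by finite-dimensional compactness I would extract a subsequence $(U^{n_k},V^{n_k})\to(U^*,V^*)$. Continuity of $\mathscr{F}_1^{(C,\psi,\eps)}$ and $\mathscr{F}_2^{(C,\psi,\eps)}$ (which follows from continuity of $\psi'$ and the fact that arg-maxima are singletons and depend continuously on the parameters when $\psi$ is strictly convex and $C^1$) implies that the limit satisfies $V^*=\mathscr{F}_2^{(C,\psi,\eps)}(U^*)-\lambda_1(\mathscr{F}_2^{(C,\psi,\eps)}(U^*))\,{\rm Id}_{\cH_2}$ and $U^*=\mathscr{F}_1^{(C,\psi,\eps)}(V^*)$. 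The first-order conditions for these two problems yield
\begin{align}
\Tr_{\cH_1}\!\left[\psi'\!\Big(\tfrac{U^*\oplus V^*-C}{\eps}\Big)\right]=\sigma,\qquad \Tr_{\cH_2}\!\left[\psi'\!\Big(\tfrac{U^*\oplus V^*-C}{\eps}\Big)\right]=\rho,
\end{align}
so that $\Gamma^*:=\psi'((U^*\oplus V^*-C)/\eps)\mapsto(\rho,\sigma)$. By the characterization of maximizers in Theorem~\ref{thm:main_duality_balanced}, this is precisely the form of a dual optimizer and of the unique primal minimizer. Finally, continuity of $\psi'$ transfers the subsequential convergence $(U^{n_k},V^{n_k})\to(U^*,V^*)$ to $\Gamma^{n_k}\to\Gamma^*$; since any other subsequential limit of $\Gamma^n$ must equal the unique $\Gamma^*$ of Theorem~\ref{thm:main_duality_balanced}, the whole sequence $\Gamma^n$ converges to $\Gamma^*$, concluding the proof.
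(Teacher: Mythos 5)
Your overall architecture (monotonicity of the dual values along the iteration, coercivity via the normalization $\lambda_1(V^n)=0$, compactness, identification of the limit through the characterization of maximizers, then convergence of $\Gamma^n$) is the same as the paper's, but the identification step has a concrete gap. The algorithm defines $V^{n}$ as a recentred transform of $U^{n-1}$, not of $U^{n}$: one has $V^{n}=\mathscr{F}_2^{(C,\psi,\eps)}(U^{n-1})-\lambda_1\big(\mathscr{F}_2^{(C,\psi,\eps)}(U^{n-1})\big){\rm Id}_{\cH_2}$, while $U^{n}=\mathscr{F}_1^{(C,\psi,\eps)}(V^{n})$ by shift-equivariance of $\mathscr{F}_1^{(C,\psi,\eps)}$. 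Hence, even granting the continuity of the transforms (which you assert but do not prove; it is roughly as much work as the paper's argument, via coercivity plus uniqueness of the maximizer), convergence of $(U^{n_k},V^{n_k})$ alone does not yield your claimed relation $V^*=\mathscr{F}_2^{(C,\psi,\eps)}(U^*)-\lambda_1\big(\mathscr{F}_2^{(C,\psi,\eps)}(U^*)\big){\rm Id}_{\cH_2}$, because nothing is known about the convergence of $U^{n_k-1}$ along your subsequence; what you actually get is only $U^*=\mathscr{F}_1^{(C,\psi,\eps)}(V^*)$. Moreover, even if the shifted relation held, Proposition~\ref{prop:optimality_conditions} gives $\tP_2\big[\psi'\big((U^*\oplus\mathscr{F}_2^{(C,\psi,\eps)}(U^*)-C)/\eps\big)\big]=\sigma$, and your identity $\tP_2\big[\psi'\big((U^*\oplus V^*-C)/\eps\big)\big]=\sigma$ follows only if the recentring constant vanishes, which needs an extra argument (e.g.\ comparing traces and using strict monotonicity of $\psi'$). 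So as written, condition 2) of Theorem~\ref{thm:characterization_maximizers} is not established for $(U^*,V^*)$. A side remark: the a priori bounds do not ``genuinely use'' strict convexity and $C^1$; Proposition~\ref{prop:superlevel_bdd} needs only superlinearity and boundedness below of $\psi$, while strict convexity and $C^1$ enter through single-valuedness of the transforms and the formula for $\Gamma^n$.

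The paper avoids both issues without any continuity of the transforms: it uses monotonicity of $d_n:=\dualf(U^n,V^n)$ and the defining maximality of the transforms in a limsup chain, $\dualf(U^*,V^*)=\limsup_k\dualf(U^{n_k},V^{n_k})\ge\limsup_k\dualf\big(U^{n_{k-1}},\mathscr{F}_2^{(C,\psi,\eps)}(U^{n_{k-1}})\big)\ge\limsup_k\dualf\big(U^{n_{k-1}},\mathscr{F}_2^{(C,\psi,\eps)}(U^{*})\big)=\dualf\big(U^{*},\mathscr{F}_2^{(C,\psi,\eps)}(U^{*})\big)$, and then strict concavity of $V\mapsto\dualf(U^*,V)$ forces $V^*=\mathscr{F}_2^{(C,\psi,\eps)}(U^*)$ exactly (no shift), and symmetrically $U^*=\mathscr{F}_1^{(C,\psi,\eps)}(V^*)$. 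If you prefer to keep your scheme, a repair is available: since $d_n$ is non-decreasing and bounded above by $\prim$ (weak duality), $d_n\to D_\infty$; the sandwich $d_{n_k}\le\dualf\big(U^{n_k},\mathscr{F}_2^{(C,\psi,\eps)}(U^{n_k})\big)\le d_{n_k+1}$ gives $\dualf\big(U^{*},\mathscr{F}_2^{(C,\psi,\eps)}(U^{*})\big)=D_\infty=\dualf(U^*,V^*)$ in the limit (using the continuity of $\mathscr{F}_2^{(C,\psi,\eps)}$ that you would then have to prove), again forcing $V^*=\mathscr{F}_2^{(C,\psi,\eps)}(U^*)$. With that step fixed, the rest of your argument (marginal conditions, $\Tr[\Gamma^n]=1$, precompactness of $\Gamma^n$ and uniqueness of its subsequential limits) matches the paper.
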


\subsubsection*{Unbalanced quantum optimal transport}
Taking advantage of the techniques and results provided in the previous section, we next tackle the same questions in the unbalanced setting. In particular, we have the validity of the following duality result. 

\begin{theo}[Duality and optimizers in the unbalanced case]
\label{thm:main_duality_unbalanced}
     Let $\varphi:[0,+\infty) \to [0,+\infty]$ satisfy \eqref{eq:assumptions_varphi_intro}, and assume that $\psi=\varphi^*$ is strictly convex and $C^1$. Let $\tau = (\tau_1,\tau_2) \in \R^2$ be trade-off parameters. Then the following holds:
    \begin{itemize}
        \item[1)] (Duality) The dual and primal problem coincide, namely $\dualu  = \primtau$.
        \item[2)] (Existence of maximizers) There exists a unique $(U^{*,\tau},V^{*,\tau})$ which maximizes $\emph{D}^{\eps,\tau} $, i.e. $\dualu  = \emph{D}^{\eps,\tau} (U^{*,\tau},V^{*,\tau})$. 
        \item[3)] (Existence of minimizers) there exists a unique minimizer $\Gamma^{*,\tau} \in \DM(\cH_1 \otimes \cH_2)$ for the unbalanced primal problem, i.e. $\primtau = \primaltau(\Gamma^{*,\tau})$, and it is given by 
        \begin{align}
            \Gamma^{*,\tau}
                =
            \psi' \left[ \frac{U^{*,\tau} \oplus V^{*,\tau} - C}{\varepsilon} \right]
                ,
        \end{align}
        where $(U^{*,\tau},V^{*,\tau})$ are the unique maximizers for the unbalanced dual problem. 
    \end{itemize}
\end{theo}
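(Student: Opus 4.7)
The plan is to mirror the strategy developed for the balanced problem in Theorem~\ref{thm:main_duality_balanced}, now with the hard marginal constraints replaced by the strictly convex soft penalties $\tau_i \cE(\mathrm{P}_i\Gamma | \cdot)$. The key structural observation is that the Legendre transform of these penalties has the explicit form \eqref{eq:def_Legendre_relative_entropy}, which is precisely the source of the exponential terms in $\dualfu$. More concretely, $\tau_i\cE(\cdot | \rho_i)$ has Legendre conjugate $A \mapsto \tau_i\Tr[e^{A/\tau_i + \log \rho_i} - \rho_i]$, so the natural Fenchel-Young inequality reads
\begin{align}
    \tau_1 \cE(\mathrm{P}_1 \Gamma | \rho)
        \geq
    -\Tr[(U \otimes {\rm Id}_{\cH_2})\Gamma] - \tau_1\Tr[e^{-U/\tau_1 + \log \rho} - \rho]
        ,
\end{align}
with equality iff $\mathrm{P}_1\Gamma = e^{-U/\tau_1 + \log\rho}$ (and symmetrically for $V$). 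Combining this with the trace Young inequality $\Tr[\Gamma M] \leq \Tr[\varphi(\Gamma)] + \Tr[\psi(M)]$ applied to $M = (U \oplus V - C)/\eps$ immediately delivers weak duality $\primtau \geq \dualu$.

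I would then establish existence and uniqueness of the dual maximizer. The functional $\dualfu$ is $C^1$, upper semicontinuous, and strictly concave on $\rmH(\cH_1) \times \rmH(\cH_2)$: the map $A \mapsto \Tr[e^{A + \log \rho}]$ is strictly convex (since $\rho \in \rmH_>(\cH_1)$), and similarly for the $V$-term; this already breaks the gauge degeneracy $(U,V) \mapsto (U + \lambda\,\mathrm{Id}, V - \lambda\,\mathrm{Id})$ which was the sole source of non-uniqueness in the balanced case. Coercivity is the main new ingredient and can be split into two regimes: if $\lambda_1(U_n) \to -\infty$ along some sequence, then $\Tr[e^{-U_n/\tau_1 + \log\rho}] \to +\infty$ since the spectrum of $-U_n/\tau_1$ contains an eigenvalue diverging to $+\infty$ and $\rho$ is uniformly bounded below on its support; on the complementary regime where $U_n \geq -R\,\mathrm{Id}$, $V_n \geq -R\,\mathrm{Id}$ but $\|U_n\| + \|V_n\| \to \infty$, the top eigenvalue of $U_n \oplus V_n - C$ must diverge, and the superlinearity of $\psi$ in \eqref{eq:assumptions_psi_intro} forces $\eps\Tr[\psi((U_n \oplus V_n - C)/\eps)] \to +\infty$. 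Together these yield $\dualfu(U_n,V_n) \to -\infty$, so strict concavity plus coercivity produces the unique maximizer $(U^{*,\tau}, V^{*,\tau})$.

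To close the duality gap, I would exploit the first-order optimality conditions at $(U^{*,\tau}, V^{*,\tau})$. Since $\psi \in C^1$, the Fr\'echet derivatives of $\dualfu$ can be computed explicitly (using that $\frac{d}{dt}\Tr[f(X+tH)]|_{t=0} = \Tr[f'(X) H]$), and setting them to zero yields
\begin{align}
    \mathrm{P}_1 \Gamma^{*,\tau} = e^{-U^{*,\tau}/\tau_1 + \log\rho},
        \quad
    \mathrm{P}_2 \Gamma^{*,\tau} = e^{-V^{*,\tau}/\tau_2 + \log\sigma},
        \quad
    \text{with } \Gamma^{*,\tau} := \psi'\!\left[ \tfrac{U^{*,\tau} \oplus V^{*,\tau} - C}{\eps} \right]
        .
\end{align}
Substituting these identities into the relative entropies realizes equality in the Fenchel-Young bound above, while substituting $\Gamma^{*,\tau} = \psi'(\cdot)$ realizes equality in trace Young. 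Assembling the pieces, $\primaltau(\Gamma^{*,\tau}) = \dualfu(U^{*,\tau}, V^{*,\tau})$, which combined with weak duality proves $\primtau = \dualu$ and that $\Gamma^{*,\tau}$ is a minimizer of the primal. Uniqueness of the primal minimizer follows from strict convexity of $\primaltau$: strict convexity and $C^1$ regularity of $\psi$ imply $\psi'$ is strictly increasing with inverse $\varphi'$, so $\varphi$ is strictly convex, whence $\Gamma \mapsto \eps\Tr[\varphi(\Gamma)]$ is strictly convex, and the relative entropy terms contribute additional convexity.

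The hardest part is expected to be the coercivity of $\dualfu$, because one has to control simultaneously two non-commuting mechanisms: the exponential blow-up of $\Tr[e^{-U/\tau_i + \log(\cdot)}]$ along the negative part of $U,V$, and the superlinear growth of $\Tr[\psi(\cdot)]$ along the positive part of $U \oplus V$. Extracting a quantitative lower bound on $\Tr[e^{-U/\tau_i + \log(\cdot)}]$ in terms of $\lambda_1(U)$ requires some non-trivial spectral estimate (for instance via the Golden-Thompson inequality, or by restricting the trace to the eigenprojector of $U$ associated with its smallest eigenvalue) because $U$ and $\log\rho$ do not commute in general. Once this estimate is in place, partitioning the space of $(U,V)$ into the two regimes above and combining the estimates is the technical heart of the proof.
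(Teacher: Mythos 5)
Your proposal is correct, and its overall skeleton (weak duality via Fenchel--Young for both the trace functional calculus and the relative entropy, direct method plus strict concavity for the dual maximizer, first-order conditions at the maximizer giving $\mathrm{P}_1\Gamma^{*,\tau}=e^{-U^{*,\tau}/\tau_1+\log\rho}$, $\mathrm{P}_2\Gamma^{*,\tau}=e^{-V^{*,\tau}/\tau_2+\log\sigma}$ with $\Gamma^{*,\tau}=\psi'[(U^{*,\tau}\oplus V^{*,\tau}-C)/\eps]$, and then substituting back to close the gap and invoke strict convexity of $\varphi$ for uniqueness of the primal minimizer) is exactly the paper's route (optimality conditions for the $(C,\psi,\eps,\tau)$-transform and the characterization of maximizers). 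The genuine difference is in the coercivity of $\dualfu$. The paper first proves $\dualfu\leq\dualf$ via Klein's inequality applied to the exponential, uses the bound $\tau_1\Tr[e^{-U/\tau_1+\log\rho}]\leq M$ to get a lower bound on $\lambda_1(U)$ (and similarly for $V$), and then imports the \emph{upper} spectral bounds from the balanced coercivity result (Proposition~\ref{prop:superlevel_bdd}) applied to the recentered pair. You instead argue directly in two regimes: the exponential penalties control divergence of the smallest eigenvalues, and once those are bounded below, a diverging top eigenvalue of $U\oplus V-C$ makes $\Tr[\psi(\cdot)]$ blow up with nothing to compensate, since in the unbalanced dual the ``gain'' terms are bounded above by $\tau_1\Tr[\rho]+\tau_2\Tr[\sigma]$ (there is no linear term $\Tr[U\rho]+\Tr[V\sigma]$ as in the balanced case). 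Your argument is self-contained, avoids the reduction to the balanced proposition, and only needs $\psi(t)\to+\infty$; the paper's route has the advantage of recycling the balanced estimate and of making explicit the comparison $\dualfu\leq\dualf$, which is reused later for the $\tau\to+\infty$ convergence results. Finally, the step you flag as hardest is easier than you fear: no Golden--Thompson is needed, since $\Tr[e^{B}]\geq e^{\lambda_{d}(B)}$ together with $\lambda_{d}(-U/\tau_1+\log\rho)\geq -\lambda_1(U)/\tau_1+\log\lambda_1(\rho)$ (Weyl, or Peierls with the eigenvector of $U$ at $\lambda_1(U)$) already gives the quantitative bound, and this elementary estimate is precisely the one in the paper's proof of Theorem~\ref{thm:max_exist_unbalanced}.
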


Note that the unbalanced case is characterized by the uniqueness of maximizers in the dual problem, in contrast with the balanced case. 

\subsubsection*{General duality theorem}
The assumptions on $\psi$ are classical: in particular, the strict convexity of $\psi$ ensures the uniqueness (up to the trivial translation) of the maximizers of the dual problem. The smoothness assumption $\psi\in C^1$ ensures that $\varphi$ is strictly convex in its domain, as well as it allows us to write the explicit formula for $\Gamma^*$ as given in \eqref{eq:main_formula_minimizer}. 
On the other hand, the validity of the duality result does not require such regularity but holds true for general convex, superlinear, and bounded from below $\psi$, which automatically follows from the general assumptions on $\varphi$. 
\begin{theo}[Duality for nonsmooth $\psi$]
\label{thm:duality_general_intro}
     Let $\varphi:[0,+\infty) \to [0,+\infty)$ be a convex, superlinear, and bounded from below. Then for every $\tau_1$, $\tau_2 \in [0,+\infty]$, the primal and dual problem coincide, i.e. $\dualu  = \primtau$. If $\rho \in \DM(\cH_1)$ and $\sigma \in \DM(\cH_2)$, the same holds true for the balanced case, i.e. $\dual  = \prim$.
\end{theo}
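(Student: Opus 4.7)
The plan is to reduce Theorem~\ref{thm:duality_general_intro} to the smooth cases already proved in Theorems~\ref{thm:main_duality_balanced} and~\ref{thm:main_duality_unbalanced} by approximating $\varphi$ with strictly convex functions whose Legendre transform is of class $C^1$. A convenient choice is
\[
\varphi_n(x) := \varphi(x) + \frac{x^2}{2n}, \qquad x \geq 0,
\]
which is strictly convex and still satisfies \eqref{eq:assumptions_varphi_intro}, while its Legendre transform $\psi_n := \varphi_n^*$ is the Moreau envelope of $\psi = \varphi^*$ at parameter $1/n$ (that is, $\psi_n(t) = \inf_{s \in \R}\{\psi(s) + \tfrac{n}{2}(t-s)^2\}$), so that $\psi_n \in C^1(\R)$ with $1/n$-Lipschitz derivative. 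One has $\varphi_n \searrow \varphi$ and $\psi_n \nearrow \psi$ pointwise, both convergences being locally uniform by convexity. Applied to $(\varphi_n, \psi_n)$, Theorems~\ref{thm:main_duality_balanced} and~\ref{thm:main_duality_unbalanced} yield strong duality $\primtau_n = \dualu_n$ for every $\tau_1, \tau_2 \in (0, +\infty)$, with dual maximizers $(U_n^*, V_n^*)$ of $\dualfu_n$ and explicit primal minimizer $\Gamma_n^* = \psi_n'(W_n^*)$, where $W_n^* := (U_n^* \oplus V_n^* - C)/\eps$. The balanced case $\tau = +\infty$ is treated analogously via Theorem~\ref{thm:main_duality_balanced}.

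Convergence of primal values is straightforward. Pointwise monotonicity $\primaltau_n \geq \primaltau$ gives $\primtau_n \geq \primtau$. For the reverse, the direct method (lower semicontinuity of $\primaltau$ plus coercivity from the superlinearity of $\varphi$) yields a minimizer $\Gamma^*$ of $\primtau$, and
\[
\primtau_n \leq \primaltau_n(\Gamma^*) = \primaltau(\Gamma^*) + \frac{\eps}{2n}\Tr\bigl[(\Gamma^*)^2\bigr] \longrightarrow \primtau \quad \text{as } n \to \infty.
\]
Hence $\primtau_n \to \primtau$. Weak duality $\dualu \leq \primtau$ is the standard non-commutative Fenchel--Young computation, combined with formula~\eqref{eq:def_Legendre_relative_entropy} for the entropic penalizations (see Appendix~\ref{appendix:primal_dual}).

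To upgrade weak duality to equality, the key identity is
\[
\dualfu(U_n^*, V_n^*) = \dualfu_n(U_n^*, V_n^*) - \eps \Tr\bigl[(\psi - \psi_n)(W_n^*)\bigr] = \primtau_n - R_n, \qquad R_n \geq 0,
\]
valid because the entropic terms in $\dualfu$ and $\dualfu_n$ are identical. It thus suffices to show $R_n \to 0$. Since $\psi - \psi_n \to 0$ locally uniformly on $\R$, this reduces to proving that $\{W_n^*\}$ is bounded in operator norm -- in the balanced case, only modulo the translation gauge $(U, V) \mapsto (U + \lambda \mathrm{Id}_{\cH_1}, V - \lambda \mathrm{Id}_{\cH_2})$, under which $W_n^*$ is invariant. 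Once this is in hand, we conclude $\dualu \geq \dualfu(U_n^*, V_n^*) = \primtau_n - R_n \to \primtau$, and weak duality closes the argument.

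The main obstacle is this uniform-in-$n$ boundedness of the dual maximizers. I would obtain it from the coercivity of $\dualfu_n$ at infinity: the superlinearity of $\psi_n$ at $+\infty$ -- uniform in $n$ because $\psi_n \nearrow \psi$ pointwise, so any quantitative superlinear lower bound for $\psi$ is inherited by all $\psi_n$ past a common threshold -- combined with the strict positivity of $\rho$ and $\sigma$ (entering through the linear trace pairings in the balanced case and through the entropic penalizations $\tau_i \cE(\cdot|\cdot)$ in the unbalanced case) forces $\dualfu_n(U, V) \to -\infty$ as $\|U\| + \|V\| \to \infty$, where in the balanced case one first fixes the gauge by, e.g., normalizing $\lambda_1(U_n^*) = 0$. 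Combined with the uniform lower bound $\dualu_n = \primtau_n \geq \primtau > -\infty$, this yields uniform operator-norm bounds on $(U_n^*, V_n^*)$ after the gauge fixing, and in particular on $W_n^*$, completing the argument.
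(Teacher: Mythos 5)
Your overall architecture (regularize, invoke the smooth duality theorems, pass to the limit using a uniform bound on the dual maximizers) is the same as the paper's, but the reduction step has a genuine gap in the balanced case. You apply Theorem~\ref{thm:main_duality_balanced} (equivalently Corollary~\ref{cor:duality_balanced}) to the pair $(\varphi_n,\psi_n)$ with $\varphi_n=\varphi+x^2/(2n)$, but these results require $\psi_n=\varphi_n^*$ to be \emph{strictly convex} as well as $C^1$, and your approximation does not deliver strict convexity: strict convexity of $\varphi_n$ yields differentiability of its conjugate, while strict convexity of $\psi_n$ would require smoothness of $\varphi_n$, which inherits every kink of $\varphi$. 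Wherever $\varphi$ has a corner, the Moreau envelope $\psi_n$ contains a nondegenerate affine piece; the paper's quadratic example ($\varphi(t)=t^2/2$, $\psi(t)=\tfrac12(t_+)^2$, Example~\ref{rem:optimality_condition_t2_balanced}) is exactly this phenomenon. For the unbalanced problem this is harmless, since Theorem~\ref{thm:characterization_maximizers_unbalanced} and its corollary only need $\psi\in C^1$ (strict concavity comes from the entropic terms), but the paper has no $C^1$-only balanced duality statement: well-posedness of the $(C,\psi,\eps)$-transform, Theorem~\ref{thm:characterization_maximizers} and Corollary~\ref{cor:duality_balanced} all assume strict convexity. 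This is precisely why the paper approximates on the $\psi$ side, setting $\psi_n=(\psi+\tfrac1n\bar\psi)*\rho_n$ with $\bar\psi$ smooth, increasing and strictly convex, so that $\psi_n$ is simultaneously smooth, strictly convex and nondecreasing (hence $\varphi_n=\psi_n^*$ is still $+\infty$ on the negative half-line). To repair your argument you must either switch to such an approximation, or first prove a balanced duality theorem for merely $C^1$ convex $\psi$ (existence of maximizers plus the first-order conditions at a maximizer and the computation 3) $\Rightarrow$ 4) in Theorem~\ref{thm:characterization_maximizers} do not use uniqueness); neither is a mere citation.

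Two further remarks. First, your justification of the uniform coercivity is backwards: since $\psi_n\leq\psi$, a superlinear lower bound for $\psi$ is \emph{not} inherited by $\psi_n$; the correct statement is the monotonicity $\psi_n\geq\psi_1$ for $n\geq 1$, with $\psi_1$ superlinear and $\inf\psi_n=\inf\psi$, which does give coercivity constants independent of $n$. This matters because, unlike in the paper where $\psi_n\geq\psi$ yields $\mathrm{D}^{\eps,\tau,n}\leq\mathrm{D}^{\eps,\tau}\leq\mathrm{D}^{\eps}$ and equi-coercivity (Proposition~\ref{prop:superlevel_bdd_approx}) follows immediately from Proposition~\ref{prop:superlevel_bdd}, with your approximation the comparison goes the wrong way and you must rerun the coercivity proof with constants depending only on $\inf\psi_1$ and the superlinearity of $\psi_1$ — doable, but it should be said. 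Second, granting these repairs, your limit passage is genuinely leaner than the paper's: monotone convergence of the primal values ($\mathfrak{F}^{\ep,\tau,n}\to\primtau$ by testing with a fixed minimizer) and the explicit remainder $R_n=\eps\Tr[(\psi-\psi_n)(W_n^*)]$, controlled by locally uniform convergence on a spectral window fixed by the uniform bound on $(U_n^*,V_n^*)$, replace the paper's $\Gamma$-convergence arguments for both primal and dual functionals and the compactness analysis of the approximate minimizers $\Gamma_n^*$. (A cosmetic slip: the Moreau envelope at parameter $1/n$ has $n$-Lipschitz, not $1/n$-Lipschitz, derivative; this plays no role in the argument.)
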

It would be interesting to investigate the validity of a formula in the same style as \eqref{eq:main_formula_minimizer} for nonsmooth $\psi$, possibly involving the subdifferential of $\psi$. We leave this to future investigations. 

\subsubsection*{Convergence results}
Having at disposal a complete understanding of both the quantum optimal transport problem and its unbalanced relaxation, our final contribution is the study of the limit behavior of both primal and dual problems as the trade-off parameters $\tau_1$, $\tau_2$ goes to $+\infty$. In particular, we show that not only do the unbalanced primal and dual functionals  $\Gamma$-converge to the corresponding ones in the balanced case, but we also provide the convergence of the associated optimizers (up to a suitable renormalization for the dual problem).

\begin{theo}[From unbalanced to balanced optimal transport: convergence result]
\label{thm:main_convergence}
    Under the same assumptions of Theorem~\ref{thm:duality_general_intro}, we have the following convergence results:
    \begin{enumerate}
        \item The functional $\primaltau$  $\Gamma$-converge as $\tau_1,\tau_2 \to \infty$ to the functional 
        \begin{align}
            \Gamma \mapsto 
            \begin{cases}
                \primal(\Gamma)
                    &\text{if } \Gamma \mapsto (\rho,\sigma) 
            \\
                +\infty 
                    &\text{otherwise} 
                        \, .
            \end{cases} 
        \end{align}
        Similarly for the dual problem, we have that $-\emph{D}^{\eps,\tau} $ $\Gamma$-converge to $-\emph{D}^\eps $.
        \item The unique minimizer $\Gamma^{*,\tau} \in \rmH_\geq(\cH_1 \otimes \cH_2)$ of $\primaltau$ converges (up to subsequence) as $\tau_1,\tau_2 \to +\infty$ to a minimizer $\Gamma^* \in \rmH_\geq(\cH_1 \otimes \cH_2)$ of $\primal$ with constraint $\Gamma^*  \mapsto (\rho,\sigma)$. 
         \item The unbalanced optimal transport problems converge as $\tau_1,\tau_2 \to +\infty$  to the balanced optimal transport problem, i.e.
        \begin{align}
        \label{eq:conv_duality_intro}
            \lim_{\tau_1,\tau_2 \to +\infty} 
                \primtau
                    =
                \prim
                    =
                \dual 
                    =
            \lim_{\tau_1,\tau_2 \to +\infty} 
                \dualu 
                    .
        \end{align}
        \item Let $(U^{*,\tau}, V^{*,\tau}) \in \rmH(\cH_1) \times \rmH(\cH_2)$ be the unique maximizer of $\emph{D}^{\eps,\tau} $. Define the recentered potentials 
        \begin{align}
            \big(
                \hat U^{*,\tau}, \hat V^{*,\tau}
            \big) 
                := 
            \big( 
                U^{*,\tau} + \lambda_1(V^{*,\tau})
                    , 
                V^{*,\tau} - \lambda_1(V^{*,\tau})
            \big)
                \in \rmH(\cH_1) \times \rmH(\cH_2)
                    .
        \end{align}
        Then $(\hat U^{*,\tau}, \hat V^{*,\tau})$ converges (up to subsequence) as $\tau_1,\tau_2 \to +\infty$ to a maximizer $(U^*,V^*) \in \rmH(\cH_1) \times \rmH(\cH_2)$ of $\emph{D}^\eps $ which satisfies $\lambda_1(V^*)=0$.
    \end{enumerate}
    Assume additionally that $\psi \in C^1$ and strictly convex. Then the convergence of the minimizers and maximizers in 2. and 4. are true without taking subsequences, and $\Gamma^{*,\tau}$ (resp. $(\hat U^{*,\tau},\hat V^{*,\tau})$) converge to the unique minimizer $\Gamma^*$ (resp. unique maximizer $(U^*,V^*)$ with $\lambda_1(V^*)=0$).
\end{theo}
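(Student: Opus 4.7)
The plan is to prove Theorem~\ref{thm:main_convergence} by combining $\Gamma$-convergence of both the primal and dual functionals with suitable compactness of minimizers and of re-centered maximizers, and then invoking duality (Theorem~\ref{thm:duality_general_intro}) to tie the two sides together. Under the extra smoothness hypothesis, uniqueness from Theorem~\ref{thm:main_duality_balanced} upgrades subsequential convergence to full convergence via the standard Urysohn-subsubsequence trick.

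\textbf{Primal $\Gamma$-convergence and compactness of minimizers.} For the $\Gamma$-$\liminf$, fix $\Gamma^\tau \to \Gamma$. If $\Gamma \mapsto (\rho,\sigma)$, the nonnegativity of $\cE$ and continuity of $\primal$ in finite dimensions give $\liminf \primaltau(\Gamma^\tau) \geq \primal(\Gamma)$. If instead ${\rm P}_i \Gamma \neq \rho_i$ for some $i$, the strict form of Klein's inequality applied to $t \mapsto t\log t - t$ yields $\cE({\rm P}_i\Gamma^\tau|\rho_i) \to \cE({\rm P}_i\Gamma|\rho_i) > 0$; since $\Gamma^\tau$ is bounded (hence so are the $C$- and $\varphi$-terms of $\primaltau(\Gamma^\tau)$, using $\varphi \geq l$ and $\|C\|<\infty$), the factor $\tau_i$ forces $\primaltau(\Gamma^\tau) \to +\infty$. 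For the $\Gamma$-$\limsup$, the constant sequence $\Gamma^\tau \equiv \Gamma$ is a recovery sequence since both entropy penalizations vanish when $\Gamma \mapsto (\rho,\sigma)$. For compactness of $(\Gamma^{*,\tau})$, evaluating $\primtau$ at a feasible test plan (e.g.\ $\rho \otimes \sigma$) gives $\primtau \leq M$ uniformly in $\tau$; by nonnegativity of the entropy penalizations, $\Tr[\varphi(\Gamma^{*,\tau})]$ is uniformly bounded, and the superlinearity of $\varphi$ then bounds the spectrum uniformly, yielding precompactness.

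\textbf{Dual $\Gamma$-convergence and compactness of potentials.} The key pointwise computation is
\begin{align}
    -\tau_1 \Tr\left[ e^{-U/\tau_1 + \log \rho} - \rho \right]
        = -\tau_1 \big( g(\log \rho - U/\tau_1) - g(\log \rho) \big)
        \longrightarrow \Tr[U\rho]
            ,
\end{align}
as $\tau_1 \to \infty$, where $g(B) := \Tr[e^B]$ is smooth with Fr\'echet derivative $\nabla g(\log\rho) = \rho$; the analogous limit holds for the $V$-term. Since $-\dualfu$ is convex and continuous on the finite-dimensional space $\rmH(\cH_1) \times \rmH(\cH_2)$, pointwise convergence automatically promotes to local uniform convergence, hence to $\Gamma$-convergence. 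For compactness of the re-centered maximizers $(\hat U^{*,\tau}, \hat V^{*,\tau})$, the normalization $\lambda_1(\hat V^{*,\tau}) = 0$ yields $\hat V^{*,\tau} \geq 0$, pinning the lower end of its spectrum. A two-sided uniform bound on $\dualu = \dualfu(\hat U^{*,\tau}, \hat V^{*,\tau})$ follows from weak duality (upper bound by $\primtau$, which is bounded) and from the explicit lower bound $\dualu \geq \dualfu(0,0) = -\eps\Tr[\psi(-C/\eps)]$. Combining these with the superlinearity of $\psi$ in $-\eps\Tr[\psi((U \oplus V - C)/\eps)]$ prevents the spectrum of $\hat U^{*,\tau} \oplus \hat V^{*,\tau}$ from escaping to $\pm\infty$, giving uniform spectral bounds on both potentials.

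\textbf{Assembly, uniqueness, and main obstacle.} By Theorem~\ref{thm:duality_general_intro} we have $\primtau = \dualu$ and $\prim = \dual$; the fundamental theorem of $\Gamma$-convergence combined with compactness of minimizers proves $\primtau \to \prim$, which is item (3). Items (2) and (4) are then immediate: any cluster point of $(\Gamma^{*,\tau})$ is a minimizer of $\primal$ subject to the marginal constraint, and any cluster point of $(\hat U^{*,\tau}, \hat V^{*,\tau})$ is a maximizer of $\dualf$, the condition $\lambda_1(V^*) = 0$ passing to the limit by continuity of $\lambda_1$. Under the smoothness hypothesis, Theorem~\ref{thm:main_duality_balanced} gives uniqueness of both limits (modulo the additive trivial direction killed by the recentering), and the Urysohn argument promotes subsequential convergence to full convergence. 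The main technical obstacle is the compactness of the re-centered dual potentials: one must prevent escape along the translation-invariance direction $(U,V) \mapsto (U + \lambda\,\mathrm{Id}, V - \lambda\,\mathrm{Id})$ emerging in the $\tau \to \infty$ limit, which is precisely what the recentering combined with the coercivity from superlinearity of $\psi$ handles.
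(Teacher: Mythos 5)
Your overall strategy coincides with the paper's (primal $\Gamma$-convergence plus equi-coercivity from superlinearity of $\varphi$; the limit of the exponential penalization terms computed as the derivative of $g(B)=\Tr[e^B]$ at $\log\rho$; duality at each finite $\tau$ to transfer the convergence of values; recentering plus coercivity for the dual potentials), and your upgrade from pointwise to locally uniform convergence of the convex functionals $-\emph{D}^{\eps,\tau}$ is a legitimate alternative to the paper's direct Lipschitz estimate. However, there is a genuine flaw in the compactness step for the recentered maximizers: the identity $\dualu=\emph{D}^{\eps,\tau}(\hat U^{*,\tau},\hat V^{*,\tau})$ is false. Unlike $\emph{D}^\eps$, the unbalanced functional $\emph{D}^{\eps,\tau}$ is \emph{not} invariant under $(U,V)\mapsto(U+\lambda\,{\rm Id}_{\cH_1},V-\lambda\,{\rm Id}_{\cH_2})$, because the marginal terms $\tau_i\Tr[e^{-U/\tau_i+\log\rho}]$ react exponentially to the shift; so your two-sided bound holds at $(U^{*,\tau},V^{*,\tau})$ but not at the recentered pair, and a uniform lower bound on $\emph{D}^{\eps,\tau}(\hat U^{*,\tau},\hat V^{*,\tau})$ — which is what your superlinearity/coercivity argument would need — is simply not available. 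This is exactly the point of Remark~\ref{rmk:dualcoercivity}: the family $\{\emph{D}^{\eps,\tau}\}_\tau$ is not equi-coercive, so one cannot extract uniform spectral bounds from its superlevel sets. The same unproved identity is hidden in your claim that item 4 is ``immediate'' once compactness holds: identifying a limit point as a maximizer of $\emph{D}^\eps$ also requires a uniform lower bound on $\emph{D}^\eps$ along the recentered sequence, which your write-up does not supply.

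The repair is to route everything through the balanced functional, as the paper does: by translation invariance of $\emph{D}^\eps$ and the Klein-type inequality \eqref{eq:dualf>dualfu}, one has $\emph{D}^\eps(\hat U^{*,\tau},\hat V^{*,\tau})=\emph{D}^\eps(U^{*,\tau},V^{*,\tau})\geq \emph{D}^{\eps,\tau}(U^{*,\tau},V^{*,\tau})=\dualu=\primtau$, and the right-hand side is bounded below uniformly in $\tau$ (e.g.\ by your bound $\emph{D}^{\eps,\tau}(0,0)\geq -\eps\Tr[\psi(-C/\eps)]-$const, or because $\primtau\to\prim$ by Theorem~\ref{thm:duality_general}). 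Since $\lambda_1(\hat V^{*,\tau})=0$, Proposition~\ref{prop:superlevel_bdd} then yields the uniform bounds and hence pre-compactness; the same chain of inequalities, together with continuity of $\emph{D}^\eps$ and $\prim=\dual$, shows that every limit point is a maximizer with $\lambda_1(V^*)=0$. With this correction (and your remaining arguments, which match the paper's), the proof is complete.
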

%
Finally, the final convergence result we obtain is about the convergence of the $(C,\psi,\eps,
\tau)$-transforms associated with the dual functional of the unbalanced optimal transport problem.

The corresponding transform in the unbalanced case is given by the map $\mathscr{F}_2^{(C,\psi,\eps,\tau)} \colon \rmH(\cH_1)\to \rmH(\cH_2)$  such that
\begin{equation}
\label{eq:U_transform_ubal_intro}
    \mathscr{F}_2^{(C,\psi,\eps,\tau)}(U) 
        =
    \Argmax_{V \in \rmH(\cH_2)}
     \left\{ 
        -  \tau_2 \Tr[e^{-\frac{V}{\tau_2}+\log \sigma}] - \ep \Tr\left[\psi\Big(\frac{U\oplus V -C}{\ep}\Big)\right] 
    \right\}.
\end{equation}
Analogously, we define $\mathscr{F}_1^{(C,\psi,\eps,\tau)} \colon \rmH(\cH_2)\to \rmH(\cH_1)$ as
\begin{equation}
\label{eq:V_transform_ubal_intro}
    \mathscr{F}_1^{(C,\psi,\eps,\tau)}(V) 
        = 
    \Argmax_{U \in \rmH(\cH_1)}
    \left\{ 
        -  \tau_1 \Tr[e^{-\frac{U}{\tau_1}+\log \rho}] - \ep \Tr\left[\psi\Big(\frac{U\oplus V -C}{\ep}\Big)\right] 
    \right\}.
\end{equation}

\begin{theo}[Convergence of $C$-transforms]
    Assume that $\psi$ is strictly convex. Then we have that the associated $(C,\psi,\eps,\tau)$-transforms converge, as the parameter $\tau \to +\infty$. More precisely, one has that, for $i=1,2$, 
\begin{align}
    \mathscr{F}_i^{(C,\psi,\eps,\tau)}(W_\tau)
        \to 
    \mathscr{F}_i^{(C,\psi,\eps)}(W_\infty)
        \qquad 
    \forall \, \, 
        W_\tau \to W_\infty 
            .
\end{align}
\end{theo}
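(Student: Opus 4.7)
By symmetry it suffices to treat $i=2$; the case $i=1$ is identical with the roles of $\cH_1,\cH_2$ exchanged. Denote $V_\tau := \mathscr{F}_2^{(C,\psi,\eps,\tau)}(W_\tau)$ and $V_\infty := \mathscr{F}_2^{(C,\psi,\eps)}(W_\infty)$; both are unique by strict convexity of $\psi$, which forces strict concavity of the respective objectives. Adding the $V$-independent constant $\tau_2\Tr[\sigma]$ does not change the argmax, so one may equivalently work with
\[
\tilde f_\tau(V;W) \;:=\; \tau_2\bigl(\Tr[\sigma]-\Tr[e^{-V/\tau_2+\log\sigma}]\bigr) \;-\; \ep\Tr\bigl[\psi\bigl((W\oplus V-C)/\ep\bigr)\bigr],
\]
whose balanced counterpart is $f_\infty(V;W):=\Tr[V\sigma]-\ep\Tr[\psi((W\oplus V-C)/\ep)]$.

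The central tool is the pointwise inequality $\tilde f_\tau(V;W)\le f_\infty(V;W)$ valid for every $V,W,\tau$. It follows from the convexity of $A\mapsto\Tr[e^A]$, whose subgradient at $\log\sigma$ in direction $H$ is $\Tr[\sigma H]$: one gets $\Tr[e^{\log\sigma-V/\tau_2}] \ge \Tr[\sigma]-\tau_2^{-1}\Tr[\sigma V]$, and multiplying by $\tau_2$ yields the claim. Using the integral representation $\tau_2(\Tr[\sigma]-\Tr[e^{\log\sigma-V/\tau_2}])=\int_0^1\Tr[e^{\log\sigma-sV/\tau_2}V]\,ds$, the left-hand side converges to $\Tr[V\sigma]$ uniformly for $V$ in bounded subsets of $\rmH(\cH_2)$ as $\tau_2\to+\infty$. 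Combined with the joint continuity in $(V,W)$ of the $\psi$-term, this yields $\tilde f_\tau(\,\cdot\,;W_\tau)\to f_\infty(\,\cdot\,;W_\infty)$ uniformly on compact sets.

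For equicoercivity, the trivial bound $\tilde f_\tau(V_\tau;W_\tau)\ge\tilde f_\tau(0;W_\tau)=-\ep\Tr[\psi((W_\tau-C)/\ep)]$ is uniformly bounded below in $\tau$ since $W_\tau$ is bounded. Chaining with the key inequality gives $f_\infty(V_\tau;W_\tau)\ge -K$ for some $K$ independent of $\tau$. The functional $f_\infty(\,\cdot\,;W)$ is coercive in $V$ uniformly for $W$ in bounded sets: in directions of large positive eigenvalues of $V$ the superlinearity of $\psi$ dominates $\Tr[V\sigma]$, while in directions of large negative eigenvalues $\Tr[V\sigma]\to-\infty$ (using $\sigma\in\rmH_>(\cH_2)$, hence $\lambda_1(\sigma)>0$) dominates the bounded-above $-\ep\Tr[\psi(\cdot)]$. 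Hence $\{V_\tau\}$ is uniformly bounded.

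Finally, by compactness any subsequence of $\{V_\tau\}$ admits a further convergent subsequence $V_{\tau_k}\to V^*$. Using uniform convergence on a bounded set containing $\{V_{\tau_k}\}\cup\{V^*\}$ together with the suboptimality $\tilde f_{\tau_k}(V_{\tau_k};W_{\tau_k})\ge\tilde f_{\tau_k}(V;W_{\tau_k})$ for an arbitrary test $V$, passing to the limit yields $f_\infty(V^*;W_\infty)\ge f_\infty(V;W_\infty)$ for every $V$. Strict concavity of $f_\infty(\,\cdot\,;W_\infty)$ then forces $V^*=V_\infty$, and since the limit is independent of the subsequence, the full family converges, $V_\tau\to V_\infty$. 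The main obstacle is equicoercivity: the natural lower bound depends on $\tau$ through the exponential penalty and does not directly control $V_\tau$; the decisive step is the subgradient inequality $\tilde f_\tau\le f_\infty$, which transfers the control to the $\tau$-independent coercive functional $f_\infty$.
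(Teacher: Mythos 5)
Your proof is correct and follows essentially the same route as the paper: the decisive inequality $\dualfu \le \dualf$ (the paper's \eqref{eq:dualf>dualfu}) transfers equi-coercivity from the balanced one-parameter dual functional (Proposition~\ref{prop:superlevel_bdd} together with Remark~\ref{rem:shift}), the marginal penalty terms converge to the linear terms $\Tr[U\rho]$, $\Tr[V\sigma]$ (the paper's \eqref{eq:dual_strong_convergence}), and the maximizers then converge to the unique balanced transform by strict concavity. The only difference is presentational: the paper packages the conclusion via $\Gamma$-convergence plus the fundamental theorem, whereas you carry out the subsequence/limit argument and the uniform convergence of the exponential penalties (via the integral representation) by hand.
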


A similar result could be proved as well without assuming that $\psi$ is strictly convex. Of course, in this case, the definition of $(C,\psi,\eps)$-transform is not well-defined, due to the fact that we don't have a unique maximizer in \eqref{eq:U_transform_bal_intro}, \eqref{eq:V_transform_bal_intro}. Nonetheless, it not hard to prove that unbalanced $C$-transforms will converge (up to subsequence) to a maximizer in \eqref{eq:U_transform_bal_intro} (resp. \eqref{eq:V_transform_bal_intro}).

\begin{rem}[Alternative methods]
One may wonder if it is possible to prove Theorem \ref{thm:main_duality_balanced} only using the results in Theorem \ref{thm:main_duality_unbalanced} by means of an approach via $\Gamma$-convergence. The main point that should be addressed is to prove the last equality in the following expression
\begin{equation*}
     \lim_{\tau_1,\tau_2 \to +\infty} 
                \primtau
                    =
                \prim
                    =
                \dual 
                    =
            \lim_{\tau_1,\tau_2 \to +\infty} 
                \dualu 
\end{equation*}
by simple arguments.
Indeed, we prove in Theorem \ref{thm:main_convergence} that $\dual 
                    =
            \lim_{\tau_1,\tau_2 \to +\infty} 
                \dualu $ as a consequence of the weak duality of balanced problem.
The main challenge is: although the fundamental theorem of $\Gamma$-convergence can be applied on the primal problem, this is not the case for the dual, since $\{D^{\eps,\tau}\}_{\tau}$ are not equi-coercive (Remark \ref{rmk:dualcoercivity}). As a result, we are unable to establish the existence of maximizers for the (balanced) quantum optimal transport using this technique. Instead, the existence of maximizers in the dual formulation of quantum optimal transport is proved directly in Theorem \ref{thm:main_duality_balanced} via the Direct method of Calculus of Variations and a shifting argument.

\section{Quantum optimal transport}
%
\label{sec:balanced_noncommutative}
In this section, we study the convex regularized quantum optimal transport problems, both in its primal and dual formulation. 
We begin by analyzing the dual problem, and discussing the existence of maximizers and their properties. We then use them to prove a duality result, namely the equality between the primal and the dual problem. In the last part of this section, we discuss the notion of $(C,\psi,\eps)$-transform associated with a general regularization $\psi$.

Throughout the whole section,  $\cH_1,~\cH_2$ denotes Hilbert spaces of finite dimensions $d_1$ and $d_2$, respectively. We fix $C \in \rmH(\cH_1 \otimes \cH_2)$ and consider $\rho \in \mkP(\cH_1)$, $\sigma \in \mkP(\cH_2)$ with trivial kernel. 
Recall the definition of the dual functional 
\begin{align}
\label{eq:ddual_balanced}
    \dualf (U,V)=\Tr\left[ U \rho\right] + \Tr\left[ V \sigma\right]-\eps \Tr \left[ \psi \Big(\frac{U\oplus V - C}{\eps} \Big) \right],
\end{align}


where $\psi \in C(\R)$ is a convex, superlinear function at infinity, and bounded from below, i.e. 
\begin{align}
\label{eq:assumptions_psi}
    \lim_{t \to +\infty}
        \frac{\psi(t)}{t}
    = + \infty 
        \tand 
    m:= \inf \psi > - \infty 
        .
\end{align}
\begin{rem}[Properties of superlinear functions]
\label{rem:superlinear_prop1}
    A crucial property that we use later in this section is that if $\psi$ is superlinear, then the function $\R \ni x \mapsto x - \psi(x)$ has superlevels which are bounded from above, namely for every $\alpha \in \R$, 
    \begin{align}
        \left\{
            x \in \R 
                \suchthat 
            x - \psi(x) \geq \alpha 
        \right\}
            \subset 
        \left\{
            x \in \R 
                \suchthat 
            x \leq R_\alpha 
        \right\}
            , 
    \end{align} 
    for some $R_\alpha \in (0,+\infty)$. This directly follows from the fact that for every $R>0$, 
    \begin{align}
        \left\{
            x \in \R 
                \suchthat 
            x - \psi(x) \geq \alpha 
        \right\}  
            \cap 
        \left\{
            x \in \R 
                \suchthat 
            x > R 
        \right\}
            \subset 
        \left\{
            x \in \R 
                \suchthat 
            \frac{\psi(x)}{x} 
                \leq 
            \max 
            \Big\{ 
                1 - \frac\alpha R
                    , 
                1
            \Big\}
        \right\}  
            ,
    \end{align}
    and the fact that the set on the right-hand side is bounded by the definition of superlinearity.\fr
\end{rem}
%
%
\end{rem}

\begin{rem}[Invariance by translation]
\label{rem:symmetry}
Note that the dual functional $\dualf$ enjoys special symmetry properties. In particular, for every $(U,V) \in \rmH(\cH_1) \times \rmH(\cH_2)$ we have that
\begin{align}
    \dualf (U + \lambda{\rm Id}_{\cH_1} , V - \lambda {\rm Id}_{\cH_2} )=\dualf (U, V)
\end{align}
for every $\lambda \in \mathbb{R}$. This in particular shows that $\dualf $ is not coercive. Nonetheless, as shown at the end of this section, this is the unique invariance of the problem, see Remark~\ref{rem:unique_symmetry}. 
\fr
\end{rem}
%
%
\subsection{Existence of maximizers}
\label{sec:max_exist}

First of all, we observe that the maximization problem
\begin{align}
    \sup \{\dualf  (U,V) \suchthat U\in \rmH(\cH_1),\, V\in \rmH(\cH_2) \},
\end{align}
is equivalent, due to the symmetry argument in Remark \ref{rem:symmetry}, to the restricted maximization 
\begin{equation}\label{eq:duar_renormalized}
    \sup \{\dualf  (U,V) \suchthat U\in \rmH(\cH_1),\, V\in \rmH(\cH_2), \, \lambda_1(V)=0 \},
\end{equation}
where recall that $\lambda_1(U)$ denotes the smallest eigenvalue of $U$.

The next result shows that once restricted to the class of operators $(U,V)$ with $\lambda_1(V) = 0$, the dual functional is coercive.

\begin{proposition}[Coercivity]
\label{prop:superlevel_bdd}
For every  $U\in \rmH(\cH_1)$, $V\in \rmH(\cH_2)$ such that 
\begin{align}
    \lambda_1(V)=0 
        \tand  
    \emph{D}^\eps (U,V)\geq M >-\infty 
\end{align}
for some $M\in\R$, there exists a constant $0\leq \mathcal{A}=\mathcal{A}(M) <\infty$ such that $||U||_\infty,||V||_\infty\leq \cA$.
\end{proposition}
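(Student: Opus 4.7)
The plan is to extract from the lower bound $\dualf(U,V)\geq M$ a two-sided control on the spectra of $U$ and $V$, by combining the superlinearity of $\psi$ with the variational characterization of the top eigenvalue of $A := (U \oplus V - C)/\eps$, and then exploiting the hypothesis $\lambda_1(V)=0$ together with the trivial kernel of $\rho$. Set $d = d_1 d_2$ and write $\mu_{\max}(A)$ for the largest eigenvalue of $A$.

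\emph{Upper bound on $\lambda_{d_1}(U)+\|V\|_\infty$.} Because $\rho,\sigma$ are density matrices, $\Tr[U\rho]\leq \lambda_{d_1}(U)$ and $\Tr[V\sigma]\leq\lambda_{d_2}(V)$; since $\lambda_1(V)=0$ forces $V\geq 0$, the latter equals $\|V\|_\infty$. Using $\psi\geq m$ on the $d-1$ remaining eigenvalues,
\begin{align}
\eps\psi(\mu_{\max}(A))+\eps(d-1)m \;\leq\; \eps\Tr[\psi(A)] \;\leq\; \lambda_{d_1}(U)+\|V\|_\infty - M .
\end{align}
On the other hand, testing $A$ against $\xi_1\otimes\xi_2$, where $\xi_1,\xi_2$ are top eigenvectors of $U$ and $V$ respectively, yields the variational lower bound
\begin{align}
\eps\,\mu_{\max}(A) \;\geq\; \lambda_{d_1}(U)+\|V\|_\infty - \|C\|_\infty .
\end{align}
Subtracting, I obtain $\mu_{\max}(A)-\psi(\mu_{\max}(A))\geq -C_0/\eps$ for a constant $C_0=C_0(M,\|C\|_\infty,m,d)$. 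By Remark~\ref{rem:superlinear_prop1}, $\mu_{\max}(A)$ is then bounded above by some $R=R(M)$, which in turn forces
\begin{align}
\lambda_{d_1}(U)+\|V\|_\infty \;\leq\; \eps R+\|C\|_\infty \;=:\; s_{\max}.
\end{align}
Since $\|V\|_\infty\geq 0$, this already gives $\lambda_{d_1}(U)\leq s_{\max}$ and $\|V\|_\infty\leq s_{\max}-\lambda_1(U)$.

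\emph{Lower bound on $\lambda_1(U)$.} From $\dualf(U,V)\geq M$ and $\psi\geq m$ we get $\Tr[U\rho]+\Tr[V\sigma]\geq M+\eps m d$; using $\Tr[V\sigma]\leq\|V\|_\infty\leq s_{\max}$ produces $\Tr[U\rho]\geq M'':=M+\eps m d - s_{\max}$. Writing the spectral decomposition $U=\sum_k\lambda_k(U)\,|\xi_k\rangle\langle\xi_k|$ and letting $p_k:=\langle\xi_k,\rho\xi_k\rangle\geq\lambda_{\min}(\rho)>0$, all $\lambda_k(U)\leq s_{\max}$, so
\begin{align}
M'' \;\leq\; \Tr[U\rho] \;=\; p_1\lambda_1(U)+\sum_{k>1}p_k\lambda_k(U) \;\leq\; p_1\lambda_1(U)+(1-p_1)s_{\max}.
\end{align}
Dividing by $p_1\geq\lambda_{\min}(\rho)>0$ in the unfavourable case $\lambda_1(U)<0$, I obtain a lower bound $\lambda_1(U)\geq -K(M)$. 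Combining with the upper bound $\lambda_{d_1}(U)\leq s_{\max}$ gives $\|U\|_\infty\leq\max(s_{\max},K)$, and then $\|V\|_\infty\leq s_{\max}+K$, concluding the proof.

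The main obstacle is the first step: one must simultaneously use the two directions (upper from superlinearity, lower from the variational principle) for the \emph{same} quantity $\mu_{\max}(A)$, and in a way that eliminates the dependence on $U,V$. The crucial leverage is that $\|V\|_\infty=\lambda_{d_2}(V)$ under the normalization $\lambda_1(V)=0$ — without this, the variational lower bound on $\mu_{\max}(A)$ would involve $\lambda_{d_2}(V)$ rather than $\|V\|_\infty$, and no two-sided bound on $V$ could be extracted afterwards. The second (lower-bound) step is essentially a routine consequence of $\ker\rho=\{0\}$.
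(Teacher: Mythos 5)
Your Step 1 is correct, and it is a nice variant of the paper's argument: instead of applying Peierls' inequality in the eigenbasis of $S=U\oplus V$, you keep only the top eigenvalue of $A=(U\oplus V-C)/\eps$, bound it from below by the variational principle tested on $\xi_1\otimes\xi_2$, and then invoke superlinearity; this yields the same conclusion, namely an upper bound $\lambda_{d_1}(U)+\lambda_{d_2}(V)=\lambda_{d_1}(U)+\|V\|_\infty\le s_{\max}$ on the top of the spectrum of $U\oplus V$.

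The gap is in Step 2, at the claim $\|V\|_\infty\le s_{\max}$. Step 1 only controls the \emph{sum} $\lambda_{d_1}(U)+\|V\|_\infty$; since at that stage you have no lower bound on $\lambda_{d_1}(U)$ (it may well be negative), you cannot deduce $\|V\|_\infty\le s_{\max}$, and the subsequent lower bound on $\Tr[U\rho]$, hence on $\lambda_1(U)$, collapses — the argument becomes circular (the bound on $\|V\|_\infty$ needs the lower bound on $U$, and vice versa). In fact, the inequalities you actually derive, namely $\lambda_{d_1}(U)+\|V\|_\infty\le s_{\max}$, $\Tr[U\rho]+\Tr[V\sigma]\ge M+\eps m d$, $\Tr[V\sigma]\le\|V\|_\infty$ and $\Tr[U\rho]\le p_1\lambda_1(U)+(1-p_1)\lambda_{d_1}(U)$, only yield a bound on the spread $\lambda_{d_1}(U)-\lambda_1(U)$ and on the sum; they are perfectly consistent with $U\approx -T\,{\rm Id}_{\cH_1}$ and $V$ with spectrum ranging from $0$ to about $T$, $T\to+\infty$. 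What rules this out is not $\ker\rho=\{0\}$ alone, but the trivial kernel of $\sigma$ played against the normalization $\lambda_1(V)=0$: since the weight of $\sigma$ on the eigenvector of $V$ with eigenvalue $0$ is at least $\lambda_1(\sigma)>0$, one has the improved bound $\Tr[V\sigma]\le(1-\lambda_1(\sigma))\,\|V\|_\infty$, which combined with $\Tr[U\rho]\le\lambda_{d_1}(U)$ and Step 1 gives $M+\eps m d\le s_{\max}-\lambda_1(\sigma)\|V\|_\infty$, hence $\|V\|_\infty\le(s_{\max}-M-\eps m d)/\lambda_1(\sigma)$; only after this can your Step 2 be run as written. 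This is exactly the mechanism the paper uses through the estimate $\omega_1\ge\lambda_1(\rho\otimes\sigma)>0$ in the eigenbasis of $U\oplus V$, and it is the ingredient your proposal is missing.
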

\begin{proof}
For any given $U\in \rmH(\cH_1)$, $V\in \rmH(\cH_2)$, we denote by $S:= U \oplus V \in\rmH(\cH_1 \otimes \cH_2)$ and write its spectral decomposition as $S= \sum_{k=1}^d \lambda_k(S) |\xi_k \rangle \langle \xi_k |$ where $d= d_1 d_2$. Note that every eigenvalue of $S$ is given by the sum of an eigenvalue of $U$ and one of $V$, hence to show that the $||U||_\infty,~||V||_\infty$ norms are bounded it suffices to provide an upper and a lower bound the spectrum of $S$, since
\begin{align}
    \lambda_1(S) = \lambda_1(U) \leq \lambda_{d_1}(U)
        \ \  \text{and} \ \    
    \lambda_d(S) = \lambda_{d_1}(U)+\lambda_{d_2}(V) 
        \geq 
            \max
            \{ \lambda_{d_1}(U) , \lambda_{d_2}(V) + \lambda_1(S) \} 
        .
\end{align}

Writing the dual functional in terms of $S$, one can apply the Peierl's inequality (e.g., Theorem 2.9 in \cite{carlen2010trace}) for the base $\{\ket{\xi_k}\}_{k=1}^{d}$ and obtain 
\begin{align}
\label{eq:finite_energy_bound}
    M 
        &\leq 
    \dualf (U,V) 
        = 
    \Tr[S (\rho\otimes\sigma)] 
        -
    \eps \Tr\left[\psi\Big(\dfrac{S-C}{\ep}\Big)\right] 
\\
        &\leq  
    \sum_{k=1}^{d} 
    \left(                      
        \bra{\xi_k} S(\rho\otimes\sigma) \ket{\xi_k} 
        -
            \ep \psi
        \Big(
            \bra{\xi_k} \dfrac{S - C}{\eps} \ket{\xi_k}
        \Big)
    \right) 
\\
        &=
    \sum_{k=1}^{d} 
    \left( 
        \lambda_k(S) 
        \bra{\xi_k} \rho\otimes\sigma \ket{\xi_k} 
            -
        \ep \psi
        \Big(
            \dfrac{\lambda_k(S) - \bra{\xi_k} C \ket{\xi_k}}{\eps} 
        \Big)
    \right)
\\
        &= 
    \sum_{k=1}^{d} 
    \left(
        \Lambda_k \omega_k 
            -
        \ep \psi
        \Big(
            \dfrac{\Lambda_k - C_k}{\eps} 
        \Big)
    \right),
\end{align}
where for simplicity we set $\Lambda_k:= \lambda_k(S)$, $\omega_k = \bra{\xi_k} \rho\otimes\sigma \ket{\xi_k}$, and $C_k=\bra{\xi_k} C \ket{\xi_k}$. 

On the one hand, using that $\sum_{k=1}^{d}\omega_k=1$, $\Lambda_1 \leq \Lambda_d$, as well as $\omega_1 \geq \lambda_1(\rho \otimes \sigma)$, we have that
\begin{align*}
    M 
        &\leq
    \Lambda_1 \omega_1 
        + 
    \sum_{k=2}^d 
        \Lambda_k \omega_k 
            -
        \eps 
        \sum_{k=1}^d 
            \psi
            \Big(
                \dfrac{\Lambda_k - C_k}{\eps} 
            \Big) 
    \leq  \Lambda_1 \omega_1 + \Lambda_d ( 1-\omega_1) -\eps d m
\\
    &\leq 
    \Lambda_1 \lambda_1(\rho \otimes \sigma ) + \Lambda_d (1-\lambda_1(\rho \otimes \sigma )) -\ep d m
        , 
\end{align*}
which yields the lower bound $\Lambda_1 \geq \lambda_1(\rho \otimes \sigma )^{-1} \big( M+ \eps d m - \Lambda_d (1-\lambda_1(\rho \otimes \sigma )) \big)$, where we used that $\lambda_1(\rho \otimes \sigma ) > 0$ due to the fact that $\rho$ and $\sigma$ have trivial kernel. 

Note that $\lambda_1(\rho \otimes \sigma) \in (0,1]$, therefore we are left to show that $\Lambda_d$ is bounded from above, which follows by superlinearity of $\psi$. Indeed, from \eqref{eq:finite_energy_bound} we see that
\begin{align}
    M 
        &\leq  
    \Lambda_d  
        - 
    \eps \psi
    \Big(
        \dfrac{ \Lambda_d - C_d}{\eps} 
    \Big) 
        -
    \eps 
    \sum_{k=1}^{d-1} 
        \psi
        \Big(
            \dfrac{\Lambda_k - C_k}{\eps} 
        \Big) 
\\
        &\leq 
    \eps 
    \left(
         \dfrac{ \Lambda_d - C_d}{\eps} 
            - 
        \psi
        \Big(
             \dfrac{ \Lambda_d - C_d}{\eps}
        \Big)
    \right) 
        +
    \| C \|_{\infty} - \ep m (d-1)
        ,
\end{align}
which yields
\begin{align}
         \dfrac{ \Lambda_d - C_d}{\eps} 
            - 
        \psi
        \Big(
             \dfrac{ \Lambda_d - C_d}{\eps}
    \Big) 
        \geq 
    \dfrac{M+\ep m (d_1 d_2 -1) - ||C||_{\infty}}{\ep}
        .
\end{align}
The conclusion the directly follows from the superlinearity of $\psi$ (see in particular Remark~\ref{rem:superlinear_prop1}) and the fact that $|C_d| \leq \| C \|_\infty$. 
\end{proof}

\begin{rem}\label{rem:shift}
    Notice that without loss of generality, due to the symmetry of the functional $\dualf $, the condition $\lambda_1(V)=0$ can be replaced with  $\lambda_1(U)=0$. In fact, the claimed compactness works with any other constraint of the form $\lambda_1(U)\in B$ (or $V$ instead of $U$), for a given bounded set $B \subset \R$. \fr
\end{rem}

\begin{rem}[Dependence of $\cA$ on the cost and dimension]
    Following the proof of the latter Proposition, one can see that $\cA$ depends on $\| C \|_\infty$, and the dimensions $d_1$, $d_2$ of the underlying Hilbert spaces. Nonetheless, it is clear from the proof that the second dependence disappears whenever $\psi\geq 0$, as well as in the limit as $\eps \to 0$, which suggests the possibility of (partially) extending these results to infinite dimensional setups.\fr
\end{rem}


We may now proceed to show the existence of the maximizer for the dual functional \eqref{eq:def_dual_balanced}.
\begin{theo}[Existence of the maximizer]
\label{thm:max_exist}
Let $\varepsilon>0,$ $\cH_1,~\cH_2$ be finite-dimensional Hilbert spaces.
Let $\rho\in \mkP(\cH_1)$ and $\sigma\in \mkP(\cH_2)$ with trivial kernel and let $\psi:\R\to\R$ be a superlinear convex function bounded from below. Then the dual functional $\emph{D}^\eps $ defined in \eqref{eq:def_dual_balanced} admits a maximizer $(U^*,V^*)\in \rmH(\cH_1)\times\rmH(\cH_2)$.
\end{theo}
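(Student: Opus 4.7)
The plan is to apply the direct method of the calculus of variations, exploiting the coercivity that has just been established in Proposition~\ref{prop:superlevel_bdd} together with the continuity of the dual functional and the finite-dimensionality of the ambient space. The whole nontrivial work has essentially been done already in Proposition~\ref{prop:superlevel_bdd}, so the theorem becomes a matter of assembling the pieces.

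First, by the translation invariance recorded in Remark~\ref{rem:symmetry}, the unrestricted supremum of $\dualf$ equals the one over the restricted class $\{(U,V) : \lambda_1(V) = 0\}$ appearing in \eqref{eq:duar_renormalized}. Hence it suffices to exhibit a maximizer within this class. Next, I would note that $\dualf$ is continuous on $\rmH(\cH_1)\times\rmH(\cH_2)$: the linear part $\Tr[U\rho]+\Tr[V\sigma]$ is clearly continuous, and the nonlinear term is the composition of the continuous map $(U,V)\mapsto (U\oplus V-C)/\eps$, the lifting $A\mapsto\psi(A)$ (which is continuous on Hermitian matrices because $\psi\in C(\R)$, by standard functional-calculus arguments), and the trace.

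Now I evaluate at $(0,0)$ to get $\dualf(0,0) = -\eps\Tr[\psi(-C/\eps)] \in \R$, so the supremum over the restricted class is at least this finite quantity. Take a maximizing sequence $(U_n,V_n)$ with $\lambda_1(V_n)=0$; by passing to a tail, we may assume $\dualf(U_n,V_n) \geq \dualf(0,0)$ for every $n$. Proposition~\ref{prop:superlevel_bdd}, applied with $M := \dualf(0,0)$, then provides a constant $\cA < \infty$ with $\|U_n\|_\infty, \|V_n\|_\infty \leq \cA$ for all $n$. Since $\rmH(\cH_1)\times\rmH(\cH_2)$ is finite-dimensional, Bolzano--Weierstrass yields a convergent subsequence $(U_{n_k},V_{n_k}) \to (U^*,V^*)$; continuity of $\lambda_1$ preserves the normalization $\lambda_1(V^*)=0$, and continuity of $\dualf$ upgrades the subsequential limit to $\dualf(U^*,V^*) = \sup \dualf$. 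In particular the supremum is finite and attained, which is what we want.

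The only potentially delicate point outside what Proposition~\ref{prop:superlevel_bdd} gives is knowing that $\sup \dualf < +\infty$ so that the maximizing sequence sits inside one single sublevel set where the coercivity bite; this is not an issue because coercivity combined with continuity already forces $\dualf$ to be bounded above on each superlevel set $\{\dualf \geq M\}\cap\{\lambda_1(V)=0\}$ (it is continuous on the compact closure of this set), so the supremum is automatically finite. No further ingredient is required.
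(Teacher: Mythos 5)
Your proposal is correct and follows essentially the same route as the paper: restrict to $\lambda_1(V)=0$ via the translation invariance of Remark~\ref{rem:symmetry}, apply the coercivity of Proposition~\ref{prop:superlevel_bdd} to a maximizing sequence, extract a convergent subsequence in the finite-dimensional space, and conclude by continuity of $\dualf$ (the paper simply takes ``some $M\in\R$'' as the superlevel threshold, which also sidesteps the inessential point that your tail bound should use, say, $M=\dualf(0,0)-1$ in case the supremum equals $\dualf(0,0)$).
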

\begin{proof}
    The proof follows from the direct method.
    Indeed, take a maximizing sequence $\{(U^n,V^n)\}_{n\geq 1} \subset \rmH(\cH_1)\times\rmH(\cH_2)$ such that  
    $
        \sup 
        \dualf 
            = 
        \lim_{n \to \infty} 
            \dualf (U^n,V^n)
                .
    $
    Thanks to the observation in \eqref{eq:duar_renormalized}, we can assume $\lambda_1(V^n) = 0$. In particular, we then have 
    \begin{align}
        \{(U^n,V^n)\}_{n\geq 1}
            \subset 
        \{(U,V)\in \rmH(\cH_1)\times\rmH(\cH_2) \suchthat \lambda_1(V)=0, \, \dualf (U,V) \geq M \},
    \end{align}
    for some $M \in \R$. Due to Proposition~\ref{prop:superlevel_bdd}, we may conclude that the sequence $\{(U^n,V^n)\}_{n\geq 1}$ is bounded, and we extract a converging subsequence $(U^{n_k},V^{n_k})\xrightarrow[]{k\to\infty} (U^*,V^*)\in \rmH(\cH_1)\times\rmH(\cH_2) $. 
    By the continuity of $\dualf $ \eqref{eq:dualf_continuous}, taking the limit as $k \to \infty$ we obtain that $\dualf (U^*,V^*) = \sup \dualf $, hence a maximizer.
    %
\end{proof}

\subsection{\texorpdfstring{$(C,\psi,\eps)-$transform,}{} properties of maximizers, and duality}
\label{sec:Cepspsi_transform}
%
Recall the setting introduced at the beginning of Section~\ref{sec:balanced_noncommutative}.
We now introduce the notion of the $(C,\psi,\eps)$-transform, in analogy with the regularization given by the Shannon entropy in the quantum case in \cite{FelGerPor23}. 
%
\begin{defin}[$(C,\psi,\eps)$-transform]
Assume additionally that $\psi$ is {\color{black} strictly convex}. We define $\mathscr{F}_2^{(C,\psi,\eps)} \colon \rmH(\cH_1)\to \rmH(\cH_2)$ as
\begin{equation}\label{eq:U_transform_bal}
    \mathscr{F}_2^{(C,\psi,\eps)}(U) 
        =
    \argmax \left\{ \Tr[ V \sigma] - \eps \Tr \left[ \psi \Big(\frac{U\oplus V - C}{\eps} \Big) \right]
        \suchthat 
    V \in \rmH(\cH_2) \right\}.
\end{equation}
Analogously, we define the corresponding $\mathscr{F}_1^{(C,\psi,\eps)} \colon \rmH(\cH_2)\to \rmH(\cH_1)$ as
\begin{equation}
\label{eq:V_transform_bal}
    \mathscr{F}_1^{(C,\psi,\eps)}(V) 
        = 
    \argmax 
    \left\{ 
        \Tr[ U \rho] - \eps \Tr \left[ \psi \Big(\frac{U\oplus V - C}{\eps} \Big) \right] 
            \suchthat 
        U \in \rmH(\cH_1) 
    \right\}.
\end{equation}
\end{defin}
The definition of the transform is well-posed, as the $\argmax$ contains exactly one element. Indeed, due to Remark \ref{rem:shift}, we may provide the proof only for $\mathscr{F}_1^{(C,\psi,\eps)}$. For every given $V\in \rmH(\cH_2)$ we have that
\begin{align*}
    &\argmmax
    \limits_{U\in \rmH(\cH_1)} 
    \left\{ 
        \Tr[ U \rho] - \eps \Tr \left[ \psi \Big(\frac{U\oplus V - C}{\eps} \Big) \right]
    \right\}
        =  
    \argmmax\limits_{U\in \rmH(\cH_1)} 
     \left\{ 
        \dualf (U,V) 
    \right\}
\\
        = 
    &\argmmax\limits_{U\in \rmH(\cH_1)} \left \{ \dualf (U + \lambda_1(V) {\rm Id}_{\cH_1},V - \lambda_1(V) {\rm Id}_{\cH_2}) \right \}.
\end{align*}
In particular, as we are looking for a maximizer, we can restrict the maximization set to 
\begin{equation}\label{eq:set_U_maximize}
    \left \{ U\in \rmH(\cH_1) \suchthat \dualf (U + \lambda_1(V) {\rm Id}_{\H_1},V - \lambda_1(V) {\rm Id}_{\cH_2})\geq  M \right \} ,
\end{equation}
for some $M \in \R$.
Under this condition, we have that $(U + \lambda_1(V) {\rm Id}_{\H_1},V - \lambda_1(V) {\rm Id}_{\cH_2})$ satisfies the assumption of the Proposition~\ref{prop:superlevel_bdd}, and therefore the set \eqref{eq:set_U_maximize} is bounded. Arguing as in Theorem \ref{thm:max_exist}, one can show that a maximizer in \eqref{eq:V_transform_bal} exists.
Uniqueness follows from the strict concavity of $\dualf$, as a consequence of \eqref{eq:dualf_continuous}.
\begin{proposition}[Optimality conditions for the $(C,\psi,\eps)$-transforms]
\label{prop:optimality_conditions} 
    Under the standing assumptions of this section, let additional suppose that $\psi$ is {\color{black} strictly convex and $C^1$}. Then, for every $U \in \cH_1$, its transform $\mathscr{F}_2^{(C,\psi,\eps)}(U)$ is the unique solution $\overline V$ of the equation 
\begin{equation}
\label{eq:optimality_conditions}
    \sigma =\tP_2 \left[ \psi'\Big( \frac{U \oplus \overline{V} - C}{\eps} \Big)\right].
\end{equation}
Analogously, the characterization holds for $\mathscr{F}_1^{(C,\psi,\eps)}(V)$, replacing $\sigma$ with $\rho$ and $\tP_2$ with $\tP_1$. If $\psi$ is only $C^1$ and convex (not necessarily strictly), then transform $\mathscr{F}_2^{(C,\psi,\eps)}(U)$ is not uniquely determined, but any (possibly not unique) maximizer of \eqref{eq:U_transform_bal} satisfies the same optimality condition \eqref{eq:optimality_conditions}.
\end{proposition}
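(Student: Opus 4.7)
The plan is to characterize $\mathscr{F}_2^{(C,\psi,\eps)}(U)$ as the unique critical point of the strictly concave functional
\begin{align}
    G(V) := \Tr[V\sigma] - \eps \Tr\Big[\psi\Big(\frac{U \oplus V - C}{\eps}\Big)\Big], \quad V \in \rmH(\cH_2),
\end{align}
and then to read off the optimality condition \eqref{eq:optimality_conditions} from the vanishing of the Gateaux derivatives of $G$. To set this up, I would first recall that $A \mapsto \Tr[\psi(A)]$ is convex on $\rmH(\cH_1 \otimes \cH_2)$ by a standard consequence of Klein's inequality (see e.g.\ \cite[Theorem~2.11]{carlen2010trace}), and strictly convex whenever $\psi$ is strictly convex. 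Since $V \mapsto U \oplus V$ is affine and injective, the composition inherits the same (strict) convexity, so $G$ is concave, and strictly concave under strict convexity of $\psi$. Existence of a maximizer $\overline V$ of $G$ has already been produced in Section~\ref{sec:max_exist}, so concavity ensures that $\overline V$ is characterized precisely by the vanishing of the Gateaux derivative of $G$ at $\overline V$.

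Next, I would compute that Gateaux derivative. Since $\psi \in C^1(\R)$, the matrix identity
\begin{align}
    \frac{d}{dt} \Tr\big[\psi(A + tB)\big]\Big|_{t=0} = \Tr\big[\psi'(A) B\big], \quad A, B \in \rmH(\cH_1 \otimes \cH_2),
\end{align}
applies; this is derivable from the spectral representation or from Duhamel's formula, and is the only place where the $C^1$ regularity of $\psi$ is used. Applying it with $A = (U \oplus V - C)/\eps$ and $B = ({\rm Id}_{\cH_1} \otimes W)/\eps$ for an arbitrary test direction $W \in \rmH(\cH_2)$ yields
\begin{align}
    \frac{d}{dt} G(V + tW)\Big|_{t=0} = \Tr[W \sigma] - \Tr\Big[\psi'\Big(\frac{U \oplus V - C}{\eps}\Big) ({\rm Id}_{\cH_1} \otimes W)\Big].
\end{align}
Using the defining property \eqref{eq:def_partial_traces} of the partial trace $\tP_2$, the last trace rewrites as $\Tr\big[\tP_2[\psi'((U \oplus V - C)/\eps)]\, W\big]$, so the Gateaux derivative equals $\Tr\big[\big(\sigma - \tP_2[\psi'((U \oplus V - C)/\eps)]\big) W\big]$.

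The conclusion then follows from convex analysis. The vanishing of this derivative for every $W \in \rmH(\cH_2)$ is equivalent to \eqref{eq:optimality_conditions} holding at $\overline V$, and by concavity of $G$ this is both necessary and sufficient for $\overline V$ to maximize $G$; this already establishes \eqref{eq:optimality_conditions} for any maximizer in the merely convex $C^1$ case. Under strict convexity of $\psi$, strict concavity of $G$ forces uniqueness of the maximizer, so \eqref{eq:optimality_conditions} admits a unique solution $\overline V = \mathscr{F}_2^{(C,\psi,\eps)}(U)$. The statement for $\mathscr{F}_1^{(C,\psi,\eps)}(V)$ follows verbatim after exchanging the roles of $(\cH_1, \rho, \tP_1)$ and $(\cH_2, \sigma, \tP_2)$. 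The only genuinely nontrivial step I anticipate is invoking the matrix derivative identity for $\Tr[\psi(\cdot)]$ in a form tailored to our partial-trace structure; the remainder is standard convexity bookkeeping.
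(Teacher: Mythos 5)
Your proposal is correct and follows essentially the same route as the paper: differentiate the concave functional along perturbations $V+tW$, use the trace-derivative identity (the paper's Lemma~\ref{lem:gradient}) together with the defining property of the partial trace to obtain \eqref{eq:optimality_conditions}, and deduce uniqueness from the strict concavity of $V \mapsto \dualf(U,V)$, which as you note comes from the injectivity of the affine map $V \mapsto U \oplus V$ and strict convexity of $\Tr[\psi(\cdot)]$.
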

\begin{proof}
    Set $V:= \mathscr{F}_2^{(C,\psi,\eps)}(U)$ (or in general any maximizer for \eqref{eq:U_transform_bal}). For every $\delta \in \mathbb{R}$ and $A \in \rmH(\cH_2)$, we define the map
    \begin{align}
        f(\delta):=\ep \Tr\left[ \psi \Big(\frac{U \oplus (V+\delta A)-C}{\ep} \Big) \right] - \Tr\left[ (V+\delta A) \sigma\right]
            .
    \end{align}
    By optimality, we know that $f'(0)=0$, and by Lemma~\ref{lem:gradient} this translated into 
    \begin{align}
        \Tr\left[ \psi' \Big(\frac{U \oplus V-C}{\ep} \Big)({\rm Id}\otimes A) \right] - \Tr\left[ A \sigma\right] = 0
            \, , \quad 
        \forall A \in \rmH(\cH_2)
            ,
    \end{align}
   which provides the sought characterization, by the very definition of partial trace $\tP_2$. The second statement follows by arguing similarly. Uniqueness comes directly from the strict concavity of the function $V \mapsto \dualf (U,V)$ whenever $\psi$ is strictly convex.
\end{proof}
\begin{exam}[Quadratic regularization]
\label{rem:optimality_condition_t2_balanced}
A typical choice of regularization which differs from the usual von Neumann entropy is given by the quadratic case $\varphi(t)=\frac12 t^2$, $t \geq 0$. See for example \cite{LorManMey21} for the study of the regularized optimal transport problem with such $\psi$, including numerical methods to seek for maximizers. They are based on the fact that in this case, the associated Legendre transform is given by $\psi(t) = \frac12 (t_+)^2$. 
The optimality conditions therefore read as 
\begin{equation*}
    \sigma 
        =
    \tP_2 
    \left[ 
    \left(
        \frac{U \oplus \mathscr{F}_2^{(C,\psi,\eps)}(U) - C}{\eps} 
    \right)_+
    \right]
        .
\end{equation*}
Note this is an example of $\psi \in C^1$ which is not strictly convex. In particular, $\psi'$ is not globally invertible and an explicit solution cannot be found, and the maximizers of the dual problem are not necessarily determined by $\lambda_1(U)$ (cfr. \cite[Section 2.3]{LorManMey21} for the commutative setting). 
\fr
\end{exam}

\begin{theo}[Equivalent characterizations for maximizers]
\label{thm:characterization_maximizers}
    Under the standing assumptions of this section, let $(U^*, V^*) \in \rmH(\cH_1) \times \rmH(\cH_2)$. Assume that $\psi$ is {\color{black} strictly convex and $C^1$}. Then the following are equivalent:
    \begin{itemize}
        \item[1)] (Maximizers) $(U^*,V^*)$ maximizes $\emph{D}^\eps $, i.e. $\dual  = \emph{D}^\eps (U^*,V^*)$.
        \item[2)] (Maximality condition) $U^* = \mathscr{F}_1^{(C,\psi,\eps)}(V^*)$ and $V^* = \mathscr{F}_2^{(C,\psi,\eps)}(U^*)$.
        \item[3)] (Complementary slackness) Let 
        $ \displaystyle 
        \Gamma^*:=\psi' \left[ \frac{U^* \oplus V^* - C}{\varepsilon} \right]$, then $\Gamma^* \mapsto (\rho,\sigma)$.
        \item[4)] (Duality) There exists $\Gamma^* \in \rmH(\cH_1 \otimes \cH_2)$ so that $\Gamma^* \mapsto (\rho, \sigma)$ and $\primal(\Gamma^*)= \emph{D}^\eps (U^*,V^*)$.
    \end{itemize}
If one (and thus all) condition holds, $\Gamma^*$, as defined in 3) is the unique minimizer to $\prim$. 
\end{theo}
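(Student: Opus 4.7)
My plan is to establish the cyclic chain $1) \Rightarrow 2) \Rightarrow 3) \Rightarrow 4) \Rightarrow 1)$, and then conclude uniqueness from strict convexity. Throughout, write $A^* := (U^* \oplus V^* - C)/\eps \in \rmH(\cH_1 \otimes \cH_2)$ and recall that weak duality $\prim \geq \dual$ is already available from the primal/dual computation.

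For $1) \Rightarrow 2)$: if $(U^*,V^*)$ is a global maximizer of $\dualf$, then with $U^*$ fixed, $V^*$ maximizes $V \mapsto \Tr[V\sigma] - \eps \Tr[\psi((U^* \oplus V - C)/\eps)]$ (the constant $\Tr[U^*\rho]$ playing no role), which by definition is the maximization defining $\mathscr{F}_2^{(C,\psi,\eps)}(U^*)$; strict convexity of $\psi$ makes this maximizer unique, so $V^* = \mathscr{F}_2^{(C,\psi,\eps)}(U^*)$. Exchanging the roles of $U$ and $V$ gives the other identity. For $2) \Rightarrow 3)$: apply Proposition~\ref{prop:optimality_conditions} to each transform identity, which yields $\tP_2 \psi'(A^*) = \sigma$ and $\tP_1 \psi'(A^*) = \rho$, i.e.\ $\Gamma^* = \psi'(A^*) \mapsto (\rho,\sigma)$.

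The key step is $3) \Rightarrow 4)$. Since $\Gamma^* \mapsto (\rho,\sigma)$, the definition of partial traces gives $\Tr[U^*\rho] + \Tr[V^*\sigma] = \Tr[(U^*\oplus V^*)\Gamma^*]$, so
\begin{align}
    \dualf(U^*,V^*) = \Tr[C \Gamma^*] + \Tr[(U^*\oplus V^* - C)\Gamma^*] - \eps\Tr[\psi(A^*)].
\end{align}
Diagonalize $A^* = \sum_k \lambda_k |\xi_k\rangle\langle \xi_k|$; by functional calculus, $\Gamma^* = \sum_k \psi'(\lambda_k) |\xi_k\rangle\langle\xi_k|$ and $\varphi(\Gamma^*) = \sum_k \varphi(\psi'(\lambda_k))|\xi_k\rangle\langle\xi_k|$, so all three operators are simultaneously diagonal in $\{\xi_k\}$. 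The scalar Fenchel--Young equality case, $\lambda_k \psi'(\lambda_k) = \psi(\lambda_k) + \varphi(\psi'(\lambda_k))$ (valid because $\psi'(\lambda_k)$ realises the supremum defining $\psi(\lambda_k)$), then sums to
\begin{align}
    \Tr[A^* \Gamma^*] = \Tr[\psi(A^*)] + \Tr[\varphi(\Gamma^*)],
\end{align}
and substituting back yields $\dualf(U^*,V^*) = \Tr[C\Gamma^*] + \eps \Tr[\varphi(\Gamma^*)] = \primal(\Gamma^*)$, which is 4). I expect this trace identity via simultaneous diagonalization to be the only non-routine point; it is made possible precisely because $\Gamma^*$ is defined as a function of $A^*$.

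For $4) \Rightarrow 1)$: weak duality gives $\primal(\Gamma^*) \geq \prim \geq \dual \geq \dualf(U^*,V^*)$, and the hypothesis that the outer terms coincide forces equality throughout, so $(U^*,V^*)$ maximizes $\dualf$. Finally, since $\psi$ is strictly convex and $C^1$, the Legendre transform $\varphi = \psi^*$ is also strictly convex on $[0,+\infty)$; consequently $\Gamma \mapsto \Tr[\varphi(\Gamma)]$ is strictly convex on $\rmH_\geq(\cH_1\otimes\cH_2)$, making $\primal$ strictly convex and ensuring its (existing) minimizer under the marginal constraint is unique. Combined with the representation 3), this uniquely identifies $\Gamma^*$ and closes the theorem.
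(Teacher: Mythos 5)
Your proof is correct and follows essentially the same route as the paper: the same cyclic chain $1)\Rightarrow 2)\Rightarrow 3)\Rightarrow 4)\Rightarrow 1)$, with $1)\Rightarrow 2)$ via uniqueness of the partial maximizers, $2)\Rightarrow 3)$ via Proposition~\ref{prop:optimality_conditions}, and $4)\Rightarrow 1)$ via weak duality (Remark~\ref{rem:primal_bigger_dual}), plus uniqueness of the primal minimizer from strict convexity of $\varphi$ induced by smoothness of $\psi$. The only cosmetic difference is that in $3)\Rightarrow 4)$ you re-derive the trace Fenchel--Young equality $\Tr[A^*\Gamma^*]=\Tr[\psi(A^*)]+\Tr[\varphi(\Gamma^*)]$ directly by simultaneous diagonalization, whereas the paper invokes its appendix identity \eqref{eq:legendre_equality} (Lemma~\ref{lem:gradient}), which rests on the same scalar equality-case argument.
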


As an immediate corollary of Theorem~\ref{thm:characterization_maximizers} and Theorem~\ref{thm:max_exist}, we obtain the sought duality.
\begin{coro}[Duality for convex regularized QOT]\label{cor:duality_balanced}
    Under the standing assumptions of this section, assume that $\psi = \varphi^*$ is strictly convex and $C^1$. Then $\dual=\prim$. 
\end{coro}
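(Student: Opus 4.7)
The plan is to assemble the two main results of this section. By Theorem~\ref{thm:max_exist}, there exists a maximizer $(U^*,V^*) \in \rmH(\cH_1) \times \rmH(\cH_2)$ of $\dualf$, so that $\dual = \dualf(U^*,V^*)$; the hypotheses of that theorem are met because $\psi = \varphi^*$ inherits superlinearity, continuity, convexity, and boundedness from below from the assumptions \eqref{eq:assumptions_varphi_intro} on $\varphi$. Applying the implication $(1) \Rightarrow (4)$ of Theorem~\ref{thm:characterization_maximizers} to $(U^*,V^*)$ then supplies some $\Gamma^* \in \rmH_\geq(\cH_1 \otimes \cH_2)$ with $\Gamma^* \mapsto (\rho,\sigma)$ and $\primal(\Gamma^*) = \dualf(U^*,V^*) = \dual$. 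Since $\Gamma^*$ is admissible for the primal problem, this immediately yields $\prim \leq \primal(\Gamma^*) = \dual$.

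For the reverse inequality, I establish weak duality directly. Fix any admissible $\Gamma \mapsto (\rho,\sigma)$ and any $(U,V) \in \rmH(\cH_1) \times \rmH(\cH_2)$, and set $B := (U \oplus V - C)/\eps$. Writing the spectral decomposition $B = \sum_k \lambda_k \ket{\xi_k}\bra{\xi_k}$ and invoking Peierls' inequality for the convex function $\varphi$ in the basis $\{\ket{\xi_k}\}$ together with the scalar Fenchel--Young inequality $\varphi(a)+\psi(b) \geq ab$ (valid for $a \geq 0$, $b \in \R$), I obtain
\begin{align}
    \eps \Tr[\varphi(\Gamma)] + \eps \Tr[\psi(B)] \geq \sum_k \Big(\eps \varphi(\bra{\xi_k}\Gamma\ket{\xi_k}) + \eps \psi(\lambda_k)\Big) \geq \eps \Tr[\Gamma B] = \Tr[\Gamma(U \oplus V)] - \Tr[C \Gamma].
\end{align}
Using the partial trace identity $\Tr[\Gamma(U \oplus V)] = \Tr[U\rho] + \Tr[V\sigma]$ from \eqref{eq:def_partial_traces}, this rearranges to $\primal(\Gamma) \geq \dualf(U,V)$. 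Passing to the infimum over admissible $\Gamma$ and to the supremum over $(U,V)$ gives $\prim \geq \dual$, and combining with the previous paragraph concludes the proof.

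The substantive work lies in Theorems~\ref{thm:max_exist} and~\ref{thm:characterization_maximizers}; the corollary itself is an immediate assembly. The only minor technicality is the operator Fenchel--Young step, which is already implicitly prepared by the paper's repeated use of Peierls' and Klein's inequalities (cf.\ Proposition~\ref{prop:superlevel_bdd}); strictly speaking, one could even dispense with a separate weak duality argument and deduce $\dual \leq \prim$ directly from the chain $\dual = \primal(\Gamma^*) \geq \prim$, but the Fenchel--Young route is cleaner and makes the structural convex-duality nature of the identity transparent.
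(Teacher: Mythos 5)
Your proof is correct and follows essentially the paper's route: the corollary is obtained by combining Theorem~\ref{thm:max_exist} with the implication 1) $\Rightarrow$ 4) of Theorem~\ref{thm:characterization_maximizers}, together with weak duality, which the paper records as Remark~\ref{rem:primal_bigger_dual} (via the operator Fenchel--Young inequality \eqref{eq:Fenchel-Young}) and which you re-derive by an equivalent Peierls-plus-scalar-Fenchel--Young computation in the eigenbasis of $(U\oplus V-C)/\eps$. Only your closing aside is off: the chain $\dual=\primal(\Gamma^*)\geq\prim$ yields $\dual\geq\prim$, i.e.\ the same inequality as in your first paragraph, so it cannot replace the weak-duality step $\dual\leq\prim$, which genuinely requires the Fenchel--Young argument you (and the paper) give.
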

\begin{proof}[Proof of Theorem~\ref{thm:characterization_maximizers}]
    We shall show 1) $\Rightarrow$ 2) $\Rightarrow$ 3) $\Rightarrow$ 4) $\Rightarrow$ 1). 

    \smallskip
    \noindent 
    1) $\Rightarrow$ 2). \ 
    By definition of $\mathscr{F}_1^{(C,\psi,\eps)}(V^*)$, we have that $\dualf (\mathscr{F}_1^{(C,\psi,\eps)}(V^*),V^*) \ge \dualf (U^*,V^*) =\dual $. 
    By uniqueness of maximizer of $\dualf (\cdot,V^*)$, we have that $U^*=\mathscr{F}_1^{(C,\psi,\eps)}(V^*)$. A similar argument proves the second part of the statement.

    \smallskip 
    \noindent 
    2) $\Rightarrow$ 3). \ 
   This holds by Proposition \ref{prop:optimality_conditions}.

    \smallskip 
    \noindent 
    3) $\Rightarrow$ 4). \ 
    Denote by simplicity $W^*:=(U^* \oplus V^*- C)/\varepsilon$, and compute
    \begin{align}
    \label{eq:main_computation_equivalence}
        \dualf (U^*,V^*)
            &= 
        \Tr
        \left[
            U^* \rho\right]+\Tr\left[V^* \sigma\right] -\varepsilon \Tr\left[\psi\Big(\frac{U^*\oplus V^*-C}{\varepsilon}\Big)
        \right]
    \\
            &\stackrel{3)}{=} 
        \Tr
        \left[
            (U^*\oplus V^*) \Gamma^*\right] -\varepsilon \Tr\left[\psi\Big(\frac{U^*\oplus V^*-C}{\varepsilon}\Big)
        \right]
    \\
            &=
        \varepsilon
        \Tr
        \left[ 
            W^* \psi'(W^*)-\psi(W^*)\right]+\Tr\left[C \Gamma^*
        \right]
            ,
    \end{align}
    As a consequence of \eqref{eq:legendre_equality}, we have that
    \begin{align}
    \label{eq:application_lemma}
    \Tr
        \left[ 
            W^* \psi'(W^*)-\psi(W^*)
        \right]
            =
        \Tr\left[  \varphi \big(  \psi'(W^*) \big) \right]= \Tr\left[ \varphi(\Gamma^*)\right].
    \end{align}
    Therefore, we continue the computation in \eqref{eq:main_computation_equivalence} using \eqref{eq:application_lemma} and we have
    \begin{align}
        \dualf (U^*,V^*) =\Tr\left[C \Gamma^* \right] +\varepsilon \Tr\left[ \varphi(\Gamma^*)\right]
        = \primal(\Gamma^*).
    \end{align}
    This proves 4). 
    
    \smallskip 
    \noindent 
    4) $\Rightarrow$ 1). \     
    Since $\primal (\Gamma^*) = \dualf (U^*,V^*)\le \prim$ by Remark~\ref{rem:primal_bigger_dual}, we have that $\Gamma^*$ is a minimizer for the primal problem, hence $\dualf (U^*,V^*)= \prim$. 
    Moreover, once again by Remark~\ref{rem:primal_bigger_dual} it holds $\prim \ge \dualf (U,V)$ for every $(U,V) \in \rmH(\cH_1) \times \rmH(\cH_2)$, hence we also deduce that $\dualf (U^*,V^*) \ge \dualf (U,V)$ for every $(U,V) \in \rmH(\cH_1) \times \rmH(\cH_2)$, which precisely shows that $(U^*,V^*)$ is a maximizer.
\end{proof}
\begin{rem}[Maximizers are translation of each other]
    \label{rem:unique_symmetry}
    As we have seen in Remark~\ref{rem:symmetry}, the dual functional is invariant by translations that sum up to zero. Whenever $\psi$ is {\color{black} strictly convex and $C^1$}, we claim that every pair of maximizers can be obtained by translation of one another. More precisely, given $(U_1,V_1)$ , $(U_2,V_2) \in \rmH(\cH_1) \times \rmH(\cH_2)$ two pairs of maximizers, then there must exist $\lambda \in \R$ so that $(U_1,V_1) = (U_2+ \lambda {\rm Id}_{\cH_1}, V_2 - \lambda {\rm Id}_{\cH_1})$.  To see this, we first observe that $\psi$ being smooth implies that $\varphi$ is strictly convex on the interior of its domain, see e.g. \cite[Thm~26.3]{Rockafellar:1970}. In particular, the primal functional admits a unique minimizer, hence thanks to Theorem~\ref{thm:characterization_maximizers}, we deduce that
    \begin{align}
        \psi' 
        \left[ 
            \frac{U_1 \oplus V_1 - C}{\varepsilon} 
        \right]
            =
        \psi' 
        \left[ 
            \frac{U_2 \oplus V_2 - C}{\varepsilon} 
        \right]
            . 
    \end{align}
    By strict convexity, we know that $\psi'$ is injective, hence we infer that 
    \begin{align}
    \label{eq:equality_sum}
            \frac{U_1 \oplus V_1 - C}{\varepsilon} 
            =
            \frac{U_2 \oplus V_2 - C}{\varepsilon} 
            \quad \Rightarrow \quad 
        U_1 \oplus V_1
            =
        U_2 \oplus V_2
            .
    \end{align}
    For simplicity, for $\lambda \in \R$ we here adopt the  shorthand notation $\lambda := \lambda {\rm Id}_{\cH_i}$. By computing the marginals from \eqref{eq:equality_sum}, we see that 
    \begin{align} 
    \begin{cases}
        U_1 d_2 + d_1\Tr(V_1) = U_2 d_2 + d_1 \Tr(V_2) 
            \, ,
    \\
        \Tr(U_1) d_2 + d_1 V_1 = \Tr(U_2) d_2 + d_1 V_2
            \, ,
    \end{cases}   
        \hspace{-2mm} 
        \Rightarrow \, 
    \begin{cases}
        U_1 - U_2 
            = 
        \frac{d_1}{d_2} 
        \big( 
            \Tr(V_2) - \Tr(V_1) 
        \big) 
            =: \lambda_1
            \, ,
    \\
        V_1 - V_2 
            = 
        \frac{d_2}{d_1} 
        \big( 
            \Tr(U_1) - \Tr(U_2) 
        \big)
            =: \lambda_2
            \, .
    \end{cases}
    \end{align}
      The fact that $\lambda_1 = -\lambda_2$ clearly follows by using these relations together with \eqref{eq:equality_sum}, due to the fact that $(\lambda_1 {\rm Id}_{\cH_1}) \oplus (\lambda_2 {\rm Id}_{\cH_2} ) = ( \lambda_1 + \lambda_2 ) {\rm Id}_{\cH_1 \otimes \cH_2} $.\fr
\end{rem}
\subsection{Sinkhorn iterations and their convergence}
In this section, we define the Sinkhorn iterations associated with a general regularization $\psi$ and show it provides a maximizing sequence for the dual functional $\dualf $ defined in \eqref{eq:def_dual_balanced}. Thus, we prove the qualitative convergence of the iteration.
\begin{proposition}\label{prop:sinkhorn}
Under the standing assumptions of the section, assume additionally that {\color{black} $\psi$ is strictly convex and $C^1$}. Let $\mathscr{F}_1^{(C,\psi,\eps)}:\rmH(\cH_2)\to\rmH(\cH_1)$ and $\mathscr{F}_2^{(C,\psi,\eps)}:\rmH(\cH_1)\to\rmH(\cH_2)$ be defined as in \eqref{eq:V_transform_bal} and \eqref{eq:U_transform_bal}, respectively. We define the following algorithm: for $n \in \N$, 

\underline{\emph{Step 0}}: fix any initial $(U^0,V^0)\in \rmH(\cH_1)\times\rmH(\cH_2)$. 

\underline{\emph{Step 2n-1}}: for given $U^{n-1} \in \rmH(\cH_1)$, we define $V^n$ as 
\begin{align}
    V^n := \mathscr{F}_2^{(C,\psi,\eps)}(U^{n-1}) - \lambda_1(\mathscr{F}_2^{(C,\psi,\eps)}(U^{n-1})){\rm Id}_{\cH_2}
        .
\end{align}

\underline{\emph{Step 2n}}: we define $U^n$ as 
\begin{align}
     U^n := \mathscr{F}_1^{(C,\psi,\eps)}(\mathscr{F}_2^{(C,\psi,\eps)}(U^{n-1})) + \lambda_1(\mathscr{F}_2^{(C,\psi,\eps)}(U^{n-1})){\rm Id}_{\cH_1}.
\end{align}
Then we have the convergence of the iteration. More precisely:
\begin{enumerate}
    \item[(a)] There exists a limit point $(U^*,V^*)$ of  $\{(U^n,V^n)\}_{n\geq 0}$, which is a maximizer of $\emph{D}^\eps $.
    \item[(b)] The operator 
    $\displaystyle 
        \Gamma^n := \psi'
        \Big(
            \frac{U^n\oplus V^n - C}{\ep}
        \Big)
    $ belongs to $\mkP(\cH_1 \otimes \cH_2)$, and moreover we have that $\displaystyle \Gamma^n \to \Gamma^* \in \mkP(\cH_1 \otimes \cH_2)$ as $n \to \infty$, where 
    \begin{align}
        \Gamma^* 
            = 
        \psi'
        \Big( 
            \frac{U^* \oplus V^* - C}{\ep}
        \Big)
            \mapsto 
        (\rho,\sigma) 
    \end{align}
    is a minimizer of the primal problem $\prim$.
\end{enumerate}
\end{proposition}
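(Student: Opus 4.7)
The plan is to establish part (a) via monotonicity, compactness, and a fixed-point identification; part (b) will then follow from Theorem~\ref{thm:characterization_maximizers}. A useful preliminary observation is that, since $\Tr(\rho) = \Tr(\sigma) = 1$, one has
\begin{align}
    \dualf (U,V)
        =
    \Tr[(U \oplus V)(\rho \otimes \sigma)]
        -
    \ep \Tr\!\left[\psi\!\left(\tfrac{U \oplus V - C}{\ep}\right)\right]
        ,
\end{align}
so that $\dualf$ depends on $(U,V)$ only through the sum $U \oplus V$. Combined with the fact that the shifts in Steps $2n-1$ and $2n$ cancel in $U^n \oplus V^n$, yielding
\begin{align}
    U^n \oplus V^n
        =
    \mathscr{F}_1^{(C,\psi,\ep)}\!\big( \mathscr{F}_2^{(C,\psi,\ep)}(U^{n-1}) \big)
        \oplus
    \mathscr{F}_2^{(C,\psi,\ep)}(U^{n-1})
        ,
\end{align}
two successive applications of the $\argmax$-defining property of the transforms give the monotonicity $\dualf (U^n, V^n) \geq \dualf (U^{n-1}, \mathscr{F}_2^{(C,\psi,\ep)}(U^{n-1})) \geq \dualf (U^{n-1}, V^{n-1})$. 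Hence $\{\dualf (U^n,V^n)\}_n$ is non-decreasing, bounded above by $\dual$, and converges to some $L \leq \dual$.

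Since $\lambda_1(V^n) = 0$ by construction at each odd step, Proposition~\ref{prop:superlevel_bdd} ensures that $\{(U^n, V^n)\}_n$ is bounded in $\rmH(\cH_1) \times \rmH(\cH_2)$, and I would extract a converging subsequence $(U^{n_k}, V^{n_k}) \to (U^*, V^*)$ with $\lambda_1(V^*) = 0$. The central step is then to show that $(U^*, V^*)$ satisfies the fixed-point condition of item~2) of Theorem~\ref{thm:characterization_maximizers}. Continuity of the $(C, \psi, \ep)$-transforms (which follows from stability of $\argmax$ via the uniform coercivity of Proposition~\ref{prop:superlevel_bdd} coupled with strict concavity) allows passage to the limit along $(U^{n_k+1}, V^{n_k+1})$, producing the limit $(\tilde U^*, \tilde V^*)$ with $\tilde V^* := \mathscr{F}_2^{(C,\psi,\ep)}(U^*)$ and $\tilde U^* := \mathscr{F}_1^{(C,\psi,\ep)}(\tilde V^*)$, together with $\dualf (\tilde U^*, \tilde V^*) = L$. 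The chain
\begin{align}
    L
        =
    \dualf (\tilde U^*, \tilde V^*)
        \geq
    \dualf (U^*, \tilde V^*)
        \geq
    \dualf (U^*, V^*)
        = L
\end{align}
then collapses to equalities, and strict concavity of $V \mapsto \dualf (U^*, V)$ (inherited from strict convexity of $\psi$ via strict convexity of $A \mapsto \Tr[\psi(A)]$ on Hermitian matrices) gives $V^* = \tilde V^* = \mathscr{F}_2^{(C,\psi,\ep)}(U^*)$. An entirely analogous argument in the $U$-variable yields $U^* = \mathscr{F}_1^{(C,\psi,\ep)}(V^*)$, and Theorem~\ref{thm:characterization_maximizers} concludes part (a).

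For part (b), the density-matrix property $\Gamma^n \in \mkP(\cH_1 \otimes \cH_2)$ follows from $\psi' \geq 0$ (since $\psi = \varphi^*$ with $\operatorname{dom}(\varphi) \subset [0,+\infty)$, forcing the Fenchel--Young maximizer defining $\psi'(t)$ to be non-negative), combined with the marginal identity $\tP_1(\Gamma^n) = \rho$ from the optimality condition of $\mathscr{F}_1^{(C,\psi,\ep)}$ at Step $2n$, which gives $\Tr(\Gamma^n) = 1$. Convergence of the \emph{full} sequence $\Gamma^n \to \Gamma^*$, and not merely of a subsequence, then follows from a standard uniqueness-of-limit-point argument: by Remark~\ref{rem:unique_symmetry}, any two maximizers of $\dualf$ differ by a trivial translation $(\lambda{\rm Id}_{\cH_1}, -\lambda{\rm Id}_{\cH_2})$ and therefore produce the same $\Gamma^* = \psi'((U^* \oplus V^* - C)/\ep)$, which by Theorem~\ref{thm:characterization_maximizers} is the unique minimizer of $\prim$; hence every subsequential limit of $\Gamma^n$ must coincide with this common $\Gamma^*$.

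The main obstacle is the fixed-point identification in the second paragraph. It rests simultaneously on the continuity of the $(C,\psi,\ep)$-transforms---which, although not stated explicitly in Section~\ref{sec:Cepspsi_transform}, is a consequence of the uniform coercivity of Proposition~\ref{prop:superlevel_bdd} together with the strict concavity inherited from strict convexity of $\psi$---and on the strict concavity of the partial dual functionals $V \mapsto \dualf (U^*, V)$ and $U \mapsto \dualf (U, V^*)$, which is what forces the subsequential limit to be a joint maximizer and not merely a fixed point of a single $(C,\psi,\ep)$-transform. The bookkeeping of the shifts across odd and even steps, which would otherwise be delicate, is made transparent by the fact that $\dualf$ depends on the pair $(U,V)$ only through $U \oplus V$.
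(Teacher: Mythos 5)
Your proof is correct and follows the same skeleton as the paper's (monotonicity of $\dualf(U^n,V^n)$ from the argmax property, compactness via Proposition~\ref{prop:superlevel_bdd} thanks to the normalization $\lambda_1(V^n)=0$, identification of the limit through item 2) of Theorem~\ref{thm:characterization_maximizers}, and Remark~\ref{rem:unique_symmetry} for part (b)), but the key identification step is carried out by a genuinely different route. You pass to the limit in the iteration by invoking continuity of the maps $\mathscr{F}_i^{(C,\psi,\eps)}$; this is true and provable exactly along the lines you sketch (boundedness of $\mathscr{F}_2^{(C,\psi,\eps)}(U^{n_k})$ follows from Proposition~\ref{prop:superlevel_bdd} after the recentering trick of \eqref{eq:set_U_maximize}, and uniqueness of the argmax, from strict concavity, pins down every limit point), but it is an auxiliary lemma not available off the shelf in Section~\ref{sec:Cepspsi_transform} and should be written out; the closest statement in the paper is the convergence of the unbalanced transforms as $\tau\to+\infty$, proved later via $\Gamma$-convergence. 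The paper's proof avoids this lemma entirely: it runs a one-sided $\limsup$ chain using only the defining inequality $\dualf\big(U^{n_{k-1}},\mathscr{F}_2^{(C,\psi,\eps)}(U^{n_{k-1}})\big)\ge \dualf\big(U^{n_{k-1}},\mathscr{F}_2^{(C,\psi,\eps)}(U^{*})\big)$ together with continuity of $\dualf$, which gives $\dualf(U^*,V^*)\ge \dualf\big(U^*,\mathscr{F}_2^{(C,\psi,\eps)}(U^*)\big)$ and hence $V^*=\mathscr{F}_2^{(C,\psi,\eps)}(U^*)$ directly, so your route buys a cleaner fixed-point picture at the price of an extra continuity argument. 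Two minor points: the subsequence $(U^{n_k+1},V^{n_k+1})$ actually converges to the recentered pair $\big(\tilde U^*+\lambda_1(\tilde V^*){\rm Id}_{\cH_1},\,\tilde V^*-\lambda_1(\tilde V^*){\rm Id}_{\cH_2}\big)$ rather than to $(\tilde U^*,\tilde V^*)$ itself, which is harmless since you only use the shift-invariant value $\dualf(\tilde U^*,\tilde V^*)=L$; and in part (b) your derivation of full convergence of $\Gamma^n$ from the fact that all dual maximizers share the same $U\oplus V$ is a legitimate alternative to the paper's argument, which instead pins the potentials themselves using $\lambda_1(V^n)=0$, and you make explicit the positivity $\psi'\ge 0$ that the paper leaves implicit.
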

\begin{rem}[Sinkhorn iteration]
    The iteration is defined in the following way. We first start with an arbitrary initialization $(U^0,V^0)$. Then, we compute iteratively $(U^n,V^n)$ as follows. Denote by $\tilde{V}^n$ the $(C,\psi,\varepsilon)$-transform of $U^{n-1}$. On one hand, 
    $V^n$ is obtained recentering $\tilde{V}^n$ so that $\lambda_1(V^n)=0$. 
    On the other hand, $U^n$ is obtained by computing the $(C,\psi,\varepsilon)$-transform of $\tilde{V}^n$ and translating it so that $\dualf (U^n,V^n) = \dualf  (\mathscr{F}_1^{(C,\psi,\varepsilon)}(\tilde{V}^n),\tilde{V}^n)$. Intuitively, this procedure enjoys good compactness properties by Proposition~\ref{prop:superlevel_bdd}, by ensuring that $\lambda_1(V^n)=0$, for every $n \in \N$. Finally, the fact that limit points are maximizers for the dual functional is suggested by the properties of the $(C,\psi,\eps)$-transform. Note indeed that by the optimality conditions \eqref{eq:optimality_conditions} 
    \begin{align}
        \tP_1
        \Big[
            \psi'
            \Big(
                \frac{U^n\oplus V^n - C}{\ep}
            \Big)
        \Big]
            = 
        \rho 
            \quad \text{whereas} \quad 
        \tP_2
        \Big[
            \psi'
            \Big(
                \frac{U^n\oplus \tilde V^{n+1} - C}{\ep}
            \Big)
        \Big]
            = 
        \sigma
            , 
    \end{align}
    and compare with 3) in Theorem~\ref{thm:characterization_maximizers}.\fr
\end{rem}
\begin{proof}
    We set $M:= \dualf (U_0,V_0) \in \R$. 

    \smallskip 
    \noindent
    \textit{Proof of} (a). \ 
    See that, by construction, for any $n\geq 0$ we have $\lambda_1(V^n) = 0$ and additionally, due to the invariance by translation (Remark~\eqref{rem:symmetry}), from the definition of $\mathscr{F}_1^{(C,\psi,\eps)}$ and the one of $\mathscr{F}_2^{(C,\psi,\eps)}$, we obtain that
    \begin{align*}
        \dualf (U^n,V^n)  =&  \dualf ( \mathscr{F}_1^{(C,\psi,\eps)}(\mathscr{F}_2^{(C,\psi,\eps)}(U^{n-1})) , \mathscr{F}_2^{(C,\psi,\eps)}(U^{n-1}) ) \\
        \geq &  \dualf ( U^{n-1} , \mathscr{F}_2^{(C,\psi,\eps)}(U^{n-1}) ) \\
        \geq & \dualf ( U^{n-1} , V^{n-1} ) \geq \dots \geq \dualf ( U^{0} , V^{0} ) = M.
    \end{align*}
    We conclude that the sequence $\{(U^n,V^n)\}_{n\geq 0}$ satisfies the assumption of Proposition~\ref{prop:superlevel_bdd}, and therefore is bounded. Thus, there exist an accumulation point $(U^*, V^*)\in \rmH(\cH_1)\times\rmH(\cH_2)$ and a subsequence $(U^{n_k}, V^{n_k})\xrightarrow[]{k\to\infty} (U^*, V^*)$. Now, using the continuity of $\dualf $:
    \begin{align*}
        \dualf (U^*,V^*) =& \limsup_{k\to\infty} \dualf (U^{n_k}, V^{n_k})  \geq \limsup_{k\to\infty} \dualf (U^{n_{k-1}+1}, V^{n_{k-1}+1}) \\
        =& \limsup_{k\to\infty}\dualf ( \mathscr{F}_1^{(C,\psi,\eps)}(\mathscr{F}_2^{(C,\psi,\eps)}(U^{n_{k-1}})) , \mathscr{F}_2^{(C,\psi,\eps)}(U^{n_{k-1}}) ) \\
        \geq & \limsup_{k\to\infty} \dualf ( U^{n_{k-1}} , \mathscr{F}_2^{(C,\psi,\eps)}(U^{n_{k-1}}) ) \\
        \geq & \limsup_{k\to\infty} \dualf ( U^{n_{k-1}} , \mathscr{F}_2^{(C,\psi,\eps)}(U^{*}) ) = \dualf ( U^{*} , \mathscr{F}_2^{(C,\psi,\eps)}(U^{*}) ),
    \end{align*}
    where in the last inequality, we used once again that $\mathscr{F}_2^{(C,\psi,\eps)}(U^{n_{k-1}})$ is a maximizer of $V \mapsto \dualf (U^{n_{k-1}},V)$.
    This, and a similar computation for $U^*$, implies that 
    \begin{equation}\label{eq:limit_transform}
        V^* = \mathscr{F}_2^{(C,\psi,\eps)}(U^{*}), \quad\quad \text{and} \quad\quad U^* = \mathscr{F}_1^{(C,\psi,\eps)}(V^{*}),
    \end{equation} 
    hence $(U^*, V^*)$ is a maximizer due to Theorem~ \ref{thm:characterization_maximizers}. 
    In addition, one observes that $\lambda_1(V^*)= 0$, as if $A_k \to A$ then $\lambda_1(A_k) \to \lambda_1(A)$ as $k \to \infty$. 
    Let $(U^{n_j}, V^{n_j})$ be any converging subsequence of $(U^n, V^n)$ and denote by $(\tilde{U}, \tilde{V})$ their limit point. For the first part of the proof, we obtain that \eqref{eq:limit_transform} also holds for $(\tilde{U}, \tilde{V})$,  thus it is also a maximizer. On the other hand, due to Remark \ref{rem:symmetry}, we must then have $(\tilde{U},\tilde{V}) = (U^*+\lambda \rm{Id}_{\cH_1}, V^*-\lambda \rm{Id}_{\cH_2})$ for some $\lambda \in \R$. However, $\lambda_1(\tilde{V})=0$ as well, and thus $\lambda=0$ or $(\tilde{U},\tilde{V}) = (U^*, V^*)$.



    \smallskip 
    \noindent 
    \textit{Proof of }(b). \
    It follows by the optimality conditions that $\tP_1
        \Big[
            \psi'
            \Big(
                \frac{U^n\oplus V^n - C}{\ep}
            \Big)
        \Big]
            = 
        \rho$, therefore, by taking the trace at both sides we get that $\Tr[\Gamma^n]=1$, thus proving $\Gamma^n \in \mkP(\cH_1 \otimes \cH_2)$. The convergence $\Gamma^n \to \Gamma$ follows from (a) and the fact that $\psi'$ is continuous.
\end{proof}

\section{Unbalanced quantum optimal transport}
\label{sec:unbalanced_noncommutative}
In this section, we turn our attention to the quantum optimal transport problems, both the primal and the dual, in the unbalanced case. 
We begin by analyzing the dual problem, discussing the existence of maximizers.
We work under the same standing assumptions of Section \ref{sec:balanced_noncommutative}, namely throughout the whole section,  $\cH_1,~\cH_2$ denotes Hilbert spaces of finite dimensions $d_1$ and $d_2$, respectively. We fix $C \in \rmH(\cH_1 \otimes \cH_2)$ and consider $\rho \in \rmH_>(\cH_1)$, $\sigma \in \rmH_>(\cH_2)$ (in particular with trivial kernel). 

\subsection{Existence of the maximizers to the unbalanced dual problem }
In this section, we prove the existence of maximizers in the dual problem for the unbalanced case for a general regularization.
We recall the definition of the dual functional $\dualfu $
\begin{equation}
\begin{aligned}
    \dualfu (U,V)
    &= -\tau_1
     \Tr
     \left[ 
        e^{-\frac{U}{\tau_1}+\log \rho}- \rho
    \right] 
        -
    \tau_2
    \Tr
    \left[ 
       e^{-\frac{V}{\tau_2}+\log \sigma}- \sigma
    \right] 
        -\eps 
    \Tr
    \left[ 
        \psi\Big(\frac{U\oplus V-C}{\eps}\Big)
    \right]
        .
\end{aligned}
\end{equation}
In order to show the existence of maximizers, we show directly that the functional $ \dualfu $ is coercive, differently from what happens in the balance case with $ \dualf $ (due to the presence of the symmetries, cfr. Remark~\ref{rem:symmetry}). Nonetheless, we significantly simplify the proof by observing that the unbalanced dual functional $\dualfu$ is in fact always bounded from above by $\dualf$, as we show in the proof below.
\begin{theo}[Existence of the maximizer, unbalanced]
\label{thm:max_exist_unbalanced}
    Let $\cH_1,~\cH_2$ be finite-dimensional Hilbert spaces.
    Let $\rho\in \rmH_>(\cH_1)$ and $\sigma\in \rmH_>(\cH_2)$ with trivial kernel, $\varepsilon, \tau_1$, $\tau_2 >0$ and let $\psi:\R\to\R$ be a superlinear convex function bounded from below. Then the unbalanced dual functional $\emph{D}^{\eps,\tau} $ defined in \eqref{eq:def_dual_unbalanced} admits a (unique) maximizer $(U^*,V^*)\in \rmH(\cH_1)\times\rmH(\cH_2)$.
\end{theo}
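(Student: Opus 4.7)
The plan is to apply the direct method of the calculus of variations: $\dualfu$ is continuous on $\rmH(\cH_1)\times\rmH(\cH_2)$ (since the matrix exponential is smooth and the lifted $\psi$ is continuous), strictly concave (so any maximizer is unique), and, as I plan to prove, coercive, so a unique maximizer exists. Strict concavity is essentially automatic: the maps $U\mapsto \Tr[e^{-U/\tau_1+\log\rho}]$ and $V\mapsto \Tr[e^{-V/\tau_2+\log\sigma}]$ are strictly convex (standard strict convexity of the log-partition / free-energy functional), while the $\psi$-term is jointly concave in $(U,V)$; adding a strictly concave function of $U$, a strictly concave function of $V$, and a jointly concave term yields joint strict concavity of $\dualfu$, hence at most one maximizer.

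The core step is coercivity. The approach leverages the pointwise inequality
\[
    \dualfu(U,V) \leq \Tr[U\rho] + \Tr[V\sigma] - \eps\Tr\Big[\psi\Big(\tfrac{U\oplus V - C}{\eps}\Big)\Big] =: \widetilde{\mathrm{D}}^{\eps}(U,V),
\]
obtained by applying Young's inequality $\cE^*(A\mid\rho) \geq \langle A,\rho\rangle - \cE(\rho\mid\rho)$ with $A=-U/\tau_1$ and using $\cE(\rho\mid\rho)=0$, which gives $-\tau_1\Tr[e^{-U/\tau_1+\log\rho}-\rho] \leq \Tr[U\rho]$, and analogously for the $V$-term. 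The right-hand side has exactly the form of the balanced dual $\dualf$ of Section~\ref{sec:balanced_noncommutative}, and Proposition~\ref{prop:superlevel_bdd} extends with only cosmetic adjustments (using $\Tr[\rho\otimes\sigma]$ in place of $1$) to the present setting $\rho,\sigma\in\rmH_>$.

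Given $(U^n,V^n)$ with $\dualfu(U^n,V^n)\geq M$, I will introduce $\mu_n := \lambda_1(V^n)$ and the shifted pair $\tilde U^n := U^n + \mu_n {\rm Id}_{\cH_1}$, $\tilde V^n := V^n - \mu_n {\rm Id}_{\cH_2}$, so that $\lambda_1(\tilde V^n)=0$. When $\Tr[\rho]=\Tr[\sigma]$ (in particular, when both are density matrices), the shift invariance of $\widetilde{\mathrm{D}}^\eps$ combined with the pointwise bound gives $\widetilde{\mathrm{D}}^\eps(\tilde U^n,\tilde V^n) = \widetilde{\mathrm{D}}^\eps(U^n,V^n) \geq M$, so Proposition~\ref{prop:superlevel_bdd} yields $\|\tilde U^n\|_\infty, \|\tilde V^n\|_\infty \leq \cA$. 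The main obstacle is then controlling $|\mu_n|$, which I do by re-expanding $\dualfu(U^n,V^n) = \dualfu(\tilde U^n - \mu_n {\rm Id}_{\cH_1}, \tilde V^n + \mu_n {\rm Id}_{\cH_2})$ and exploiting that scalar multiples of the identity commute with everything: the exponentials factor as $e^{\pm\mu_n/\tau_i}$, yielding
\[
    \dualfu(U^n,V^n) = -\tau_1 e^{\mu_n/\tau_1}\Tr\big[e^{-\tilde U^n/\tau_1+\log\rho}\big] - \tau_2 e^{-\mu_n/\tau_2}\Tr\big[e^{-\tilde V^n/\tau_2+\log\sigma}\big] + K_n,
\]
with $K_n$ uniformly bounded (since $\tilde U^n,\tilde V^n$ are bounded and $\psi$ evaluated on bounded arguments remains bounded). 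As the two exponential traces are bounded below by strictly positive constants, $\mu_n\to+\infty$ drives the first term to $-\infty$ and $\mu_n\to-\infty$ the second, contradicting $\dualfu\geq M$ in either case. Hence $|\mu_n|$, and therefore $\|(U^n,V^n)\|_\infty$, is bounded. In the fully general $\rho,\sigma\in\rmH_>$ case with $\Tr[\rho]\neq\Tr[\sigma]$ (where $\widetilde{\mathrm{D}}^\eps$ loses shift invariance), I would close the argument by a parallel direct case analysis: $\lambda_1(U^n)\to-\infty$ or $\lambda_1(V^n)\to-\infty$ forces the corresponding marginal exponential term in $\dualfu$ to $-\infty$, while their simultaneous lower boundedness combined with $\|(U^n,V^n)\|_\infty\to\infty$ forces $\lambda_d(U^n\oplus V^n)\to+\infty$, whereupon superlinearity of $\psi$ sends $-\eps\Tr[\psi(\cdot)]\to-\infty$.
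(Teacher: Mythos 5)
Your proof is correct, and it shares the paper's backbone — the direct method, plus the key comparison $\dualfu(U,V)\leq \Tr[U\rho]+\Tr[V\sigma]-\eps\Tr\big[\psi\big(\tfrac{U\oplus V-C}{\eps}\big)\big]$ (your Fenchel--Young step for $\cE^*(\cdot|\rho)$ is the same inequality the paper gets from Klein's inequality), followed by a reuse of the balanced coercivity of Proposition~\ref{prop:superlevel_bdd} — but you close the argument differently. The paper never splits cases: it uses that, after subtracting the constants $\tau_1\Tr[\rho]+\tau_2\Tr[\sigma]$, the three terms $\mathrm{F}_1,\mathrm{F}_2,\mathrm{G}$ are separately nonnegative, so $\mathrm{F}_1(U)\leq M$ and $\mathrm{F}_2(V)\leq M$ immediately give lower bounds $\lambda_1(U)\geq -\tau_1\log\big(M/(\lambda_1(\rho)\tau_1)\big)$ and similarly for $V$, and these lower bounds combine with the two-sided bounds on the shifted pair from Proposition~\ref{prop:superlevel_bdd} to bound everything, with no assumption relating $\Tr[\rho]$ and $\Tr[\sigma]$. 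Your equal-trace argument (bounding the shift $\mu_n$ by factoring $e^{\pm\mu_n/\tau_i}$ out of the penalization terms and using that the remaining exponential traces are bounded below by positive constants) is a valid alternative, and your fallback for $\Tr[\rho]\neq\Tr[\sigma]$ is also sound — in fact it is the stronger of your two arguments: it is self-contained (it does not even need Proposition~\ref{prop:superlevel_bdd}, since with $\lambda_1(U^n),\lambda_1(V^n)$ bounded below and no positive linear terms present, superlinearity of $\psi$ alone kills any blow-up of the top eigenvalue), it covers the equal-trace case too, and it would let you drop the case distinction entirely. Two further points in your favor: you correctly flag that Proposition~\ref{prop:superlevel_bdd} is stated for density matrices and needs the cosmetic replacement of $1$ by $\Tr[\rho\otimes\sigma]$ (a point the paper passes over silently when applying it in the unbalanced setting), and your strict-concavity argument for uniqueness (strict convexity of $A\mapsto\Tr[e^{A}]$ composed with the injective affine maps, plus joint concavity of the $\psi$-term) is exactly the mechanism the paper invokes, though it does so only after the proof of the theorem.
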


\begin{proof}
    First of all, we observe that the functional $ \dualfu $ is continuous over $\rmH(\cH_1) \times \rmH(\cH_2)$, arguing in the same way as in \eqref{eq:dualf_continuous}. Therefore, the existence of a maximizer follows by showing that $ \dualfu $ is coercive, i.e. it has bounded superlevel sets. Moreover, without loss of generality, we can assume that $\psi \geq 0$.
    
    We introduce the following notation: for $(U,V) \in \rmH(\cH_1) \times \rmH(\cH_2)$, we set 
    \begin{align}
    \label{eq:def_decomposition_terms}
    \begin{gathered}
        \text{F}_1(U) = \tau_1 \Tr[e^{-\frac{U}{\tau_1}+\log \rho}] \geq 0
            ,   \qquad
        \text{F}_2(V) = \tau_2 \Tr[e^{-\frac{V}{\tau_2}+\log \sigma}]\geq 0
            ,
    \\[6pt]
        \text{G}(U,V) = \ep \Tr\left[\psi\Big(\frac{U\oplus V -C}{\ep}\Big)\right] \geq 0
            ,
    \end{gathered}
    \end{align}
    so that by construction, for every $(U,V) \in \rmH(\cH_1) \times \rmH(\cH_2)$ one has 
    \begin{align}
    \label{eq:decomposition_unbalanced}
        \dualfu (U,V) 
            = 
        -\text{F}_1(U) - \text{F}_2(V) - \text{G}(U,V)
            + \tau_1 \Tr[\rho] + \tau_2 \Tr[\sigma] 
        .
    \end{align}
    We start by claiming that 
    \begin{align}
    \label{eq:dualf>dualfu}
         \dualf (U,V)  \geq   \dualfu (U,V) 
            \, , \quad \text{for all } (U,V) \in \rmH(\cH_1) \times \rmH(\cH_2)
            \, , \tau_1, \tau_2 \in \R_+ 
                .
    \end{align}
    This is consequence of Klein's inequality \cite[Thm.~2.11]{carlen2010trace} applied to the exponential function, which implies (for F$_1$, and similarly for F$_2$)
    \begin{align}
     \text{F}_1(U)
        - \tau_1 \Tr[\rho]
        =
    \tau_1
     \Tr
     \left[ 
        e^{-\frac{U}{\tau_1}+\log \rho}- e^{\log \rho}
    \right]
        \geq 
    \tau_1
     \Tr
     \left[ 
       e^{\log \rho} 
       \Big(
            - \frac{U}{\tau_1}
       \Big)
    \right]
        = 
    - \Tr [ U \rho ] 
        , 
    \end{align}
    thus providing the claimed lower bound \eqref{eq:dualf>dualfu}.

    Now, we want to show that, given $M_0 \in \mathbb{R}$, the set
    \begin{equation}
        \mathscr{S}_{M_0}:= \{ (U,V) \in \rmH(\cH_1) \times \rmH(\cH_2): \, \dualfu (U,V) \ge - M_0 \}
    \end{equation}
    is bounded. In particular, up to a translation that depends only on the marginals, there exists $M\in \R$ so that $\mathscr{S}_{M_0}$ is a subset of 
    \begin{align}
       \{ (U,V) \in \rmH(\cH_1) \times \rmH(\cH_2): \, \text{F}_1(U)+\text{F}_2(V)+\text{G}(U,V) \leq M \}.
    \end{align}
    For $(U,V) \in \mathscr{S}_M$, we have that $0\leq \text{F}_1(U)+\text{F}_2(V) + \text{G}(U,V) \le M$, which, by the positivity of all the terms, gives that $\text{F}_1(U) \le M$, $\text{F}_2(V) \le M$, $\text{G}(U,V) \le M$. 

    We define $\hat{U}:=-\frac{U}{\tau_1}+\log \rho$.
    We write the spectral decomposition $\hat{U} =\sum_i \lambda_i\ket{\xi_i} \bra{\xi_i}$, where $\lambda_i := \lambda_i(\hat{U})$. Therefore
    \begin{align}
        M 
            \geq
        \text{F}_1(U) 
            = 
        \tau_1 \Tr[e^{\hat{U}}] 
            = 
        \tau_1 \sum_i e^{\lambda_i}  
            \geq \tau_1 e^{\lambda_d}
            \quad\Rightarrow
            \quad
        \lambda_{d_1} \le \log\Big(\frac{M}{\tau_1}\Big).
    \end{align}
     Since $U=-\tau_1 \hat{U}+\tau_1\log\rho$, we have $\lambda_1(U)\ge\lambda_1(-\tau_1 \hat{U})+\lambda_1(\tau_1\log\rho)=-\tau_1\lambda_{d_1}(\hat{U})+\tau_1\log(\lambda_1(\rho))$.
     Combining the previous inequalities, we conclude that 
    \begin{equation}
    \label{eq:U_1_lower}
        \lambda_1(U) 
            \geq 
        -\tau_1 \log \Big( \frac{M}{\lambda_1(\rho)\,\tau_1} \Big)
            =: \cA_1 > -\infty 
                .
    \end{equation}
    Arguing analogously for $V$ and $\text{F}_2(V)$, we also obtain that
    \begin{equation}
    \label{eq:V_1_lower}
        \lambda_1(V) 
            \geq 
        -\tau_2 \log \Big( \frac{M}{\lambda_1(\sigma)\,\tau_2} \Big)
            =: \cA_2 > -\infty 
                .
    \end{equation}
    In order to obtain the sought upper bound on the spectrum of $U$ and $V$, we take advantage of \eqref{eq:dualf>dualfu}, which implies $\dualf (U,V) \geq M_0$. In particular, we apply Proposition~\ref{prop:superlevel_bdd} and we get $||U+\lambda_1(V)\rm{Id}_{H_1}|| \leq \cA$ and $||V-\lambda_1(V)\rm{Id}_{H_2}|| \leq \cA$, hence in particular for all $i=1,\dots,d_1$ and $j=1,\dots,d_2$, the two-sided bounds
    \begin{align}
        -\cA 
            \leq 
        \lambda_i(U) +\lambda_1(V) 
            \leq 
        \cA
            , \quad \tand \quad 
        0
            \leq 
        \lambda_j(V) -\lambda_1(V) 
            \leq 
        \cA
            .
    \end{align}
    Along with \eqref{eq:U_1_lower} and \eqref{eq:V_1_lower} one easily obtains 
    \begin{align}
        \cA_1
            \leq 
        \lambda_i(U)
            \leq 
        \cA - \cA_2
            , \quad \tand \quad 
        \cA_2 
            \leq 
        \lambda_j(V)
            \leq 
        2\cA - \cA_1
            ,
    \end{align}
    which yields $\mathscr{S}_{M_0}$ is bounded, thus the sought coercivity property for $\dualfu $.
\end{proof}

\subsection{The \texorpdfstring{$(C,\psi,\eps,\tau)$}{}-transform and its properties}
%
Recall the setting introduced at the beginning of Section~\ref{sec:unbalanced_noncommutative}.
Similarly as done in Section~\ref{sec:Cepspsi_transform}, we now introduce the corresponding notion of $(C,\psi,\eps)$-transform associated to $\dualfu $. Note that in this case, we do not require $\psi$ to be strictly convex, due to the presence of the exponential (which is strictly convex) in the marginal penalization terms. In particular, the existence of the maximizer in the definition below follows from the concavity and continuity of $\dualfu $, together with the coercivity proved within the proof of Theorem~\ref{thm:max_exist_unbalanced}, whereas uniqueness comes exactly from the strict concavity of $\dualfu $. 

For simplicity of notation, we set $\tau := (\tau_1,\tau_2)$. Recall the decomposition as given in \eqref{eq:decomposition_unbalanced}, in particular the definitions of F$_1$, F$_2$, and G in \eqref{eq:def_decomposition_terms}.

%
\begin{defin}[$(C,\psi,\eps,\tau)$-transform]
We define $\mathscr{F}_2^{(C,\psi,\eps,\tau)} \colon \rmH(\cH_1)\to \rmH(\cH_2)$ as
\begin{equation}\label{eq:U_transform_ubal}
    \mathscr{F}_2^{(C,\psi,\eps,\tau)}(U) 
        =
    \argmax 
    \left\{ 
        - \text{F}_2(V) - \text{G}(U,V)
            \suchthat 
        V \in \rmH(\cH_2) 
    \right\}.
\end{equation}
Analogously, we define the corresponding $\mathscr{F}_1^{(C,\psi,\eps,\tau)} \colon \rmH(\cH_2)\to \rmH(\cH_1)$ as
\begin{equation}
\label{eq:V_transform_ubal}
    \mathscr{F}_1^{(C,\psi,\eps,\tau)}(V) 
        = 
    \argmax 
    \left\{ 
        - \text{F}_1(U) - \text{G}(U,V)
            \suchthat 
        U \in \rmH(\cH_2)
    \right\}.
\end{equation}
\end{defin}
The definition of the transform is well-posed, as the $\argmax$ contains exactly one element, as mentioned above. The next step is to characterize the optimality condition in the same spirit as in Proposition~\ref{prop:optimality_conditions}.
\begin{proposition}[Optimality conditions for the $(C,\psi,\eps,\tau)$-transforms]
\label{prop:optimality_conditions_unbalanced} 
    Under the standing assumptions of this section, assume additionally that $\psi \in C^1$. Then, for every $U \in \rmH(\cH_1)$, its transform $\mathscr{F}_2^{(C,\psi,\eps,\tau)}(U)$ is the unique solution $\overline V$ of the equation 
\begin{equation}
\label{eq:optimality_conditions_unbalanced}
    e^{-\frac{\overline V}{\tau_2}+\log \sigma} =\tP_2 \left[ \psi'\Big( \frac{U \oplus \overline{V} - C}{\eps} \Big)\right].
\end{equation}
Analogously, the characterization holds for $\mathscr{F}_1^{(C,\psi,\eps,\tau)}(V)$, replacing $\sigma$ with $\rho$, $\tP_2$ with $\tP_1$.
\end{proposition}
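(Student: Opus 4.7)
The plan is to proceed in close analogy with the proof of Proposition \ref{prop:optimality_conditions}, with the only new ingredient being the computation of the Gâteaux derivative of the exponential term $\text{F}_2$. Specifically, I would fix $U \in \rmH(\cH_1)$, denote by $\overline V$ a maximizer in \eqref{eq:U_transform_ubal} (which exists and is unique by the coercivity established in the proof of Theorem~\ref{thm:max_exist_unbalanced} together with the strict concavity of the functional due to the strict convexity of the exponential), and impose the first-order optimality condition in every Hermitian direction $A \in \rmH(\cH_2)$. That is, I consider
\begin{align}
    h(\delta) := - \text{F}_2(\overline V + \delta A) - \text{G}(U, \overline V + \delta A),
\end{align}
which is smooth in $\delta$, and use that $h'(0) = 0$ for every $A$.

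For the G-term, the derivative is handled exactly as in Proposition~\ref{prop:optimality_conditions}, via Lemma~\ref{lem:gradient} (the gradient lemma used earlier in the paper), yielding
\begin{align}
    \frac{d}{d \delta} \text{G}(U, \overline V + \delta A) \Big|_{\delta = 0} = \Tr\!\left[ \psi'\!\Big(\tfrac{U \oplus \overline V - C}{\eps}\Big) ({\rm Id}_{\cH_1} \otimes A) \right] = \Tr\!\left[ \tP_2 \!\left[\psi'\!\Big(\tfrac{U \oplus \overline V - C}{\eps}\Big)\right] A \right],
\end{align}
where the second equality is the defining property \eqref{eq:def_partial_traces} of the partial trace $\tP_2$. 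For the F$_2$-term, I would use the standard identity $\frac{d}{d\delta} \Tr[e^{X(\delta)}]\big|_{\delta=0} = \Tr[e^{X(0)} X'(0)]$ (valid for traces of matrix exponentials despite non-commutativity, because Duhamel's formula combined with cyclicity collapses the integrand to a single term), applied to $X(\delta) = -(\overline V + \delta A)/\tau_2 + \log \sigma$. This gives
\begin{align}
    \frac{d}{d \delta} \text{F}_2(\overline V + \delta A) \Big|_{\delta = 0} = - \Tr\!\left[ e^{-\overline V/\tau_2 + \log \sigma} \, A \right].
\end{align}

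Combining these two computations, the condition $h'(0) = 0$ for every $A \in \rmH(\cH_2)$ reads
\begin{align}
    \Tr\!\left[ \left( e^{-\overline V/\tau_2 + \log \sigma} - \tP_2\!\left[\psi'\!\Big(\tfrac{U \oplus \overline V - C}{\eps}\Big)\right] \right) A \right] = 0 \qquad \forall A \in \rmH(\cH_2),
\end{align}
and since the operator between the two parenthesis is itself Hermitian, the Hilbert--Schmidt duality on $\rmH(\cH_2)$ forces it to vanish, producing exactly \eqref{eq:optimality_conditions_unbalanced}. Uniqueness of the solution among all $\overline V \in \rmH(\cH_2)$ follows because the map $V \mapsto -\text{F}_2(V) - \text{G}(U,V)$ is strictly concave (strict convexity of the scalar exponential is inherited through Klein-type arguments at the trace level), so the first-order condition has at most one solution. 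The analogous statement for $\mathscr{F}_1^{(C,\psi,\eps,\tau)}$ is identical after swapping the roles of the two factors.

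The main technical point to watch is the differentiation of the exponential trace, since $\overline V$ and $\log \sigma$ need not commute; however, this is a classical fact and no new difficulty arises here. The rest of the argument is a direct transcription of the balanced case, with the marginal $\sigma$ replaced by the ``soft marginal'' $e^{-\overline V/\tau_2 + \log \sigma}$, which is precisely the quantity that appears in the unbalanced setup.
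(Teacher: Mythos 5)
Your proposal is correct and follows essentially the same route as the paper: a first-order perturbation $\delta \mapsto -\text{F}_2(\overline V+\delta A)-\text{G}(U,\overline V+\delta A)$ in an arbitrary Hermitian direction $A$, the gradient formula of Lemma~\ref{lem:gradient} (your Duhamel/cyclicity justification is just the standard proof of that lemma for the exponential), the definition of the partial trace to rewrite the $\text{G}$-term, and uniqueness from strict concavity of $V \mapsto \text{D}^{\eps,\tau}(U,V)$. No gaps.
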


\begin{proof}
    Set $V:= \mathscr{F}_2^{(C,\psi,\eps,\tau)}(U)$. For every $\delta \in \mathbb{R}$ and $A \in \rmH(\cH_2)$, arguing as in the proof of Proposition~\ref{prop:optimality_conditions}, from the fact that the map $\delta \mapsto - \text{F}_2(V+\delta A) - \text{G}(U,V+\delta A)$ has vanishing derivative in $\delta =0$, we obtain that 
    \begin{align}
        \Tr
        \left[ 
            e^{-\frac{V}{\tau_2} + \log \sigma} A
        \right]
            -
        \Tr
        \left[ 
            \psi' \Big(\frac{U \oplus V-C}{\ep} \Big)({\rm Id}\otimes A) 
        \right] 
            = 
        0
            \, , \quad 
        \forall A \in \rmH(\cH_2)
            ,
    \end{align}
   which provides the sought \eqref{eq:optimality_conditions_unbalanced}, by the very definition of partial trace $\tP_2$. The second statement follows by arguing similarly. Uniqueness comes directly from the strict concavity of the function $V \mapsto \dualfu (U,V)$.
\end{proof}
%
%

Recall the definition of the primal functional \eqref{eq:def_primal_unbalanced} and problem \eqref{eq:def_primal_problem_unbalanced} in the unbalanced case.

\begin{theo}[Equivalent characterizations for maximizers, unbalanced]
\label{thm:characterization_maximizers_unbalanced}
    Under the standing assumptions of this section, assume $\psi \in C^1$. Let $(U^*, V^*) \in \rmH(\cH_1) \times \rmH(\cH_2)$. Then the following are equivalent:
    \begin{itemize}
        \item[1)] (Maximizers) $(U^*,V^*)$ maximizes $\emph{D}^{\eps,\tau} $, i.e. $\dualu  = \emph{D}^{\eps,\tau} (U^*,V^*)$.
        \item[2)] (Maximality condition) $U^* = \mathscr{F}_1^{(C,\psi,\eps,\tau)}(V^*)$ and $V^* = \mathscr{F}_2^{(C,\psi,\eps,\tau)}(U^*)$.
        \item[3)] (Duality) There exists $\Gamma^* \in \rmH_\geq(\cH_1 \otimes \cH_2)$ so that $\primaltau(\Gamma^*)= \emph{D}^{\eps,\tau} (U^*,V^*)$.
    \end{itemize}
If one (and thus all) condition holds, the unique $\Gamma^*$ which satisfies 3) is given by 
\begin{align}
\label{eq:Gamma_star}
    \Gamma^*:=\psi' \left[ \frac{U^* \oplus V^* - C}{\varepsilon} \right]
        ,
\end{align}
and it is the unique minimizer to $\primtau$. 
\end{theo}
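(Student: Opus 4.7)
The plan is to follow the cyclic scheme used in the proof of Theorem~\ref{thm:characterization_maximizers} for the balanced case, namely to establish $1) \Rightarrow 2) \Rightarrow 3) \Rightarrow 1)$ and then address uniqueness separately. The technical infrastructure is essentially the same as in the balanced case, with the marginal constraint replaced by the entropic penalization.

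For $1) \Rightarrow 2)$, if $(U^*,V^*)$ globally maximizes $\dualfu$ then $V^*$ in particular maximizes $V \mapsto \dualfu(U^*,V)$, which up to an additive constant independent of $V$ coincides with $-\mathrm{F}_2(V) - \mathrm{G}(U^*,V)$; by the well-posedness and uniqueness of the $(C,\psi,\eps,\tau)$-transform (strict concavity is guaranteed by the exponential term $\mathrm{F}_2$), this forces $V^* = \mathscr{F}_2^{(C,\psi,\eps,\tau)}(U^*)$, and symmetrically for $U^*$. For $3) \Rightarrow 1)$, I invoke weak duality $\primaltau(\Gamma) \geq \dualfu(U,V)$ for arbitrary admissible arguments (valid by Fenchel--Young applied termwise, see Appendix~\ref{appendix:primal_dual}), which turns the equality in $3)$ into a sandwich and identifies $(U^*,V^*)$ as a maximizer and $\Gamma^*$ as a minimizer simultaneously.

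The core of the argument is $2) \Rightarrow 3)$. Setting $W^* := (U^* \oplus V^* - C)/\eps$ and $\Gamma^* := \psi'(W^*) \in \rmH_\geq(\cH_1 \otimes \cH_2)$, the optimality conditions of Proposition~\ref{prop:optimality_conditions_unbalanced} yield
\begin{align}
    \rho^* := \tP_1 \Gamma^* = e^{-U^*/\tau_1 + \log \rho}, \qquad \sigma^* := \tP_2 \Gamma^* = e^{-V^*/\tau_2 + \log \sigma}.
\end{align}
These identities give $\log \rho^* - \log \rho = -U^*/\tau_1$ and $\log \sigma^* - \log \sigma = -V^*/\tau_2$, so the relative entropies simplify algebraically to
\begin{align}
    \tau_1 \cE(\rho^* | \rho) = -\Tr[U^* \rho^*] - \tau_1 \Tr[\rho^*] + \tau_1 \Tr[\rho],
\end{align}
and likewise for $\tau_2 \cE(\sigma^* | \sigma)$. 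Combining this with the Legendre-type identity $\varphi(\psi'(W^*)) = W^* \psi'(W^*) - \psi(W^*)$ already exploited in \eqref{eq:application_lemma} gives $\eps \Tr[\varphi(\Gamma^*)] = \Tr[U^* \rho^*] + \Tr[V^* \sigma^*] - \Tr[C \Gamma^*] - \eps \Tr[\psi(W^*)]$. Summing everything, the cross terms $\Tr[U^* \rho^*]$ and $\Tr[V^* \sigma^*]$ cancel and the $\Tr[C \Gamma^*]$ contributions telescope, so what remains is precisely $\dualfu(U^*,V^*)$, proving $3)$ with $\Gamma^*$ defined by \eqref{eq:Gamma_star}.

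For the uniqueness claim, the assumption $\psi \in C^1$ forces $\varphi = \psi^*$ to be strictly convex on $[0,+\infty)$ (see \cite[Thm.~26.3]{Rockafellar:1970}), hence $\Gamma \mapsto \eps \Tr[\varphi(\Gamma)]$ is strictly convex on $\rmH_\geq(\cH_1 \otimes \cH_2)$; together with the strict convexity of the relative entropy terms, this makes $\primaltau$ strictly convex, forcing the minimizer to be unique, and any $\Gamma^*$ satisfying $3)$ must coincide with the one in \eqref{eq:Gamma_star}. I expect the main delicate point to be only bookkeeping: making sure that all logarithms and entropies appearing in the computation are finite and well-defined, which follows from the standing hypothesis that $\rho, \sigma$ have trivial kernel and from $\Gamma^* = \psi'(W^*) \in \rmH_\geq(\cH_1 \otimes \cH_2)$ inheriting positivity from $\psi'$ being the derivative of a convex function minorized by $\varphi^{**}$.
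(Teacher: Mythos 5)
Your proposal is correct and follows essentially the same route as the paper: the cycle $1)\Rightarrow 2)\Rightarrow 3)\Rightarrow 1)$, with $2)\Rightarrow 3)$ resting on the optimality conditions of Proposition~\ref{prop:optimality_conditions_unbalanced}, the cancellation of the cross terms via the Legendre identity \eqref{eq:legendre_equality}, weak duality for $3)\Rightarrow 1)$, and uniqueness from $\psi\in C^1$ forcing strict convexity of $\varphi$. The only cosmetic difference is that you simplify the relative-entropy terms directly from $\log\mathrm{P}_1\Gamma^* = -U^*/\tau_1+\log\rho$, whereas the paper obtains the same identity through the Legendre transform \eqref{eq:def_Legendre_relative_entropy} of $\cE(\cdot|\eta)$; the two computations are equivalent.
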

As in Corollary~\ref{cor:duality_balanced}, from Theorem~\ref{thm:characterization_maximizers_unbalanced} and Theorem~\ref{thm:max_exist_unbalanced}, we obtain the sought duality.
\begin{coro}[Duality for unbalanced QOT]
    Under the standing assumptions of this section, assume that $\psi = \varphi^*$ is $C^1$. Then $\dualu=\primtau$. 
\end{coro}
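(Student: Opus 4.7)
The plan is to combine the two main results of this section: the existence of a maximizer for the unbalanced dual (Theorem~\ref{thm:max_exist_unbalanced}) and the equivalent characterization of maximizers in terms of primal optimality (Theorem~\ref{thm:characterization_maximizers_unbalanced}). The strategy mirrors the one used to derive Corollary~\ref{cor:duality_balanced} from Theorem~\ref{thm:characterization_maximizers}.

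First, I would apply Theorem~\ref{thm:max_exist_unbalanced} to produce a pair $(U^*,V^*) \in \rmH(\cH_1)\times\rmH(\cH_2)$ which maximizes $\dualfu$, so that $\dualu = \dualfu (U^*,V^*)$. Note that this step only requires the general structural assumptions on $\psi$ (convex, superlinear, bounded below), which follow from the hypothesis $\psi = \varphi^*$ under \eqref{eq:assumptions_varphi_intro}.

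Next, I would invoke the implication 1) $\Rightarrow$ 3) of Theorem~\ref{thm:characterization_maximizers_unbalanced}, whose hypotheses are met since $\psi \in C^1$. This produces an operator $\Gamma^* \in \rmH_\geq(\cH_1\otimes\cH_2)$ such that
\begin{align}
    \primaltau(\Gamma^*) = \dualfu (U^*,V^*) = \dualu .
\end{align}
In particular $\primtau \le \primaltau(\Gamma^*) = \dualu$. Combined with the weak duality inequality $\primtau \ge \dualu$, which holds by the Legendre--Fenchel construction of the dual functional (see Appendix \ref{appendix:primal_dual}, and the analogue of Remark~\ref{rem:primal_bigger_dual} in the unbalanced setting), this yields $\primtau = \dualu$, as desired.

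Since every nontrivial content is already packaged inside Theorems~\ref{thm:max_exist_unbalanced} and \ref{thm:characterization_maximizers_unbalanced}, there is no real obstacle here; the only point to double-check is that weak duality in the unbalanced case is indeed available (so that the chain of equalities closes to give strong duality, not merely $\primtau \le \dualu$). This weak duality is a direct computation: starting from the pointwise inequality $\psi(t) \ge xt - \varphi(x)$ and the analogous inequality for the exponential terms (Klein's inequality, already used in the proof of Theorem~\ref{thm:max_exist_unbalanced} to establish \eqref{eq:dualf>dualfu}), integrating against an arbitrary admissible $\Gamma$ and optimizing yields $\primaltau(\Gamma) \ge \dualfu (U,V)$ for all $(U,V)$, hence $\primtau \ge \dualu$.
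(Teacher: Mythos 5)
Your proposal is correct and follows essentially the same route as the paper: the paper's proof of this corollary is precisely the combination of Theorem~\ref{thm:max_exist_unbalanced} (existence of a dual maximizer) with implication 1) $\Rightarrow$ 3) of Theorem~\ref{thm:characterization_maximizers_unbalanced}, closed by the unbalanced weak duality of Remark~\ref{rem:primal_bigger_dual}. The only minor imprecision is attributing the unbalanced weak duality to Klein's inequality; in the paper it rests on the Fenchel--Young inequality with the Legendre transform of the relative entropy being the exponential term \eqref{eq:def_Legendre_relative_entropy}, but this does not affect the validity of your argument.
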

\begin{proof}[Proof of Theorem~\ref{thm:characterization_maximizers_unbalanced}]
    We shall show 1) $\Rightarrow$ 2) $\Rightarrow$ 3) $\Rightarrow$ 1). 

    \smallskip 
    \noindent 
    1) $\Rightarrow$ 2) . \ 
   This follows as in the proof of Theorem~\ref{thm:characterization_maximizers}.

    \smallskip 
    \noindent 
    2) $\Rightarrow$ 3). \ 
    Consider $\Gamma^*$ as given in \eqref{eq:Gamma_star}. 
    From the properties of the Legendre transform and \eqref{eq:def_Legendre_relative_entropy} (by computing the optimality condition in the definition of $\cE^*(\cdot|\eta)$), we infer that 
    \begin{align}
        \cE(\alpha|\eta) = \Tr 
        \left[  
            \eta - e^{A+\log \eta}
        \right] 
            + \langle A, \alpha \rangle
            \, , \qquad \text{if }
                \alpha = e^{A+\log \eta}
                    .
    \end{align} 
    By applying the above equation to $\eta= \rho$ (resp.\ $\eta = \sigma$), $A = -\frac{U^*}{\tau_1}$ (resp.\ $A=-\frac{V^*}{\tau_2}$), noting that $\alpha = \tP_1(\Gamma^*)$ (resp.\ $\alpha=\tP_2(\Gamma^*)$) from the optimality condition Proposition~\ref{prop:optimality_conditions_unbalanced}, we obtain 
    \begin{align}
    \begin{cases}
    \displaystyle
        \tau_1 \cE(\tP_1(\Gamma^*)|\rho) 
            = 
        \tau_1 \Tr 
        \left[  
            \rho - e^{ -\frac{U^*}{\tau_1}+\log \rho } 
        \right] 
            - 
         \Tr
        \left[
            U^* \tP_1(\Gamma^*) 
        \right]
                    .
    \\[10pt]
    \displaystyle
        \tau_2 \cE(\tP_2(\Gamma^*)|\sigma) 
            = 
        \tau_2 \Tr 
        \left[  
            \sigma - e^{-\frac{V^*}{\tau_2}+\log \sigma} 
        \right] 
            - 
        \Tr
        \left[
            V^* \tP_2(\Gamma^*)
        \right] 
                    .        
    \end{cases}
    \end{align}
    On the other hand, the same computation as in \eqref{eq:main_computation_equivalence} shows that 
    \begin{align}
        \Tr [ C \Gamma^*] 
            +
        \eps \Tr
        \left[ 
            \varphi(\Gamma^*)
        \right]
            = 
        \Tr
        \left[
            U^* \tP_1(\Gamma^*) 
        \right]
            +
        \Tr
        \left[
            V^* \tP_2(\Gamma^*)
        \right] 
            - 
        \varepsilon \Tr
        \left[
            \psi\Big(\frac{U^*\oplus V^*-C}{\varepsilon}\Big)
        \right]
            .
    \end{align}
    Putting the last three equalities together we end up with
    \begin{align}
        \primaltau(\Gamma^*) 
            &=
        \tau_1 \Tr 
        \left[  
            \rho - e^{ -\frac{U^*}{\tau_1}+\log \rho } 
        \right]      
            + 
        \tau_2 \Tr 
        \left[  
            \sigma - e^{-\frac{V^*}{\tau_2}+\log \sigma} 
        \right] 
            -
        \varepsilon \Tr
        \left[
            \psi\Big(\frac{U^*\oplus V^*-C}{\varepsilon}\Big)
        \right]
    \\
            &=
        \dualfu (U^*,V^*)
            , 
    \end{align}
    which is the sought equality. 
    
    \smallskip 
    \noindent 
    3) $\Rightarrow$ 1). \     
    The proof is completely analogous to the one provided in the proof of Theorem~\ref{thm:characterization_maximizers}, including the fact that $\Gamma^*$ as given in \eqref{eq:Gamma_star} is the unique minimizer for the primal problem.
\end{proof}
%

\section{Extensions of the duality theorem}
\label{sec:extension_to_nonsmooth}
We finally show that the duality result holds with the weaker assumption of $\psi$ being (not necessarily strictly) convex, superlinear, and bounded from below. In particular, no additional smoothness is required.

\begin{theo}[Duality for non-smooth $\psi$]
\label{thm:duality_general}
    Let $\psi$ be a convex function satisfying \eqref{eq:assumptions_psi}. Under the standing assumptions of Section~\ref{sec:unbalanced_noncommutative}, duality holds, i.e. 
    \begin{align}
        \primtau 
            = 
        \dualu 
            \, , 
    \end{align}
    for every $\tau \in (0,+\infty]$ (where $\tau = +\infty$ denotes the balanced problem).
\end{theo}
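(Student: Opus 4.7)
My plan is to approximate the (generally non-smooth) $\psi$ by a sequence of smooth, strictly convex functions $\psi_n$ to which the strong duality already established applies (Corollary~\ref{cor:duality_balanced} for the balanced case, and the corollary following Theorem~\ref{thm:characterization_maximizers_unbalanced} for the unbalanced one), and then pass to the limit $n\to\infty$. The same scheme works for every $\tau\in(0,+\infty]$: in the balanced case $\tau=+\infty$ the relative-entropy penalizations are simply replaced by the hard marginal constraint $\Gamma\mapsto(\rho,\sigma)$, which is closed under limits of operators in finite dimension.

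For the approximation, a natural choice is $\psi_n(t):=(\psi*\eta_{1/n})(t)+t^2/(2n)$ with $\eta_{1/n}$ a smooth symmetric mollifier: Jensen's inequality applied to the symmetric measure $\eta_{1/n}(t-\cdot)$ gives $\psi*\eta_{1/n}\geq\psi$, the quadratic perturbation supplies strict convexity without breaking superlinearity, and continuity of $\psi$ yields $\psi_n\to\psi$ uniformly on compact sets. Setting $\varphi_n:=\psi_n^*$, standard Legendre-transform theory for superlinear convex functions on $\R$ gives $\varphi_n\leq\varphi$ pointwise on $[0,+\infty)$, $\varphi_n\to\varphi$ uniformly on compacts, and a uniform superlinearity bound $\varphi_n(x)\geq Mx-(\psi(M)+1)$ valid for every fixed $M$ and all $n$ large.

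Applying the strong duality result to $(\psi_n,\varphi_n)$ then gives $\primtau_n=\dualu_n$, where the subscript $n$ denotes the problems built from $\psi_n,\varphi_n$. The pointwise inequality $\psi_n\geq\psi$ implies $\dualfu_n(U,V)\leq\dualfu(U,V)$ and hence $\dualu_n\leq\dualu$, while $\varphi_n\leq\varphi$ implies $\primaltau_n\leq\primaltau$ and hence $\primtau_n\leq\primtau$; combined with weak duality (Appendix~\ref{appendix:primal_dual}) these assemble into
\begin{align}
    \primtau_n \,=\, \dualu_n \,\leq\, \dualu \,\leq\, \primtau,
\end{align}
so it suffices to prove $\primtau_n\to\primtau$. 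Picking a minimizer $\Gamma_n$ of $\primaltau_n$ (existence by the direct method using uniform superlinearity of $\varphi_n$), the bound $\primaltau_n(\Gamma_n)=\primtau_n\leq\primtau<+\infty$ combined with non-negativity of the relative-entropy terms and the uniform superlinearity confines $(\Gamma_n)$ to a bounded subset of $\rmH_\geq(\cH_1\otimes\cH_2)$. Extracting $\Gamma_n\to\Gamma^*$ along a subsequence, uniform convergence $\varphi_n\to\varphi$ on a compact neighborhood of the spectra of the $\Gamma_n$'s gives $\Tr[\varphi_n(\Gamma_n)]\to\Tr[\varphi(\Gamma^*)]$, and continuity of the remaining terms (the cost, and the relative entropy thanks to $\ker\rho=\ker\sigma=\{0\}$) yields $\primtau_n=\primaltau_n(\Gamma_n)\to\primaltau(\Gamma^*)\geq\primtau$. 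Together with $\primtau_n\leq\primtau$, this forces $\primtau_n\to\primtau$ and closes the duality.

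The main technical obstacle lies in the Legendre-transform step, namely verifying that the smooth mollification $\psi_n$ yields, after conjugation, uniform-on-compacts convergence $\varphi_n\to\varphi$ together with a \emph{uniform} superlinearity of $\varphi_n$. Since $(\psi_n)$ is not manifestly monotone in $n$, monotone-convergence shortcuts are unavailable and one must instead invoke the standard 1D stability of Legendre transforms under uniform-on-compacts convergence of superlinear convex functions, carefully tracking that the supremum in $\varphi_n(x)=\sup_t(tx-\psi_n(t))$ is attained uniformly in $n$ on a bounded set. Once this input is granted, the remainder of the proof is a routine direct-method compactness and continuity argument.
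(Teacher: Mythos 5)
Your overall strategy is the same as the paper's (mollify and strictly-convexify $\psi$, invoke the smooth duality, then pass to the limit), and your sandwich $\primtaun=\dualun\le\dualu\le\primtau$, with compactness of primal minimizers obtained directly from the uniform superlinearity $\varphi_n(x)\ge Mx-(\psi(M)+1)$, is actually a nice streamlining: it avoids the paper's separate dual-side limit via $\Gamma$-convergence and equi-coercivity. However, there is a concrete gap in the construction of the approximants. With $\psi_n(t)=(\psi*\eta_{1/n})(t)+t^2/(2n)$ the function $\psi_n$ blows up as $t\to-\infty$, hence it is \emph{not} non-decreasing, and therefore it is not the Legendre transform of any convex $\varphi_n\colon[0,+\infty)\to\R$ extended by $+\infty$ on the negative half-line (such conjugates are always non-decreasing with finite limit at $-\infty$, cf.\ Remark~\ref{rem:monotonicity_Legendre}). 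Concretely, $\varphi_n:=\psi_n^*$ is finite on $(-\infty,0)$, and the pair actually conjugate to the primal over $\Gamma\in\rmH_\geq(\cH_1\otimes\cH_2)$ is $\bigl(\varphi_n|_{[0,+\infty)},\tilde\psi_n\bigr)$ with $\tilde\psi_n(t)=\sup_{x\ge0}(tx-\varphi_n(x))$, which equals $\min\psi_n$ below the argmin of $\psi_n$ and so differs from $\psi_n$ and is not strictly convex. Consequently Corollary~\ref{cor:duality_balanced} (which requires $\psi=\varphi^*$ strictly convex, with $\varphi$ as in \eqref{eq:assumptions_varphi_intro}) and the unbalanced analogue cannot be cited for your pair; also the candidate minimizer $\psi_n'\bigl[(U\oplus V-C)/\eps\bigr]$ need not be positive semi-definite since $\psi_n'<0$ on a half-line. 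The identity $\primtaun=\dualun$ — more precisely the inequality $\primtaun\le\dualun$, which is exactly what your sandwich hinges on (weak duality only gives the opposite direction) — is therefore unjustified as written. The repair is what the paper does: perturb by a smooth, nonnegative, \emph{monotone increasing}, strictly convex $\bar\psi$ (e.g.\ $\psi_n=(\psi+\tfrac1n\bar\psi)*\rho_n$), which keeps $\psi_n$ non-decreasing and bounded below, so that $\varphi_n:=\psi_n^*$ is $+\infty$ on $(-\infty,0)$ and the smooth duality applies verbatim.

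A secondary, related point concerns your justification of $\varphi_n\to\varphi$ locally uniformly on $[0,+\infty)$. The claim that the supremum in $\varphi_n(x)=\sup_t(tx-\psi_n(t))$ is ``attained uniformly in $n$ on a bounded set'' fails near $x=0$: there the supremum is only approached as $t\to-\infty$ (or, with your quadratic perturbation, attained at points $t_n^*\to-\infty$). What is actually needed — and what the paper proves — is uniform convergence $\psi_n\to\psi$ on half-lines $(-\infty,b]$, which holds because $\psi$, being the conjugate of a function that is $+\infty$ on $(-\infty,0)$, is non-decreasing and Lipschitz on such sets; note that with the $t^2/(2n)$ term this uniform control on half-lines fails outright, so your sketch would need genuine repair beyond ``standard 1D stability''. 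Once the approximating sequence is fixed as above, the rest of your argument (uniform superlinearity of $\varphi_n$, boundedness and subsequential convergence of the minimizers, continuity of the entropic terms thanks to the trivial kernels of $\rho,\sigma$, and the sandwich) goes through and yields the theorem.
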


Before proving the theorem, we discuss some preliminary constructions. Let $\psi$ be a superlinear function, bounded from below, and convex (but not necessarily strictly convex nor $C^1$). 
Fix a parameter $n \in \N$, a smooth, nonnegative, monotone increasing and strictly convex function  $\bar \psi$, and a function $\rho \in C_c^\infty(\R)$, strictly positive, symmetric, and supported on $[-1,1]$, and such that $\int \rho \dd x = 1$. Construct the sequence of mollifiers $\rho_n$ in the usual way, i.e. $\rho_n(x) := n \rho(nx)$, so that $\rho_n$ satisfies the same properties of $\rho$ (on the interval $[-\frac1n, \frac1n]$), for every $n \in \N$. Finally, define
\begin{align}
    \psi_n: \R \to \R 
        \, , \quad 
    \psi_n := 
    \Big(
        \psi + \frac1n \bar \psi
    \Big)
        *
    \rho_n
         .
\end{align}
Note that in the unbalanced case $\tau_1,\tau_2 <\infty$, one could choose $\bar \psi =0$ (due to the fact that no strict convexity is required in Theorem~\ref{thm:characterization_maximizers_unbalanced}). 
First of all, we note that $\psi_n \in C^\infty$, and if $\psi \geq m$ for some $m \in \R_+$, then $\psi_n \geq m$ for every $n \in \N$. Furthermore, $\psi_n$ is also superlinear at $+\infty$, since by Fatou's lemma
\begin{align}
    \liminf_{x \to +\infty} 
        \frac{\psi_n(x)}{x}
            \geq
    \liminf_{x \to +\infty} 
        \int 
            \frac{\psi(x-z)}{x} 
                \dd \rho_n(z)
            \geq 
    \int 
    \liminf_{x \to +\infty} 
        \Big( \frac{\psi(x-z)}{x} \Big)
            \dd \rho_n(z)
            = 
        +\infty 
            .
\end{align}
Finally, observe $\psi_n = \bar \psi_n * \rho_n$, where $\bar \psi_n := \psi + n^{-1} \bar \psi$ is a strictly convex function (being a sum of a convex and a strictly convex function). As a consequence, using that $\rho_n \geq 0$, we see that for every $x,y \in \R$, $\lambda \in [0,1]$ 
\begin{align}
    \psi_n(\lambda x + (1-\lambda) y )
        &=
    \int 
        \bar\psi_n
        \big(
            \lambda (x-z) + (1-\lambda) (y-z)
        \big) 
            \dd \rho_n(z) 
\\
        &\leq 
    \lambda \int 
        \bar\psi_n(x-z) 
            \dd \rho_n(z) 
        +
    (1-\lambda) \int 
        \bar\psi_n(y-z) 
            \dd \rho_n(z) 
\\
        &=
    \lambda \psi_n(x)  
        +
    (1-\lambda) \psi_n(y)    
        \, , 
\end{align}
with equality if and only if, for every $z \in \R$, 
\begin{align}
    \bar \psi_n 
    \big(
        \lambda (x-z) + (1-\lambda) (y-z)
    \big)
        \rho_n(z)
            =
    \Big(
         \lambda \bar\psi_n(x-z) 
              +
        (1-\lambda) \bar\psi_n(y-z)
    \Big) 
        \rho_n(z)
            .
\end{align}
Using that $\rho_n >0$ on $[-\frac1n,\frac1n]$, and that $\bar \psi_n$ is strictly convex, we infer that $x=y$, which precisely shows that $\psi_n$ is strictly convex as well.

We thus constructed a smooth, bounded from below, strictly convex function $\psi_n$, which we claim to be a good approximation of $\psi$, in a suitable uniform sense.
%
First of all, note that $\psi$ being the Legendre transform of a convex map $\varphi$ satisfying \eqref{eq:assumptions_varphi_intro} implies that $\psi$ is monotone non-decreasing as well as 
\begin{align}
\label{eq:Lip_Linf_intervals_psi}
   \lim_{x \to -\infty} 
        \psi(x) 
            =
    \inf_{x \in \R} 
        \psi(x)
            = 
        -\varphi(0) \in \R 
            ,
\end{align}
see Remark~\ref{rem:monotonicity_Legendre} in the appendix for a proof. In particular, this implies that 
\begin{align}
\label{eq:psi_loc_above_bounded}
    \psi \in L^\infty(I) \cap \text{Lip}(I)
        \, , \quad 
    \text{for every interval } I = (-\infty, b]
        \, , \, b \in \R 
            .
\end{align}
The same holds by construction for $\bar \psi$. 
As a consequence, we deduce that, for every $b \in \R$, 
\begin{align}
    \sup_{x \in (-\infty,b]} 
        | \psi_n(x) - \psi(x)|
        &\leq 
    \sup_{x \in (-\infty,b]}
        | \bar \psi_n * \rho_n(x) - \psi * \rho_n(x) |
            +
        | \psi * \rho_n(x) - \psi(x) |
\\
        &= 
    \sup_{x \in (-\infty,b]}
        \frac1n 
        | \bar \psi * \rho_n(x) |  
            +
        |  \psi * \rho_n(x) - \psi(x) |
\\  
        &\leq 
    \frac1n 
    \sup_{x \in (-\infty,b+1]}
        |\bar \psi(x)| 
            +
        \text{Lip}(\psi)( (-\infty,b+1]) 
            \int |y| \dd \rho_n(y) 
\\
        &\leq 
    \frac1n 
    \Big(
        \sup_{x \in (-\infty,b+1]}
        |\bar \psi(x)| 
            +
        \text{Lip}(\psi)( (-\infty,b+1]) 
    \Big)
                .
\end{align}
which clearly goes to $0$ as $n \to \infty$.
This in particular shows that $\psi_n \to \psi$ in $L^\infty_{\text{loc}}(\R)$.

Note due to the fact that $\rho$ (hence $\rho_n$) is symmetric (and thus in particular $\int z \dd \rho_n(z) = 0$) and $\bar \psi \geq 0$, we have that 
\begin{align}
    \psi_n (x) 
        \geq 
    \psi * \rho_n (x) 
        \stackrel{\text{Jensen}}{\geq}
    \psi 
    \Big(
        \int (x-z) \dd \rho_n(z)
    \Big)
        = 
    \psi(x) 
        \, , \qquad \forall x \in \R \, .
\end{align}

We consider the quantum optimal transport problems associated with $\psi_n$ and $\varphi_n := \psi_n^*$, and denote by $\dualufn $, $\primalun$ respectively the dual and the primal functional associated to $\psi_n$, and with $\dualun $, $\primtaun$ the dual and the primal problem, respectively.

By definition of dual functional, together with \eqref{eq:dualf>dualfu}, this shows that $\dualufn  \leq \dualfu  \leq \dualf $, for every $n \in \N$. In particular, it is straightforward to check that the validity of Proposition~\ref{prop:superlevel_bdd} for  $\dualf $ implies the following equi-coercivity property (up to translation) for $\dualufn $.

\begin{proposition}[Equi-coercivity]
\label{prop:superlevel_bdd_approx}
For every  $U\in \rmH(\cH_1)$, $V\in \rmH(\cH_2)$ such that 
\begin{align} 
    \emph{D}^{\eps,\tau,n} (U,V)\geq M >-\infty 
        \quad 
    \big(
        \text{and }
        \lambda_1(V)=0 \quad \text{if }\tau = +\infty 
    \big) 
        \,
\end{align}
for some $M\in\R$, there exists a constant $0\leq \mathcal{A}=\mathcal{A}(M) <\infty$ independent of $n$ such that $||U||_\infty,||V||_\infty\leq \cA$.  In particular, whenever a sequence $(U_n,V_n) \in \rmH(\cH_1) \times \rmH(\cH_2)$ satisfies, for every $n\in \N$,
\begin{align}
    \inf_{n \in \N}
        \emph{D}^{\eps,\tau,n} (U_n,V_n)\geq M >-\infty 
            \quad 
        \big(
            \text{and }
                \lambda_1(V_n)=0 
         \quad \text{if }\tau = +\infty   
        \big)
            ,
\end{align}
then $\{ (U_n,V_n)\}_n$ is bounded, and therefore pre-compact.
\end{proposition}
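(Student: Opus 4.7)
The plan is to reduce the statement to the coercivity of the original functionals $\dualf $ and $\dualfu $, already established in Proposition~\ref{prop:superlevel_bdd} and (inside the proof of) Theorem~\ref{thm:max_exist_unbalanced}. The pivotal observation is the domination $\dualufn (U,V) \leq \dualfu (U,V) \leq \dualf (U,V)$ on all of $\rmH(\cH_1) \times \rmH(\cH_2)$, uniform in $n \in \N$ and in $\tau$.

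I would establish this domination as follows. As noted in the paragraphs preceding the statement, the construction $\psi_n = (\psi + n^{-1} \bar \psi) * \rho_n$ with $\bar \psi \geq 0$ and $\rho_n$ symmetric and nonnegative gives, via Jensen's inequality, the pointwise bound $\psi_n \geq \psi$ on $\R$. Lifting to Hermitian operators preserves this ordering through the spectral calculus: for any $A \in \rmH(\cH_1 \otimes \cH_2)$ with spectral decomposition $A = \sum_i \lambda_i \ket{\xi_i}\bra{\xi_i}$, one has $\psi_n(A) - \psi(A) = \sum_i (\psi_n(\lambda_i) - \psi(\lambda_i)) \ket{\xi_i}\bra{\xi_i} \geq 0$, and hence $\Tr[\psi_n(A)] \geq \Tr[\psi(A)]$. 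Since $\dualfn $ and $\dualufn $ differ from $\dualf $ and $\dualfu $, respectively, only in the term $-\eps \Tr[\psi_{(n)}((U\oplus V-C)/\eps)]$, while the marginal penalization terms are unchanged, this yields $\dualufn \leq \dualfu $; and $\dualfu  \leq \dualf $ is precisely the Klein-type bound \eqref{eq:dualf>dualfu} already established.

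Given this, suppose $\dualufn (U,V) \geq M$. In the balanced case $\tau = +\infty$ (with $\lambda_1(V)=0$) we have $\dualf (U,V) \geq M$, and Proposition~\ref{prop:superlevel_bdd} applied to the original $\psi$ produces $\|U\|_\infty, \|V\|_\infty \leq \cA(M)$. In the unbalanced case $\tau_1, \tau_2 < \infty$, we have $\dualfu (U,V) \geq M$, and the coercivity argument inside the proof of Theorem~\ref{thm:max_exist_unbalanced} (applied to the original $\psi$) gives the analogous two-sided spectral bounds and therefore the same conclusion. The crucial point is that both of these earlier arguments use only $\psi$ itself (namely $m = \inf \psi$ and its superlinearity), and never appeal to any quantitative property of $\psi_n$; hence the constant $\cA(M)$ is genuinely independent of $n$. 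The pre-compactness of a sequence $(U_n,V_n)$ satisfying $\inf_n \dualufn (U_n,V_n) \geq M$ (with $\lambda_1(V_n)=0$ in the balanced case) then follows immediately, since $\rmH(\cH_1) \times \rmH(\cH_2)$ is finite-dimensional.

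There is no substantive obstacle beyond what is already in the paper: the only thing to verify with care is that the constants produced by the earlier coercivity statements are insensitive to replacing $\psi$ by $\psi_n$, which is precisely the content of the domination above. We never need to estimate $\dualufn $ itself from below; we only need to control it from above by a functional whose coercivity has already been proven.
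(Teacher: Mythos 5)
Your proposal is correct and follows essentially the same route as the paper: the pointwise bound $\psi_n \geq \psi$ (from Jensen's inequality and $\bar\psi \geq 0$) lifts through spectral calculus to give $\text{D}^{\eps,\tau,n} \leq \text{D}^{\eps,\tau} \leq \text{D}^{\eps}$, and the conclusion is then read off from the already-proved coercivity of $\text{D}^{\eps}$ (Proposition~\ref{prop:superlevel_bdd}) in the balanced case and of $\text{D}^{\eps,\tau}$ (proof of Theorem~\ref{thm:max_exist_unbalanced}) in the unbalanced case, with constants depending only on $\psi$ and hence independent of $n$.
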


We are ready to prove Theorem~\ref{thm:duality_general}. 

\begin{proof}[Proof of Theorem~\ref{thm:duality_general}]
We apply  Theorem~\ref{thm:characterization_maximizers}, Theorem~\ref{thm:characterization_maximizers_unbalanced} to $\psi_n$ and deduce that
\begin{align}
\label{eq:duality_approx}
    \dualun  = \primtaun
        \, , \qquad \forall n \in \N 
            .
\end{align}
We claim we can pass to the limit the previous equality as $n \to \infty $ to conclude the duality result for $\psi$. We start by observing that for every $(U_n,V_n) \to (U, V) \in \rmH(\cH_1) \times \rmH(\cH_2)$, we have that 
\begin{align}
\label{eq:gamma_approx_dual}
    \lim_{n \to \infty} 
        \dualufn (U_n,V_n) 
            = 
    \dualfu (U,V)
        ,
\end{align}
which is a stronger statement than $\Gamma$-convergence (every approximating sequence is a recovery sequence).
This follows from the fact that converging sequences in $\rmH(\cH_1) \times \rmH(\cH_2)$ are bounded and that $\psi_n \to \psi$ in $L^\infty_{\text{loc}}(\R)$. 


This in particular shows that $-\dualufn  \stackrel{\Gamma}{\to} -\dualfu $. Together with \eqref{eq:duar_renormalized} and Proposition~\ref{prop:superlevel_bdd_approx}, by the fundamental theorem of $\Gamma$-convergence, it provides the claimed convergence at the level of the dual problems
\begin{align}
    \lim_{n \to \infty}
        \dualun 
    =  \dualu 
        .
\end{align}

We are left to show the convergence of the primal problems, which follows from very similar arguments, namely by showing that $\primalun \stackrel{\Gamma}{\to} \primaltau$ plus suitable compactness. 

%
In order to show the first statement, we first claim that $\psi_n \to \psi$ in  $L^\infty_{\text{loc}}(I)$ for $I$ bounded from above (cfr. \eqref{eq:psi_loc_above_bounded}) implies $\varphi_n \to \varphi$ in  $L^\infty_{\text{loc}}(\R_+)$. To prove this, we start observing that for every compact $K \subset \R_+$, $\varphi$ is continuous, and thus bounded, on $K$. Therefore, for every $p \in K$, we have that 
\begin{align}
    - 
    \| \varphi \|_{L^\infty(K)}
        \leq 
    \varphi(p) 
        =        
    \sup_{x \in \R} 
    \left\{
        p x - \psi(x) 
    \right\} 
        .
\end{align}
This in particular shows that the sup in the previous equation can be taken over the set 
\begin{align}
    C(p)
        := 
    \Big\{
        x \in \R 
            \suchthat 
    p x - \psi(x)
        \geq 
    - \| \varphi \|_{L^\infty(K)} -1 
    \Big\}
        .
\end{align}
As a  consequence of the Fenchel--Young inequality, we infer that, for $p \in K$, 
\begin{align}
    \varphi(p) 
        = 
    \sup_{x \in C(p)} 
    \left\{
        p x - \psi(x) 
    \right\}
        &\leq 
    \sup_{x \in C(p)}
    \left\{
        \varphi_n(p) + \psi_n(x) - \psi(x) 
    \right\}
    \\
        &\leq 
    \varphi_n(p) + \| \psi - \psi_n \|_{L^\infty(C(p))}
        .
\end{align}
On the other hand, we know that $\varphi_n \leq \varphi$ everywhere (as $\psi \leq \psi_n$), hence we conclude that 
\begin{align}
\label{eq:bastaa}
    \| \varphi - \varphi_n \|_{L^\infty(K)}
        \leq 
    \sup_{p \in K}
    \| \psi - \psi_n \|_{L^\infty(C(p))}
        \, , \quad 
    \forall n \in \N 
        .
\end{align}
Let $p_K$ be the max of $K$. Then, by definition of the sets themselves, we have that for every $p \in K$, 
\begin{align}
    C(p) \cap \R_+ \subset C(p_K)
        \tand 
    C(p_K) \subset (-\infty , \alpha_K] 
        , 
\end{align}
for some $\alpha_K < +\infty$, by superlinearity of $\psi$, see Remark~\ref{rem:superlinear_prop1}. This, together with \eqref{eq:bastaa}, provides the estimate
\begin{align}
    \| \varphi - \varphi_n \|_{L^\infty(K)}
        \leq 
    \| \psi - \psi_n \|_{L^\infty ((-\infty,\alpha_K] )}
        \, , \quad 
    \forall n \in \N 
        .
\end{align}

Recalling that from \eqref{eq:psi_loc_above_bounded} we know that $\psi_n \to \psi$ uniformly on every set which is bounded from above, we conclude that $\varphi_n \to \varphi$ in  $L^\infty_{\text{loc}}(\R_+)$.

As in \eqref{eq:gamma_approx_dual}, we now take advantage of this fact  to ensure that
\begin{align}
\label{eq:gamma_approx_primal}
    \lim_{n \to \infty} 
        \primalun(\Gamma_n) 
            = 
    \primaltau(\Gamma)
        \, , \quad \forall \,  
    \Gamma_n \to \Gamma \in \DM(\cH_1 \otimes \cH_2)
        .
\end{align}
Secondly, let $(U_n^*,V_n^*)$ be the maximizer of $\dualufn $ (with $\lambda_1(V_n^*)=0$ if $\tau = +\infty$). By Theorem~\ref{thm:characterization_maximizers}, Theorem~\ref{thm:characterization_maximizers_unbalanced}  we know that 
\begin{align}
    \Gamma_n^* 
        := 
    \psi_n' \left[ \frac{U_n^* \oplus V_n^* - C}{\varepsilon} \right]    
\end{align}
is the unique minimizer of $\primalun$.
Thanks to Proposition~\ref{prop:superlevel_bdd_approx}, we know that, up to a non-relabeled subsequence, $(U_n^*,V_n^*) \to (U^*,V^*)$ as $n \to \infty$. Moreover, with $\psi$ being convex, we have that $\psi$ is locally Lipschitz. In particular, by the properties of the convolution this implies that, for every bounded set $I \subset \R$, one has that 
\begin{align}
    \sup_{n \in \N} 
        \| \psi_n' \|_{L^\infty(I)} 
    \leq
        \text{Lip} (\psi;B_1(I)) + \text{Lip} (\bar \psi;B_1(I)) 
            < \infty 
        .
\end{align}
This, together with the boundedness of $(U_n^*,V_n^*)$, implies that $\{ \Gamma_n^* \}_n$ is also bounded, and therefore pre-compact. 
Thanks to the $\Gamma$-convergence showed above in \eqref{eq:gamma_approx_primal}, this in particular shows that every limit point (which exists) must necessarily be a minimizer (the marginal constraints are closed by uniform convergence), thus concluding the proof of Theorem~\ref{thm:duality_general} passing to the limit in \eqref{eq:duality_approx}.
\end{proof}


\section{Convergence results: from unbalanced to balanced quantum optimal transport}
\label{sec:convergence_resuts}
In this section, we discuss limit behaviors for the unbalanced optimal transport as the parameters $\tau_1$, $\tau_2 \to +\infty$. Recall that we set $\tau=(\tau_1,\tau_2)$, and use the slight abuse of notation $\tau \to +\infty$. We discuss convergence results both at the level of primal and dual problems, as well as the convergence of minimizers/maximizers. 
%
\begin{theo}[Convergence as $\tau \to +\infty$]
\label{thm:convergence_unb_to_b}
    Under the standing assumption of Section~\ref{sec:unbalanced_noncommutative}, we have the following convergence results:
    \begin{enumerate}
        \item The functionals $\primaltau$  $\Gamma$-converge as $\tau \to \infty$ to the functional 
        \begin{align}
            \Gamma \mapsto 
            \begin{cases}
                \primal(\Gamma)
                    &\text{if } \Gamma \mapsto (\rho,\sigma) 
            \\
                +\infty 
                    &\text{otherwise} 
                        \, .
            \end{cases} 
        \end{align}
        \item The unique minimizer  $\Gamma^{*,\tau} \in \rmH(\cH_1 \otimes \cH_2)$ of $\primaltau$ converges (up to subsequence) as $\tau \to +\infty$ to a minimizer $\Gamma^* \in \rmH(\cH_1 \otimes \cH_2)$ of $\primal$, under the constraint $\Gamma^*  \mapsto (\rho,\sigma)$. 
        \item We have the following convergence result for the dual:
        for every $(U_\tau,V_\tau) \to (U,V) \in \rmH(\cH_1) \times \rmH(\cH_2)$ as $\tau \to +\infty$ one has that 
        \begin{align}
        \label{eq:dual_strong_convergence}
            \lim_{\tau \to +\infty}
               \emph{D}^{\eps,\tau} (U_\tau,V_\tau) = \emph{D}^\eps (U,V)
                    .
        \end{align}
        In particular, $-\emph{D}^{\eps,\tau} $ $\Gamma$-converges to $-\emph{D}^\eps $.
        \item The unbalanced optimal transport problems converge as $\tau \to +\infty$  to the balanced optimal transport problem, i.e.
        \begin{align}
        \label{eq:conv_duality}
            \lim_{\tau_1,\tau_2 \to +\infty} 
                \primtau
                    =
                \prim
                    =
                \dual 
                    =
            \lim_{\tau_1,\tau_2 \to +\infty} 
                \dualu 
                    ,
        \end{align}
        and the minimizers $\Gamma^{*,\tau}$ converge (up to subsequence) to a minimizer of $\primal$ with marginal constraints.
        \item Let $(U^{*,\tau}, V^{*,\tau}) \in \rmH(\cH_1) \times \rmH(\cH_2)$ be the unique maximizer of $\emph{D}^{\eps,\tau} $. Define the recentered potentials 
        \begin{align}
            \big(
                \hat U^{*,\tau}, \hat V^{*,\tau}
            \big) 
                := 
            \big( 
                U^{*,\tau} + \lambda_1(V^{*,\tau}) {\rm Id}_{\cH_1}
                    , 
                V^{*,\tau} - \lambda_1(V^{*,\tau})
            {\rm Id}_{\cH_2}\big)
                \in \rmH(\cH_1) \times \rmH(\cH_2)
                    .
        \end{align}
        Then $(\hat U^{*,\tau}, \hat V^{*,\tau})$ converges (up to subsequence)  as $\tau \to +\infty$ to a maximizer $(U^*,V^*) \in \rmH(\cH_1) \times \rmH(\cH_2)$ of $\emph{D}^\eps $ which satisfies $\lambda_1(V^*)=0$.
    \end{enumerate}
    If we assume additionally that $\psi \in C^1$ and strictly convex, then the $\Gamma^{*,\tau}$ converge to the unique minimizer in $\prim$, while $(\hat U^{*,\tau}, \hat V^{*,\tau})$ converges to the unique maximizer $(U^*,V^*) \in \rmH(\cH_1) \times \rmH(\cH_2)$ of $\emph{D}^\eps $ which satisfies $\lambda_1(V^*)=0$.
\end{theo}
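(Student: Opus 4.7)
The plan is to proceed through the five items in order, since the later ones rely on the earlier ones via duality. For item (1), I would establish the $\liminf$ inequality by continuity of $\primal$ and of the relative entropy (which is continuous in finite dimension since $\rho,\sigma$ have trivial kernel): if $\Gamma \not\mapsto (\rho,\sigma)$ then at least one of the entropies $\cE(\tP_i(\Gamma)|\cdot)$ is strictly positive, so $\tau_i\cE(\tP_i(\Gamma^\tau)|\cdot)$ blows up; if $\Gamma \mapsto (\rho,\sigma)$, the penalty terms are nonnegative. The recovery sequence is simply the constant $\Gamma^\tau \equiv \Gamma$ for any admissible $\Gamma$. For items (2) and (4), testing $\primaltau$ against any admissible $\Gamma \mapsto (\rho,\sigma)$ gives $\primtau \leq \primal(\Gamma)$, hence $\primtau \leq \prim < +\infty$; combined with $\primal(\Gamma^{*,\tau}) \leq \primaltau(\Gamma^{*,\tau}) = \primtau$ and the superlinearity of $\varphi$, this yields a uniform bound on $\{\Gamma^{*,\tau}\}$. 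The fundamental theorem of $\Gamma$-convergence then gives item (2) and $\lim\primtau = \prim$; duality (Theorem~\ref{thm:duality_general}) transports this to $\lim \dualu = \dual$, completing item (4).

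For item (3), the cost term $\ep\Tr[\psi((U\oplus V - C)/\ep)]$ is common to $\dualfu$ and $\dualf$, so only the marginal penalties require work. Via the Duhamel formula
\begin{align}
\tau\bigl(e^{\log\rho - U_\tau/\tau} - \rho\bigr) = -\int_0^1 e^{(1-s)(\log\rho - U_\tau/\tau)}\, U_\tau\, e^{s\log\rho}\,\mathrm{d}s,
\end{align}
together with cyclicity of the trace and dominated convergence, one obtains
\begin{align}
-\tau\Tr\bigl[e^{-U_\tau/\tau + \log\rho} - \rho\bigr] \xrightarrow[\tau\to+\infty]{} \Tr[U\rho]
\end{align}
whenever $U_\tau \to U$, which proves \eqref{eq:dual_strong_convergence} and the $\Gamma$-convergence of $-\dualfu$ to $-\dualf$.

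The delicate step is item (5). The recentering $(\hat U^{*,\tau}, \hat V^{*,\tau})$ is chosen precisely to exploit the translation invariance of $\dualf$ (Remark~\ref{rem:symmetry}) together with the pointwise bound $\dualf \geq \dualfu$ from \eqref{eq:dualf>dualfu}. These yield
\begin{align}
\dualf(\hat U^{*,\tau}, \hat V^{*,\tau}) = \dualf(U^{*,\tau}, V^{*,\tau}) \geq \dualfu(U^{*,\tau}, V^{*,\tau}) = \dualu \xrightarrow[\tau\to+\infty]{} \dual,
\end{align}
so for large $\tau$ the left side is uniformly bounded below. Since $\lambda_1(\hat V^{*,\tau}) = 0$, Proposition~\ref{prop:superlevel_bdd} applied to $\dualf$ produces a uniform bound on $\|(\hat U^{*,\tau}, \hat V^{*,\tau})\|_\infty$. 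Extracting a converging subsequence to $(U^*, V^*)$ with $\lambda_1(V^*) = 0$, the continuity of $\dualf$ and the sandwich $\dualu \leq \dualf(\hat U^{*,\tau}, \hat V^{*,\tau}) \leq \dual$ force $\dualf(U^*, V^*) = \dual$, so $(U^*,V^*)$ is a maximizer.

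Finally, when $\psi \in C^1$ and strictly convex, Theorem~\ref{thm:main_duality_balanced} gives uniqueness of the primal minimizer and of the dual maximizer normalized by $\lambda_1(V) = 0$; a standard subsequence argument then upgrades subsequential convergence to convergence of the whole sequence. The subtlety throughout is that $\dualfu$ is not translation invariant, and the main technical trick is to import coercivity from $\dualf$ via the one-sided bound \eqref{eq:dualf>dualfu}.
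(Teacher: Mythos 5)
Your proposal is correct and follows essentially the same route as the paper: the $\liminf$/constant-recovery argument for the primal $\Gamma$-convergence, equi-coercivity from $\primaltau \geq \primal$ plus superlinearity and the fundamental theorem of $\Gamma$-convergence for items (2) and (4), transfer to the dual level via Theorem~\ref{thm:duality_general}, and for item (5) exactly the paper's trick of recentering by $\lambda_1(V^{*,\tau})$, using translation invariance and the one-sided bound $\dualf \geq \dualfu$ to import coercivity from Proposition~\ref{prop:superlevel_bdd}. The only deviation is item (3), where you compute $\lim_\tau -\tau\Tr\left[e^{-U_\tau/\tau+\log\rho}-\rho\right]=\Tr[U\rho]$ via a Duhamel integral representation rather than the paper's difference-quotient argument based on the gradient formula $\nabla \Tr[e^A]=e^A$ (Lemma~\ref{lem:gradient}); both computations are valid and essentially equivalent.
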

\begin{rem}[Lack of coercivity in the dual]\label{rmk:dualcoercivity}
Note that $\dualf $ is not coercive, since the sequence $\{(a{\rm Id},-a{\rm Id})\}_{a \in \mathbb{R}}$ is such that $\dualf (a{\rm Id},-a{\rm Id}) =C_0 \in \R$, but it does not admit a convergent subsequence. By Theorem~\ref{thm:convergence_unb_to_b}, and by the fundamental theorem of $\Gamma$-convergence, this shows that the family of functionals $\dualfu$ is not equi-coercive (note indeed that the bounds obtained in the proof of Proposition~\ref{thm:characterization_maximizers_unbalanced} are not uniform in $\tau_1,\tau_2$). This explains why it is not directly possible to infer the main results for the dual problem $\dual$ directly from the unbalanced setting by sending the parameter to infinity. For a similar reason, the $\Gamma$-convergence result proved in 1. in the above theorem does not directly provide the validity of 3., which instead requires a different argument.
\fr
\end{rem}




\begin{proof}
%
1. \ 
First, we show the claimed $\Gamma$-convergence for the primal functionals. We prove the $\liminf$ and the $\limsup$ inequality, starting from the first one. Let $\Gamma_\tau \to \Gamma \in \rmH(\cH_1 \otimes \cH_2)$ as $\tau \to +\infty$, and assume without loss of generality that $\sup_\tau \text{F}^{\eps,\tau} (\Gamma_\tau) <\infty$. By the continuity of $\primal$, we know that 
\begin{align}
\label{eq:bound_conv_0}
    \lim_{\tau \to +\infty} 
        \primal(\Gamma_\tau) 
            =
        \primal(\Gamma) < \infty 
            \, , 
\end{align}
and therefore we also infer that 
\begin{align}
\label{eq:bound_conv_1}
    \sup_{\tau_1, \tau_2 > 0} 
        \tau_1 \cE({\rm P}_1(\Gamma_\tau)| \rho)+ \tau_2 \cE({\rm P}_2(\Gamma_\tau)| \sigma) 
            < \infty 
        .
\end{align}
Note that the partial trace is a continuous operation with respect to the Hilbert-Schmidt convergence of operators. Therefore we have that ${\rm P}_i(\Gamma_\tau) \to {\rm P}_i(\Gamma)$ as $\tau \to \infty$. In  particular, we deduce that 
\begin{align}
\label{eq:bound_conv_2}
    \sup_{\tau_1, \tau_2 > 0} 
        \cE({\rm P}_1(\Gamma_\tau)| \rho) +  \cE({\rm P}_2(\Gamma_\tau)| \sigma) 
            < \infty 
        .
\end{align}
Combining the estimates in \eqref{eq:bound_conv_1} and \eqref{eq:bound_conv_2}, from the convergence of the marginals and the continuity of the relative entropy we conclude that 
\begin{align}
    \cE({\rm P}_1(\Gamma)| \rho)
        +
    \cE({\rm P}_2(\Gamma)| \sigma)
        =
    \lim_{\tau \to + \infty}
        \cE({\rm P}_1(\Gamma_\tau)| \rho)
            +
        \cE({\rm P}_2(\Gamma_\tau )| \sigma)
        =
    0   
        .
\end{align}
By using that the relative entropy is a nonnegative functional, this shows that both relative entropies must vanish, which in particular yields $\Gamma \mapsto (\rho,\sigma)$. 
Combining this with \eqref{eq:bound_conv_0}, we end up with
\begin{align}
    \liminf_{\tau \to +\infty}
        \text{F}^{\eps,\tau} (\Gamma_\tau)
    \geq 
        \primal(\Gamma) 
        \quad \tand \quad 
    \Gamma \mapsto (\rho,\sigma)
        ,
\end{align}
which concludes the proof of the $\liminf$ inequality.
For the $\limsup$ inequality, it is enough to choose the constant recovery sequence. 

\smallskip 
\noindent 
2. \ 
Note that $\{ \text{F}^{\eps,\tau}  \}_\tau$ is equi-coercive, due to the fact that $\text{F}^{\eps,\tau}  \geq \primal$, for every $\tau>0$, together with the fact that $\primal$ has bounded sublevels.
As a consequence, 2. follows directly from 1. by the fundamental theorem of $\Gamma$-convergence.

\smallskip
\noindent 
3. \ 
Let us now show the convergence claimed in \eqref{eq:dual_strong_convergence}. 
To this purpose, we observe that 
\begin{align}
    \lim_{\tau_1 \to +\infty}
        -\tau_1 \Tr
        \left[ 
            e^{-\frac{U_{\tau_1}}{\tau_1} + \log \rho} - \rho 
        \right] 
        =
    \lim_{t \to 0^-}
        \frac
            {
            \Tr
            \left[ 
                e^{\log \rho + t \tilde U_t} - e^{\log \rho} 
            \right] 
            }
            {t}
        =  
    \frac{\dd }{ \dd t }\bigg|_{t=0}
        g(\log \rho + t U) 
            ,
\end{align}
where we set $\tilde U_t := U_{-1/t}$ and $g(A):= \Tr[e^A]$. The last equality is a consequence of the following observation: let $B\subset \tH(\cH_1)$ be the ball of radius $1$ centered in $\log \rho \in \tH(\cH_1)$. For $t$ sufficiently small, both $\log\rho + t \tilde U_t$ and $\log \rho + t U$ belong to $B$, hence by \eqref{eq:continuity_tracevarphi} we can estimate
\begin{align*}
     \bigg|
     \frac1t
     \Tr
    \left[ 
        e^{\log \rho + t \tilde U_t} - e^{\log \rho +  t U} 
    \right]
    \bigg|
        \leq 
    \text{Lip} (g ; B)
    \|
         (\tilde U_t -U)
    \|_{\text{HS}}
        \to 0
            , 
\end{align*}
as $t \to 0^-$.  The claimed equality readily follows.
%
%
%
An application of Lemma~\ref{lem:gradient} yields  $\nabla g(A) = e^A$ as well as that the $\liminf$ is in fact a limit, and it holds
\begin{align}
    \lim_{\tau_1 \to +\infty}
        -\tau_1 \Tr
        \left[ 
            e^{-\frac{U}{\tau_1} + \log \rho} - \rho 
        \right] 
        =
   \langle 
        (\nabla g)(\log \rho)
            , 
        U
   \rangle
        =
    \Tr [ \rho U ]
        .
\end{align}


The very same argument applied on $\cH_2$ ensures that 
\begin{align}
    \lim_{\tau_2 \to +\infty}
        -\tau_2 \Tr
        \left[ 
            e^{-\frac{V}{\tau_2} + \log \sigma} - \sigma 
        \right] 
        =
    \Tr [ \sigma V ]
        .
\end{align}
The latter two equalities prove exactly that 
$  \displaystyle 
\lim_{\tau \to +\infty} \text{D}^{\eps,\tau} (U_\tau,V_\tau) = \dualf (U,V)
$, hence providing the claimed convergence.

\smallskip 
\noindent 
4.  \ 
The convergence of the primal problems follows from the same argument above in the proof of 2. The convergence (and equality) of the dual problems follows by duality, in particular Theorem~\ref{thm:duality_general}.

\smallskip 
\noindent 
5. \ 
We take advantage of Remark~\ref{rem:symmetry} and once again of \eqref{eq:dualf>dualfu}, to deduce that
\begin{align}
\label{eq:lb_final}
    \dualf 
    \big( 
        \hat U^{*,\tau}, \hat V^{*,\tau}
    \big)
        =
    \dualf 
    \big( 
        U^{*,\tau},  V^{*,\tau}
    \big)
        \geq  
    \text{D}^{\eps,\tau} 
    \big( 
        U^{*,\tau},  V^{*,\tau}
    \big)
        = 
    \mathfrak{D}^{\eps,\tau} 
        =
    \mathfrak{F}^{\eps,\tau}
        ,
\end{align}
where in the last step we used the duality result proved in Theorem~\ref{thm:duality_general}. We use the convergence proved in 3.\ to infer
\begin{align}
    \liminf_{\tau \to +\infty}
        \dualf 
        \big( 
            \hat U^{*,\tau}, \hat V^{*,\tau}
        \big)
            \geq 
    \liminf_{\tau \to +\infty}
        \mathfrak{F}^{\eps,\tau}   
            =
   \prim
            > 
    - \infty 
        .
\end{align}
This implies that the sequence 
$ \{  
\big( 
    \hat U^{*,\tau}, \hat V^{*,\tau}
\big) 
\}_\tau  $
belongs to a superlevel set of $\dualf $, and thus by construction satisfies the hypothesis of Proposition~\ref{prop:superlevel_bdd}. It follows that it is a bounded sequence of operators, and therefore it is pre-compact. We conclude the proof of 4.\ by showing that any limit point is necessarily a maximizer $(U^*,V^*)$ of $\dualf$ with $\lambda_1(V^*)=0$. Assume indeed that 
$\big( 
    \hat U^{*,\tau}, \hat V^{*,\tau}
\big) \to (U^*,V^*)$ as $\tau \to +\infty$. The fact that $\lambda_1(V^{*,\tau})=0$ for every $\tau>0$ ensures that $\lambda_1(V^*)=0$. Finally, by \eqref{eq:lb_final} together with the continuity of $\dualf $ we deduce that
\begin{align}
    \dualf (U^*,V^*)
        =
    \lim_{\tau \to +\infty} 
        \dualf 
        \big( 
            \hat U^{*,\tau}, \hat V^{*,\tau}
        \big)   
            \geq 
    \prim
        = 
    \dual 
        ,
\end{align}
where at last we used the duality result Theorem~\ref{thm:duality_general}. We thus conclude that $(U^*,V^*)$ is indeed a maximizer with $\lambda_1(V^*)=0$.
\end{proof}


\subsection{Convergence of \texorpdfstring{$(C,\psi,\ep,
\tau)-$transforms}{}}
Assume that $\psi$ is strictly convex and $C^1$.
In this short section, we take advantage of the convergence results proved in Theorem~\ref{thm:convergence_unb_to_b} to show that also the associated $(C,\psi,\eps,\tau)$-transforms converge, as the parameters $\tau \to +\infty$. More precisely, we claim that 
\begin{align}
    \mathscr{F}_1^{(C,\psi,\eps,\tau)}(V_\tau)
        \to 
    \mathscr{F}_1^{(C,\psi,\eps)}(V)
        \qquad 
    \forall \, \, 
        V_\tau \to V 
            .
\end{align}
Recall from the previous section that the following property (stronger than $\Gamma$-convergence) holds: one has that \eqref{eq:dual_strong_convergence}
\begin{align}
    \lim_{\tau \to +\infty}     
        \text{D}^{\eps,\tau} (U_\tau,V_\tau) 
    = 
        \dualf (U,V)
            \, , \qquad 
         \text{for every} \quad  (U_\tau, V_\tau) \to (U,V)
            \, .
\end{align}
In particular, we have the $\Gamma$-convergence of the one-parameter functionals, given by 
\begin{align}
    - \text{D}^{\eps,\tau} (\cdot, V_\tau) 
        \xrightarrow[\tau\to+\infty]{\Gamma}
    -\text{D}^\eps (\cdot,V)
        , 
\end{align}
for every $V_\tau \to V$. Using once again the inequality \eqref{eq:dualf>dualfu} together with the coercivity of the dual functional $\dualf$ (cfr. Proposition~\ref{prop:superlevel_bdd}, see also Remark~\ref{rem:shift}), it readily follows that the sequence of functionals $\{- \text{D}^{\eps,\tau} (\cdot, V_\tau) \}_\tau$ is equi-coercive. This, together with the aforementioned $\Gamma$-convergence, provides the convergence of the associated minimizers, which precisely means 
\begin{align}
    \lim_{\tau \to +\infty} 
        \mathscr{F}_1^{(C,\psi,\eps,\tau)}(V_\tau)
    = 
        \mathscr{F}_1^{(C,\psi,\eps)}(V)
            \qquad \forall \,  V_\tau \to V 
                , 
\end{align}
as claimed.

%
%
%
%

\appendix
\section{Legendre's transform in a non-commutative setting}
\label{appendix:primal_dual}
In this appendix, we study the properties and formula for the Legendre transform of functions obtained by composing the trace operator with (the lifting of) a convex map on the real line. More specifically and throughout all the section, let $\varphi \colon [0,+\infty) \to [m,+\infty)$ be convex, superlinear at infinity, namely $\lim_{t \to +\infty}\varphi(t)/t =+\infty$, and $m > -\infty$. Recall that, given such $\varphi$, its Legendre transform is given by 
\begin{align}
    \psi: \R \to \R 
        \, , \quad 
    \psi(x) := 
    \sup_{t \in \R} 
    \left\{
        x t - \varphi(t) 
    \right\}
        \, , \quad 
    \forall x \in \R 
        ,
\end{align}
where we extend $\varphi$ to $+\infty$ over the negative half-line. 
%

%
\begin{rem}[Monotonicity of the Legendre's transform]
\label{rem:monotonicity_Legendre}
    Under the aforementioned assumptions on $\varphi$, we claim that $\psi$ is monotone non-decreasing and satisfies 
    \begin{align}
        \lim_{x \to -\infty} 
            \psi(x) = 
        \inf_{x\in\R} \psi(x) 
        = - \varphi(0) \in \R 
            \, .
    \end{align}
Indeed, any non-constant real-valued convex function $\psi$ on $\R$ is either 
\begin{itemize}
    \item[(i)] monotone non-decreasing
    \item[(ii)] monotone non-increasing
    \item[(iii)] admits a global minimizer $x^*\in\R$.
\end{itemize}
For (i) the conclusion trivially follows, and we claim that both (ii) and (iii) fail under the assumptions on $\varphi$, since (ii)-(iii) must imply that $\lim\limits_{x\to-\infty}\psi(x) = +\infty$ or/and there exist $x_1<x_2$ such that $\psi(x_1)>\psi(x_2)$. We can exploit this fact to obtain that $\varphi(p)<+\infty$ for some $p <0$, contradicting the original assumption on $\varphi$.

Indeed, let $p_1\in\partial \psi(x_1)$, then $\psi(x)\geq \psi(x_1) + p_1(x-x_1)$. Since $x_1$ is not a minimizer by construction, then $p_1\neq 0$, and in fact $p_1$ must be strictly negative, as 
$$\psi(x_2)\geq\psi(x_1) +p_1(x_2-x_1) > \psi(x_2) + p_1(x_2-x_1).$$
However, then we have $\varphi(p_1) = \psi^*(p_1) = p_1 x_1 - \psi(x_1) < \infty$, contradicting that $\varphi(p)=+\infty$ for all $p<0$.
\fr
\end{rem}

\subsection{Properties of operator liftings}
Given a $d$-dimensional Hilbert space $\mathcal{H}$, the map that sends $f \in C(\mathbb{R})$ into ${\rm Lin}(\rmH(\mathcal{H}),\rmH(\mathcal{H}))$ via spectral calculus is a vector space homomorphism and preserves the natural composition and product operations in $C(\mathbb{R})$ and ${\rm Lin}(\rmH(\mathcal{H}),\rmH(\mathcal{H}))$, namely given $f_1,f_2  \in C(\R)$ and $\lambda \in \mathbb{R}$ we have
\begin{equation}
\label{eq:algebra_liftings}
\begin{aligned}
    &(f_1+f_2)(A) = f_1(A) + f_2(A),\qquad &&(\lambda f_1)(A) = \lambda f_1(A),\\
    & (f_1\cdot f_2)(A) = f_1(A) f_2(A), && (f_1 \circ f_2)(A) =f_1(f_2(A)),
\end{aligned}
\end{equation}
for every $A \in \rmH(\cH)$.

It can be proved (see for instance \cite[Thm.\ 2.10]{carlen2010trace}) that, if $\varphi \colon [0,+\infty) \to \mathbb{R}$ is convex (respectively strictly convex), then $\rmH_\geq(\cH)\ni \Gamma \mapsto \Tr[\varphi(\Gamma)] \in \mathbb{R}$ is convex (respectively strictly convex).
In particular, since real-valued convex functions on $\mathbb{R}^n$ are locally Lipschitz we have that 
\begin{equation}
\label{eq:continuity_tracevarphi}
    \rmH_\geq(\cH)\ni \Gamma \mapsto \Tr[\varphi(\Gamma)] \in \mathbb{R}\qquad\text{ is locally Lipschitz.}
\end{equation}
Additionally, using that $\rmH(\cH) \ni W \mapsto \Tr[W\xi]$ is continuous for every $\xi \in \rmH(\cH)$, we infer  
\begin{align}
\label{eq:dualf_continuous}
    \rmH(\cH_1) \times \rmH(\cH_2) \ni(U,V) \mapsto \dualf (U,V) \in \mathbb{R} 
        \quad 
    \text{is (strictly) concave and continuous}
        \quad 
\end{align}
whenever $\psi$ is (strictly) convex. 

Note that $\Gamma \mapsto \varphi(\Gamma)$ is not necessarily convex (e.g. $\varphi(x) = x^3$, see \cite[Example 2.5]{carlen2010trace}).


\begin{rem}[Primal is bigger than dual]
\label{rem:primal_bigger_dual}
    By \eqref{eq:Fenchel-Young} with $W:= (U \oplus V - C)/\eps$, we obtain   
    \begin{equation}
        \langle U \oplus V - C, \Gamma \rangle 
            \le  
        \eps \Tr\left[ \varphi(\Gamma)\right] 
            + 
        \eps \Tr
        \left[ \psi
            \Big(
                \frac{U \oplus V - C}{\eps }
            \Big)
        \right]
            .
    \end{equation}
    Therefore, for every $\Gamma \mapsto (\rho,\sigma)$, we have that 
    \begin{align}
        \primal(\Gamma) 
            &= 
        \Tr[C \Gamma] + \eps \Tr[\varphi(\Gamma)]
            \geq 
         \langle U \oplus V , \Gamma \rangle 
            - 
        \eps \Tr
        \left[ \psi 
            \Big(
                \frac{U \oplus V - C}{\eps }
            \Big)
        \right] 
    \\
            &=
        \Tr[U \rho] + \Tr[V \sigma]  
            - 
        \eps \Tr
        \left[ \psi 
            \Big(
                \frac{U \oplus V - C}{\eps }
            \Big)
        \right] 
            =
        \dualf (U,V)
            , 
    \end{align}
    for every $(U,V) \in \rmH(\cH_1) \times \rmH(\cH_2)$. The same holds true for the unbalanced case, using the fact that the Legendre transform of the relative entropy is the exponential function, cfr. \eqref{eq:def_Legendre_relative_entropy}.
\fr
\end{rem}

\begin{lem}[Existence of the maximum]
Let $\varphi$ satisfy the standing assumption of the section and let us denote the lifting of $\varphi$ to the space of Hermitian matrices with a slight abuse of notation by $\varphi$. For every $W \in \rmH(\cH)$
    \begin{equation}
        \sup_{\Gamma \in \rmH_\geq(\cH)} \left( \langle W, \Gamma \rangle - \Tr\left[ \varphi(\Gamma)\right] \right)
            = 
        \max_{\Gamma \in \rmH_\geq(\cH)} \left( \langle W, \Gamma \rangle - \Tr\left[ \varphi(\Gamma)\right] \right)
    \end{equation}
\end{lem}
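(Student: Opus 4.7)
The plan is to prove this via the direct method of the calculus of variations, relying on continuity of the functional $F(\Gamma) := \langle W, \Gamma\rangle - \Tr[\varphi(\Gamma)]$ on $\rmH_\geq(\cH)$ together with coercivity coming from the superlinearity assumption on $\varphi$. Continuity is immediate: $\Gamma \mapsto \langle W, \Gamma\rangle$ is linear hence continuous, and $\Gamma \mapsto \Tr[\varphi(\Gamma)]$ is (locally Lipschitz, hence) continuous by \eqref{eq:continuity_tracevarphi}. Also $\sup F$ is finite: $F(0) = - d\,\varphi(0) \in \R$ gives $\sup F > -\infty$, while the pointwise upper bound from the next paragraph gives $\sup F < +\infty$.

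The main step is to show that any maximizing sequence is bounded in operator norm. Let $\{\Gamma_n\} \subset \rmH_\geq(\cH)$ satisfy $F(\Gamma_n) \to \sup F$, so in particular $F(\Gamma_n) \geq M$ for some $M \in \R$ and all $n$. Write the spectral decomposition $\Gamma_n = \sum_{i=1}^{d} \lambda_i^{(n)} |\xi_i^{(n)}\rangle\langle \xi_i^{(n)}|$ with $\lambda_i^{(n)} \geq 0$. Using the eigenbasis of $\Gamma_n$, one has the exact identity $\Tr[\varphi(\Gamma_n)] = \sum_i \varphi(\lambda_i^{(n)})$, and the upper bound $\Tr[W\Gamma_n] = \sum_i \lambda_i^{(n)} \langle \xi_i^{(n)} | W | \xi_i^{(n)}\rangle \leq \|W\|_\infty \sum_i \lambda_i^{(n)}$. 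Setting $g(\lambda) := \|W\|_\infty \lambda - \varphi(\lambda)$, superlinearity of $\varphi$ together with $\varphi \geq m$ ensures that $g$ is bounded above by some constant $G < +\infty$ and that $g(\lambda) \to -\infty$ as $\lambda \to +\infty$. Hence
\begin{equation}
    M \leq F(\Gamma_n) \leq \sum_{i=1}^{d} g(\lambda_i^{(n)}) \leq g\big(\lambda_d^{(n)}\big) + (d-1)\,G,
\end{equation}
where $\lambda_d^{(n)}$ is the largest eigenvalue. Consequently $g(\lambda_d^{(n)}) \geq M - (d-1)G$, and by superlinearity of $\varphi$ (as in Remark~\ref{rem:superlinear_prop1}) this upper bound on $g$ forces $\lambda_d^{(n)} \leq R$ for some $R$ independent of $n$. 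Thus $\|\Gamma_n\|_\infty \leq R$ for all $n$.

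Having bounded $\{\Gamma_n\}$ in the finite-dimensional space $\rmH(\cH)$, one extracts a convergent subsequence $\Gamma_{n_k} \to \Gamma^*$. Since $\rmH_\geq(\cH)$ is closed, $\Gamma^* \in \rmH_\geq(\cH)$, and by continuity of $F$ one concludes $F(\Gamma^*) = \lim_k F(\Gamma_{n_k}) = \sup F$, i.e.\ the supremum is attained. The main (and really only nontrivial) obstacle is the coercivity step, and the spectral Peierls-style argument above handles it cleanly using only superlinearity and boundedness below of $\varphi$.
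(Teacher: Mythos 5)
Your proof is correct and follows essentially the same route as the paper: both establish coercivity by diagonalizing $\Gamma$, bounding $\langle W,\Gamma\rangle$ by $\|W\|_\infty$ times the eigenvalues and using superlinearity plus the lower bound $\varphi\ge m$ to control the largest eigenvalue, and then conclude by continuity (via \eqref{eq:continuity_tracevarphi}) and compactness in the finite-dimensional space. The only cosmetic difference is that the paper argues by contradiction on superlevel sets while you bound a maximizing sequence directly; the substance of the estimate is identical.
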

\begin{proof}
Given $M \in \mathbb{R}$, we claim that the superlevel set $\mathscr{S}_M:=\{ \Gamma :\, \langle W,\Gamma \rangle -\Tr\left[ \varphi(\Gamma)\right] \ge M\}$ is bounded. By contradiction, there exists a sequence $\{ \Gamma^n \}_n \subset \mathscr{S}_M$ such that, denoting $\{\lambda_i^n\}_i$ the eigenvalues of $\Gamma^n$ with $\lambda_1^n \le \lambda_2^n \le \dots \le \lambda_d^n$, we have that $\lambda_d^n \to +\infty$. We estimate
\begin{equation}
\begin{aligned}
    \langle W, \Gamma^n \rangle -\Tr\left[ \varphi(\Gamma^n) \right] 
    &\le d\| W \| \lambda_d^n -\sum_{i=1}^d \varphi(\lambda_i^n)= d\| W \| \lambda_d^n - \varphi(\lambda_d^n)-\sum_{i=1}^{d-1} \varphi(\lambda_i^n)
\\
    &\le d\| W \| \lambda_d^n - \varphi(\lambda_d^n)-\min \{ \varphi \}(d-1),
\end{aligned}
\end{equation}
which converges to $-\infty$, using the property that $\varphi$ is superlinear at infinity.
Thus, the claim is proved. Moreover, notice that the map $\Gamma \mapsto \langle W, \Gamma \rangle - \Tr\left[ \varphi(\Gamma)\right]$, is continuous by \eqref{eq:continuity_tracevarphi}, hence the maximum exists.
\end{proof}
\begin{proposition}[Legendre's transform for functional calculus]
\label{prop:computation_legendre_transform}
    Let $\varphi$ satisfy the standing assumption of the section. We have that \[\Psi(W):=\sup_{\Gamma \in \rmH_\geq(\cH)} \left( \langle W, \Gamma \rangle - \Tr\left[ \varphi(\Gamma)\right] \right)\] satisfies $\Psi(W)=\Tr[\psi'(W)]$, where $\psi'$ is the lifting of $\psi=\varphi^*$ to the space of Hermitian matrices. 
\end{proposition}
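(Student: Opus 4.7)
The plan is to establish the equality $\Psi(W) = \Tr[\psi(W)]$ (where $\psi$ is the spectral lifting of $\varphi^*$) by proving two matching inequalities. The key tools are the scalar Fenchel--Young inequality, Peierl's inequality for convex functions on Hermitian matrices, and a direct construction realizing the supremum.

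\textbf{Upper bound $\Psi(W) \leq \Tr[\psi(W)]$.} Let $\Gamma \in \rmH_\geq(\cH)$ be arbitrary, and write its spectral decomposition $\Gamma = \sum_j \lambda_j |\xi_j\rangle\langle\xi_j|$ with $\lambda_j \geq 0$. Using the basis $\{\xi_j\}$ and the algebra of liftings \eqref{eq:algebra_liftings},
\begin{align}
    \langle W, \Gamma \rangle - \Tr[\varphi(\Gamma)]
        = \sum_j \Big( \lambda_j \langle \xi_j, W \xi_j \rangle - \varphi(\lambda_j) \Big)
        \leq \sum_j \psi\big(\langle \xi_j, W \xi_j \rangle\big),
\end{align}
where the inequality is the scalar Fenchel--Young inequality applied termwise (recall that $\varphi$ is extended by $+\infty$ on $(-\infty,0)$, so the sup in $\psi$ ranges effectively over $\lambda \geq 0$). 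Now $\psi$ is convex as a Legendre transform, so Peierl's inequality \cite[Thm.~2.9]{carlen2010trace} yields $\sum_j \psi(\langle \xi_j, W \xi_j \rangle) \leq \Tr[\psi(W)]$. Taking the supremum over $\Gamma$ gives $\Psi(W) \leq \Tr[\psi(W)]$.

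\textbf{Lower bound $\Psi(W) \geq \Tr[\psi(W)]$.} Write the spectral decomposition $W = \sum_i \mu_i |\eta_i\rangle\langle\eta_i|$. For each $i$, the scalar sup $\psi(\mu_i) = \sup_{\lambda \geq 0} \{\mu_i \lambda - \varphi(\lambda)\}$ is attained at some $\lambda_i^* \geq 0$: indeed, the function $\lambda \mapsto \mu_i \lambda - \varphi(\lambda)$ is continuous on $[0,+\infty)$ and tends to $-\infty$ as $\lambda \to +\infty$ by the superlinearity of $\varphi$, so the maximum exists in $[0,+\infty)$. Define the candidate
\begin{align}
    \Gamma^* := \sum_i \lambda_i^* |\eta_i\rangle\langle\eta_i| \in \rmH_\geq(\cH).
\end{align}
Since $\Gamma^*$ and $W$ share the eigenbasis, one computes directly
\begin{align}
    \langle W, \Gamma^* \rangle - \Tr[\varphi(\Gamma^*)]
        = \sum_i \big(\mu_i \lambda_i^* - \varphi(\lambda_i^*)\big)
        = \sum_i \psi(\mu_i)
        = \Tr[\psi(W)],
\end{align}
so the supremum defining $\Psi(W)$ is at least $\Tr[\psi(W)]$.

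Combining the two bounds yields the statement. The only point requiring slight care is verifying that the Legendre transform $\psi$, which a priori is only defined on $\R$ and known to be convex and superlinear by \eqref{eq:assumptions_psi_intro}, can be legitimately lifted by spectral calculus to $\rmH(\cH)$ and plugged into Peierl's inequality; this is justified by $\psi$ being continuous (a convex real-valued function on $\R$), hence the lifting is well-defined via the spectral decomposition. I do not expect a substantive obstacle beyond this bookkeeping: the matching of the two bounds is made possible precisely by the fact that in the lower-bound construction, $\Gamma^*$ commutes with $W$, so Peierl's inequality becomes an equality along that eigenbasis.
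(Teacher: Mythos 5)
Your proof is correct, and it reaches the same two bounds as the paper but by a somewhat different route in the upper bound. The paper expands $\langle W,\Gamma\rangle$ in \emph{both} eigenbases simultaneously, writing it as $\sum_{i,j}\Gamma_i W_j |\langle \xi_j|\gamma_i\rangle|^2$ and applying the scalar Fenchel--Young inequality to each pair $(\Gamma_i,W_j)$; the doubly stochastic weights $|\langle\xi_j|\gamma_i\rangle|^2$ then resum to give $\Tr[\varphi(\Gamma)]+\Tr[\psi(W)]$ directly, so the trace-convexity input is reproved by hand inside the computation. You instead expand only in the eigenbasis of $\Gamma$, apply Fenchel--Young to the diagonal entries $\langle\xi_j,W\xi_j\rangle$, and delegate the convexity step to Peierl's inequality for the lift of $\psi$ --- a legitimate and slightly more modular shortcut, since the paper already invokes that inequality (Theorem 2.9 in \cite{carlen2010trace}) elsewhere; the paper's version has the small advantage of being self-contained and of making the equality case visible in the same computation. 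Your lower bound is essentially the paper's: both build a $\Gamma^*$ commuting with $W$, the only difference being that you realize the scalar supremum by an attainment argument (continuity of $\varphi$ on $[0,+\infty)$ plus superlinearity), whereas the paper picks eigenvalues $\Gamma_j\in\partial\psi(W_j)$, which amounts to the same choice; both versions quietly use that $\varphi$ is continuous (equivalently lower semicontinuous) at $0$, so you are at the same level of rigor as the paper. Finally, you correctly read the paper's $\psi'$ in the statement as its (abusively denoted) lifting of $\psi=\varphi^*$, not a derivative.
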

\begin{proof}
    We claim that
        \begin{equation}
        \label{eq:Fenchel-Young}
            \langle W, \Gamma \rangle \le  \Tr\left[ \varphi(\Gamma)\right] + \Tr\left[ \psi'(W)\right].
        \end{equation}
    To do this, let us write $\Gamma = \sum_{i=1}^d \Gamma_i \ket{\gamma_i}  \bra{\gamma_i}$ and $W = \sum_{i=1}^d W_i \ket{\xi_j}\bra{\xi_j}$ and compute
    \begin{equation}
    \label{eq:computation_transform}
    \begin{aligned}
        \langle W, \Gamma \rangle = \Tr\left[ W \Gamma \right] 
        &= \sum_{i} \Gamma_i \bra{\gamma_i}W\ket{\gamma_i} = \sum_i\sum_j \Gamma_i W_j |\langle\xi_j |\gamma_i\rangle|^2
    \\
        &
        \le \sum_i \sum_j (\varphi(\Gamma_i) + \psi(W_j)) |\langle\xi_j |\gamma_i\rangle|^2
    \\
        &= \sum_i  \varphi(\Gamma_i) \left(\sum_j|\langle\xi_j |\gamma_i\rangle|^2\right) +  \sum_j \psi(W_j) \left(\sum_i |\langle\xi_j |\gamma_i\rangle|^2\right) 
    \\
        &=  \sum_i  \varphi(\Gamma_i) +  \sum_j \psi(W_j)  = \Tr\left[\varphi(\Gamma)\right]+ \Tr\left[ \psi'(W) \right].
    \end{aligned}
    \end{equation}
    This gives that $\Psi(W)\le \Tr\left[ \psi'(W) \right]$. We prove that $\Psi(W)\ge \Tr\left[ \psi'(W) \right]$. To do so, we choose a specific $\bar{\Gamma}:=\sum_j \Gamma_j \ket{\xi_j} \bra{\xi_j}$ with $\Gamma_j \in \partial \psi(W_j)$. This in particular shows that the inequality in \eqref{eq:computation_transform} is an equality, proving the claimed inequality.
\end{proof}

Incidentally, during the proof above, we explicitly construct a maximizer. 


The next lemma considers maps induced via functional calculus by convex functions on the real line, and discusses their properties.
\begin{lem}[Gradients of trace functions]
\label{lem:gradient}
    Let $\psi \in C^1(\R)$ a convex function and define the map $\Psi: \rmH(\cH) \to \R$ as $\Psi(A) := \Tr[ \psi (A)]$. Then we have that
    \begin{align}
    \label{eq:gradient}
        \frac{\dd}{\dd t}\Big|_{t=0} \Psi(A+tB)
            = 
        \langle
            \nabla \Psi(A) , B
        \rangle
            =
        \Tr
        \Big[
            \nabla \Psi(A) B 
        \Big]
            \, , \quad \text{where} \quad 
        \nabla \Psi(A) = \psi'(A)
            , \quad 
    \end{align}
    for every $A \in \rmH(\cH)$ and $B \in \rmH(\cH)$. In particular, for every $A \in \rmH(\cH)$ it holds
    \begin{align}
    \label{eq:legendre_equality}
        \Tr[A  \psi'(A)] 
            =
         \Tr[ \psi(A)]
            +
        \Tr[ \psi^* (  \psi'(A))]
            .
    \end{align}
\begin{proof}
The proof of \eqref{eq:gradient} can be found e.g. in  \cite{carlen2010trace}.  The equality in \eqref{eq:legendre_equality} follows directly from  Proposition~\ref{prop:computation_legendre_transform} and the fact that
\begin{align}
        \langle A, \nabla \Psi(A) \rangle
            = 
        \Psi(A) + \Psi^* (\nabla \Psi(A))
            , 
\end{align}
which is a property of real, convex functions and their Legendre transform \cite[Theorem~26.4]{Rockafellar:1970}.
 \end{proof}
\end{lem}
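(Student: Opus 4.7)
The plan is to establish the two formulas in sequence: the gradient identity $\nabla \Psi(A) = \psi'(A)$ first, and then deduce the Legendre--Fenchel equality \eqref{eq:legendre_equality} from it via a pointwise spectral reduction.

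For the gradient identity, my strategy is to exploit the algebra of liftings \eqref{eq:algebra_liftings} together with cyclicity of the trace. I would first reduce to the case of monomials $\psi(x) = x^n$, where direct expansion gives
\begin{align}
\Tr[(A+tB)^n]
    =
\Tr[A^n]
    +
t \sum_{k=0}^{n-1} \Tr[A^{k} B A^{n-1-k}]
    +
O(t^2)
    =
\Tr[A^n] + t \cdot n \Tr[A^{n-1} B] + O(t^2),
\end{align}
where cyclicity collapses the sum; this identifies $\nabla \Psi(A) = n A^{n-1} = \psi'(A)$. Linearity extends the identity to arbitrary polynomials. To reach a general $\psi \in C^1(\R)$, I would approximate $\psi$ by polynomials $\psi_n$ with $\psi_n \to \psi$ and $\psi_n' \to \psi'$ uniformly on a compact neighborhood of the joint spectrum of $\{A + tB : |t| \le t_0\}$, and pass to the limit in the difference quotient using the local Lipschitz control \eqref{eq:continuity_tracevarphi}. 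The main subtlety here is that when $A$ has degenerate eigenvalues, eigenvector-level differentiation via perturbation theory (Hellmann--Feynman) becomes delicate; the polynomial-approximation route sidesteps this, because $\Psi$ depends only on the spectrum of its argument.

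For the Legendre identity, I would use the spectral decomposition $A = \sum_i \lambda_i \ket{\xi_i}\bra{\xi_i}$. Since $A$, $\psi(A)$, $\psi'(A)$, $A\psi'(A)$, and $\psi^*(\psi'(A))$ are simultaneously diagonal in the basis $\{\ket{\xi_i}\}$, every trace reduces to a sum over the spectrum:
\begin{align}
\Tr[A\psi'(A)] = \sum_i \lambda_i \psi'(\lambda_i),
\quad
\Tr[\psi(A)] = \sum_i \psi(\lambda_i),
\quad
\Tr[\psi^*(\psi'(A))] = \sum_i \psi^*(\psi'(\lambda_i)).
\end{align}
At the scalar level, the Fenchel--Young inequality $\lambda p \le \psi(\lambda) + \psi^*(p)$ becomes equality precisely when $p \in \partial \psi(\lambda)$, which holds at $p_i = \psi'(\lambda_i)$ by the $C^1$ hypothesis. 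Summing the resulting equalities $\lambda_i \psi'(\lambda_i) = \psi(\lambda_i) + \psi^*(\psi'(\lambda_i))$ over $i$ yields \eqref{eq:legendre_equality}. An equivalent, coordinate-free route is to invoke Proposition~\ref{prop:computation_legendre_transform} (which identifies $\Psi^*(W) = \Tr[\psi^*(W)]$) together with the Hilbert-space Fenchel--Young equality at the gradient point $\langle A, \nabla \Psi(A) \rangle = \Psi(A) + \Psi^*(\nabla \Psi(A))$, using the first part of the lemma.
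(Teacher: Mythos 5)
Your proof is correct, but it takes a more self-contained route than the paper on both halves. For \eqref{eq:gradient} the paper simply cites \cite{carlen2010trace}, whereas you reprove it from scratch: monomials via cyclicity of the trace, polynomials by linearity, then a $C^1$-approximation argument. That is fine, but note that the limiting step needs slightly more than the local Lipschitz statement \eqref{eq:continuity_tracevarphi} for a fixed function: to interchange $n\to\infty$ with the $t$-derivative you want the difference quotients of $\Psi_n-\Psi$ to be small \emph{uniformly in} $t$, which follows from a quantitative bound such as $\big|\Tr[g(X)]-\Tr[g(Y)]\big|\le \|g'\|_{L^\infty(K)}\sum_i|\lambda_i(X)-\lambda_i(Y)|\le \|g'\|_{L^\infty(K)}\,d\,\|X-Y\|_\infty$ (Weyl's eigenvalue perturbation inequality) applied to $g=\psi_n-\psi$; with that inserted the argument closes. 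For \eqref{eq:legendre_equality} the paper argues coordinate-free, combining Proposition~\ref{prop:computation_legendre_transform} with the real-convex-analysis identity $\langle A,\nabla\Psi(A)\rangle=\Psi(A)+\Psi^*(\nabla\Psi(A))$ from \cite[Theorem~26.4]{Rockafellar:1970}; you instead diagonalize $A$ and sum the scalar Fenchel--Young equalities $\lambda_i\psi'(\lambda_i)=\psi(\lambda_i)+\psi^*(\psi'(\lambda_i))$, which is equally valid (and you correctly use that $\psi^*(\psi'(\lambda_i))$ is automatically finite at the gradient point, so the lifting of $\psi^*$ is well defined on the spectrum of $\psi'(A)$). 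Your spectral route is more elementary and makes the mechanism transparent; the paper's route is shorter because it recycles Proposition~\ref{prop:computation_legendre_transform}, which has already done the diagonalization work, and avoids redoing the matrix Fenchel--Young computation. You even mention the paper's argument as your "equivalent coordinate-free route," so the two are consistent.
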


\subsection*{Acknowledgements}
E.C.\ acknowledges the support of the New Frontiers in Research Fund (NFRFE-2021-00798) and the European Union's Horizon 2020 research and innovation programme (Grant agreement No. 948021). Both A.G.\ and N.M.\ acknowledge the support of the Canada Research Chairs Program, the Natural Sciences and Engineering Research Council of Canada and the New Frontiers in Research Fund (NFRFE-2021-00798).  L.P.\ gratefully acknowledges funding from the Deutsche Forschungsgemeinschaft (DFG, German Research Foundation) under Germany'{s} Excellence Strategy - GZ 2047/1, Projekt-ID 390685813. 
Financial support by the Deutsche Forschungsgemeinschaft (DFG) within the CRC 1060, at University of Bonn project number 211504053, is also gratefully acknowledged. We thank Luca Tamanini for the critical reading of a preliminary version of the manuscript.

\bibliographystyle{abbrv}
\bibliography{biblio.bib}

\end{document}